\PassOptionsToPackage{prologue,dvipsnames,svgnames}{xcolor}
\documentclass[acmsmall,screen,natbib=false,nonacm]{acmart}

\setcopyright{acmlicensed}
\copyrightyear{2018}
\acmYear{2018}
\acmDOI{XXXXXXX.XXXXXXX}

\usepackage[
  datamodel=acmdatamodel,
  style=acmauthoryear,
]{biblatex}

\usepackage{stmaryrd}
\usepackage{listings}
\usepackage{amsmath}
\usepackage{tikz}
\usepackage{tikz-cd}
\tikzset{>={Straight Barb[scale=0.8]}}
\tikzcdset{
  arrow style=tikz
}
\usetikzlibrary{positioning, calc}

\usepackage[inline]{enumitem}

\usepackage{csquotes}

\usepackage{macros}

\usepackage[capitalize]{cleveref}
\begin{document}

%%
%% The "title" command has an optional parameter,
%% allowing the author to define a "short title" to be used in page headers.
\title{Definitional Inversion, Without Normalisation}

%%
%% The "author" command and its associated commands are used to define
%% the authors and their affiliations.
%% Of note is the shared affiliation of the first two authors, and the
%% "authornote" and "authornotemark" commands
%% used to denote shared contribution to the research.

\author{Mario Carneiro}
\email{marioc@chalmers.se}
\orcid{0000-0002-0470-5249}

\author{Thierry Coquand}
\email{Thierry.Coquand@cse.gu.se}
\orcid{0000-0002-5429-5153}
\affiliation{%
  \institution{Chalmers University of Technology}
  \city{Gothenburg}
  \country{Sweden}
}

\author{Adrien Frabetti Mathieu}
\email{adrien.mathieu@ens.psl.eu}
\orcid{0009-0006-1590-8777}
\affiliation{%
  \institution{École Normale Supérieure PSL, IRIF}
  \city{Paris}
  \country{France}
}

\author{Meven Lennon-Bertrand}
\email{meven.bertrand@inria.fr}
\orcid{0000-0002-7079-8826}
\affiliation{%
  \institution{Université Paris Cité, INRIA, CNRS, IRIF}
  \city{Paris}
  \country{France}
}

\author{Paul-André Melliès}
\email{mellies@irif.fr}
\orcid{0000-0001-6180-2275}
\affiliation{%
  \institution{Université Paris Cité, INRIA, CNRS, IRIF}
  \city{Paris}
  \country{France}
}

\author{Stephanie Weirich}
\email{sweirich@seas.upenn.edu}
\orcid{0000-0002-6756-9168}
\affiliation{%
  \institution{University of Pennsylvania}
  \city{Philadelphia}
  \state{Pennsylvania}
  \country{United States}
}

%%
%% By default, the full list of authors will be used in the page
%% headers. Often, this list is too long, and will overlap
%% other information printed in the page headers. This command allows
%% the author to define a more concise list
%% of authors' names for this purpose.
\renewcommand{\shortauthors}{Carneiro, Coquand, Frabetti Mathieu, Lennon-Bertrand, Melliès and
Weirich}

%%
%% The abstract is a short summary of the work to be presented in the
%% article.
\begin{abstract}
  We contribute a new proof technique, based on domain theory, to prove key meta-theoretic
  properties of dependent type systems: definitional inversion properties, \ie injectivity and
  no-confusion of type constructors. This proof technique is independent of normalisation, and
  indeed applies even for the “type-in-type” rule of Martin-Löf's original
  type theory. Our proof is the first to establish injectivity of type constructors for such a
  system in the presence of η laws.
  More generally, the technique is motivated by, and intended for, the metatheory of
  systems such as Idris, Lean, or dependent Haskell, whose underlying type
  theory is known to be non-normalising, as well as projects such as MetaRocq
  or Lean4Lean, where Gödel's second incompleteness theorem means we cannot
  show normalisation of the object logic in itself.
  We showcase the method on a small type theory, then explain how it extends to
  more ambitious extensions.
\end{abstract}

%%
%% The code below is generated by the tool at http://dl.acm.org/ccs.cfm.
%% Please copy and paste the code instead of the example below.
%%
\begin{CCSXML}
\end{CCSXML}

% \ccsdesc[500]{Do Not Use This Code~Generate the Correct Terms for Your Paper}
% \ccsdesc[300]{Do Not Use This Code~Generate the Correct Terms for Your Paper}
% \ccsdesc{Do Not Use This Code~Generate the Correct Terms for Your Paper}
% \ccsdesc[100]{Do Not Use This Code~Generate the Correct Terms for Your Paper}

%%
%% Keywords. The author(s) should pick words that accurately describe
%% the work being presented. Separate the keywords with commas.
\keywords{}

% \received{20 February 2007}
% \received[revised]{12 March 2009}
% \received[accepted]{5 June 2009}

%%
%% This command processes the author and affiliation and title
%% information and builds the first part of the formatted document.
\maketitle

\section{Introduction}
\label{sec:intro}

Dependent type systems are the foundation of a growing ecosystem of
proof assistants (\Agda, \Lean, \Rocq{}…) and programming languages
(\Idris, dependent \Haskell{}…), themselves underlying many critical
software systems and libraries of mathematics. Given their importance,
we need a good understanding of these formalisms, which goes through
\emph{meta-theory}: the study of these systems and mathematical proof of their
properties. Sadly, the current toolbox of meta-theory is lacking: the main
two kinds of approaches, confluence and logical relations,
are severely limited. Confluence cannot deal with important features of modern
dependent type systems, particularly η laws. Traditional logical relations,
because they rely on normalisation, do not apply to non-normalising type
theories, which are plenty in practice (\Idris, dependent \Haskell{}, \Lean{}…).
Also, since normalisation implies logical consistency, one needs to work
in a very powerful ambient logic, which must be in particular stronger than
the object logic. For projects such as \agdac{} \cite{Liesnikov2025},
\leanf{} \cite{Carneiro2024}, or \MetaRocq{} \cite{MetaCoq2025}, which aim
at studying the meta-theory of major proof assistants in themselves, this
is a non-starter.

We improve this situation by proposing a new technique, based on domain
theory, which lifts these limitations: our technique works in a weak ambient
logic, and applies even to non-normalising type theories; yet, it easily
incorporates η laws. By design, this technique does not prove normalisation,
however, it is able to prove definitional inversion principles, such as the
celebrated injectivity of type constructors, which in many aspects is just as
critical, if not more, than normalisation. Consequences are plenty: progress
and preservation \cite{Wright1994}, thus soundness of the language viewed as
programming language \cite{DependentHaskell2017,MetaCoq2025},
uniqueness of types \cite{MetaCoq2025,Carneiro2024}, kernel verification
\cite{LennonBertrand2025}…

\paragraph{Definitional inversions}
What sets dependent type systems
distinctively apart from other programming languages or logics
is \emph{definitional equality} (or \emph{conversion}), written
\(\convop\), a rich equational
theory that is directly incorporated in the type system.
Given its centrality, it is no surprise that the most important meta-theoretic
properties of dependent types involve definitional equality.
A very natural thing to ask is what we can deduce from equations between types.
For simple types it is evident that
\(\nat \nconvop A \to B\), or that \(A \to B \convop A' \to B'\) implies that
\(A \convop A'\) and \(B \convop B'\): type constructors are \emph{disjoint}
(what is also called \emph{no-confusion}) and
\emph{injective}. For dependent type theory, by contrast, this is absolutely
not evident: one might be able to prove \(\nat \convop \nat \to \nat\) through
a long chain of equations which brings in the entire equational theory of the
term language. In extensional type theory (ETT), for instance, this equation
holds whenever the context is inconsistent.

Yet, these properties -- which we collectively refer to as
\emph{definitional inversion principles} -- are extremely useful to establish
other good properties of the type system. Intuitively, this is because
definitional inversions imply that, despite the language's equations, matching a
type against a pattern, \eg \(\_ \to \_\), is a valid operation: by
no-confusion, a type cannot match non-overlapping patterns (\eg \(\_ \to \_\)
and \(\nat\)), and by injectivity the result of a match is well-defined
(\eg we can talk without ambiguity of “the domain” of a function type). This is
a key ingredient in many meta-theoretic proofs, including the above.
In contrast, ETT, which lacks definitional inversions, has a poorly-behaved
operational semantics and does not have unique types. Moreover,
the design of semi-decision procedures for typing in ETT which do better than naïve
enumeration of derivations is an open problem \cite{PetkovicKomel2021}.

\paragraph{Definitional inversion via confluence}

The problem of definitional inversion is as old as dependent type theory itself.
\Posscite{MartinLoef1971} technical report, which can be considered
the birth paper of dependent type theory, deals with it,
since definitional inversion is crucial for proving subject reduction,
a required property of any reasonable type system.
In order to prove subject reduction, the report's key lemma is
definitional inversion, in the form of injectivity of the dependent product
operation. For this lemma, Martin-Löf proved confluence
(the Church-Rosser property) adapting \posscite{Tait1967}
method of parallel reduction.%
\footnote{This proof was historically important: it was reproduced
  in \posscite{Barendregt1971} thesis, which was one motivation for
  \posscite{Plotkin2004} formulation of operational semantics.  Also, through the work on
  the Calculus of Construction \cite{Coquand1988}, this has been the main tool
  to justify subject reduction for pure type systems \cite{Barendregt1992},
all to the current mechanised meta-theory of \Rocq \cite{MetaCoq2025}.}

In 1971, Martin-Löf did not define typed definitional equality as a judgement: conversion was
β-conversion, defined at the level of raw terms.  Such a presentation, however, makes it is
difficult to give a denotational semantics to the language. Later papers
\cite{MartinLoef1984,Cardelli1986,MeyerReinhold1986} hence present conversion as a (typed)
judgement instead. The problem with this approach is that typing and conversion are mutually
defined, and confluence cannot be used to prove the key injectivity lemma \footnote{This is why
  \eg the report \cite{Reinhold1989Typechecking}, which proves some results announced in
\textcite{MeyerReinhold1986}, instead uses the presentation with conversion on raw terms.}

For a type system with only β-conversion, it is possible to show equivalence between the typed
and untyped presentations purely syntactically \cite{Adams2006,Siles2012}. With η-conversion,
however, the presentation as typed judgement is clear, but its equivalence with an untyped
presentation is unclear, or even what this untyped presentation should be!  The reason is that
η plays poorly with confluence. Counter-examples have been known for long. A first, due to
Nederpelt (see \eg \textcite{GeuversWerner1994}) considers the raw term
$\lambda_{x:A}(\lambda_{y:B} y)~x$, which reduces both to $\lambda_{y:B}y$ by η-reduction and
to $\lambda_{x:A}x$ by β-reduction. Reduction cannot be confluent if $A$ and $B$ are not
convertible. Worse, Klop showed in his PhD thesis \cite[Chap. III]{Klop1980} that η-reduction
for pairs \(t \convop \pair{\fst{t}}{\snd{t}}\), is not confluent, even barring type
annotations.  There is one presentation of Pure Type Systems with η in
\textcite{GeuversWerner1994}, but it is proved in \textcite{Barthe2006} that the system is not
confluent.  In fact, we do not think that this presentation is equivalent to the more
conceptual presentation as judgement that we use in the present paper.%
\footnote{For a similar reason, the question of existence of a fixed point combinator, which is
  solved positively in \textcite{Barthe2006}, seems actually to be \emph{open} for the version
we present here.}

Recent works have managed to negotiate these issues slightly:
\textcite{FelicissimoWinterhalter2026} handle a very limited form of η,
and \textcite{Liu2026} manage to prove confluence for untyped
reduction in a calculus with η for pairs, but only for strongly normalising terms.
But, by and large, mixing confluence on the one hand and η laws and typed conversion
remains difficult.

\paragraph{Definitional inversion via normalisation}
Yet, for both theoretical and practical reasons,
having η-conversion is actually crucial for developing mathematics in dependent
type systems. To name only two examples, η for records is needed to deal with
hierarchies of structures in \Lean \cite{Wieser2023}, and η for functions is
used to derive function extensionality from univalence \cite[Sec.
4.9]{UniFoundationsProgram2013}. They are thus included in \Agda, \Lean and \Rocq.
Even the venerable \textsf{Automath}
\cite{deBruijn1970Automath} insisted from the start on having η-conversion.

To deal with rich equational theories including η conversion%
\footnote{And others, for instance definitional functoriality laws \cite{Subtyping2024}.}
and with definitional equality as a typed judgement,
one can reach for a radically different, much more semantic approach to meta-theory,
via normalisation proofs and logical relations.
A large array of these techniques exist, from relatively elementary
versions \cite{HarperPfenning2005,Abel2013} to more abstract
ones \cite{Sterling2021a,Bocquet2023}.
The shape of such an argument goes back to \textcite{Gandy1956}: to show the axiom
of extensionality admissible, he defines a relation by recursion on the type,
lifts it pointwise to environments, and proves that every well-typed term maps
related environments to related values --~that is, that every definable term is
extensional.

While powerful, by construction they only apply to
normalising systems, and need an ambient theory which is logically stronger than
the object one, since normalisation implies logical consistency.
Such proofs typically rely on strong logical principles
such as induction-recursion \cite{Coquand1991Algorithm,Abel2017},
or impredicativity \cite{Gratzer2019,Jang2025}, although with
enough care a few extra universes are sufficient \cite{Adjedj2024}.
This situation is not satisfactory: one would expect definitional inversions,
which are purely syntactic properties to be provable in a weak ambient theory.
This is particularly critical for projects aiming at studying the meta-theory
of proof assistants in themselves: \agdac{} \cite{Liesnikov2025},
\leanf{} \cite{Carneiro2024}, or \MetaRocq{} \cite{MetaCoq2025}.
Similarly, non-normalising type theories are out of scope of these techniques,
which rules out many dependently-typed programming languages
\cite{Cardelli1986,Augustsson1998,DependentHaskell2017,IdrisTypeType},
which, for convenience and simplicity, use type-in-type, but also
other interesting non-normalising features, such as proof-irrelevant accessibility
as used in \Lean \cite{Felicissimo2026}.

In fact, we do not need to reach for such complicated type theories:
an extremely simple system that already crystallises the issues at hand is \MLTTo,
a variation on Martin-Löf's original
type theory \cite{MartinLoef1971} with a type \(\P\) of dependent functions,
β and η equations, and a universe with \(\univ \ty \univ\).%
\footnote{This is not exactly Martin-Löf's 1971 system, which lacked
η, and could thus be treated using confluence.}
The extensionality equation defeats confluence-based approaches,
while type-in-type makes the system non-normalising \cite{Girard1972},
defeating normalisation-based proofs. Despite its simplicity and importance,
little was known of \MLTTo{} before the present work: for instance,
subject reduction was open.

\paragraph{Definitional inversion via domain theory}
The present work addresses this blind spot: we develop a new proof technique
that is able to simultaneously cope with η laws, without the limitations of
traditional logical relations. The key technical ingredient is to use domain theory,
based on ideas first explored by \textcite{Coquand2018a}.
We first demonstrate the technique on \MLTTo{}, contributing the first
proof of definitional inversion principles for it.%
\footnote{This in particular solves the open problem raised at the end of
\textcite{Coquand2018a}.}
We then show that it
robustly extends to richer type theories. Indeed, we expect our technique
to extend to very complex type theories, all the way to \leanf and \MetaRocq.

The originality of our approach is to use \emph{domain theory}
as a way to avoid the strong logical principles necessary for traditional
logical relation proofs. We use a \emph{finitary projection} model, where a
type is interpreted by a finitary projection of some domain.
This idea goes back at least to \posscite{Scott1981} lecture on domain theory,
and it was studied systematically by \textcite{McCracken1982} for a semantics of
polymorphism.\footnote{Another
  semantics, where a type is interpreted as a functor on the category of domains with
  embedding projection pairs, was developed in \textcite{CoquandGunterWinskel1989}.
  This was inspired by \posscite{Girard1986SystemF} work, which itself motivated the creation
of linear logic.}  A semantics inspired by this finitary projection model can be found in
\textcite{AmadioBruceLongo1986,Cardelli1986}. The latter presents conversion as a judgement,
but does not discuss the problem of subject reduction for β-reduction.

The key aspect of domain theory we exploit is to provide a method
to derive \emph{operational} results about programs by \emph{denotational}
reasoning, via an \emph{adequacy theorem}.\footnote{See \eg \textcite{Plotkin1977};
  if, for instance, the semantics of a program is
$0$, adequacy tells us the program will terminate and compute $0$.}
Our main technical result is a refined version of adequacy, showing
that if two types have for semantics $\cuniv$, then they
both reduce (in a type-correct way) to $\univ$, or if the semantics is a product type,
then both reduce to a product with convertible components. Perhaps surprisingly,
we do not need strong logical principles, such as induction-recursion,
to show this: basic induction on natural numbers is enough.

\paragraph{Contributions}

We provide a new proof technique, based on domain theory, to show definitional inversion
principles in dependent type theories. This applies to non-normalising type theories,
and supports η-conversion.  In \Cref{sec:domain-model}, we explain the technique on a very
simple type theory, \MLTTo{}, with only function types and a universe
with “type-in-type”.
In \Cref{sec:extensions} we show how the technique scales
to richer type theories, covering many theoretical challenges of the type theory of
modern proof assistants such as \Lean or \Rocq. We treat:
\begin{itemize}
  \item dependent sums \(\Sigma\) and a unit type \(\unit\), both with their η laws
    (surjective pairing);
  \item  a fixed-point combinator \(\fix\), which breaks
    normalisation even without type-in-type;
  \item natural numbers with pattern-matching, which combined with \(\fix\)
    give recursion;
  \item a proof-relevant identity type \(\Id\) with transport \(\tr\);
  \item a universe of propositions \(\prop\) with definitional proof
    irrelevance.
    % \item intersection and singleton types, arising naturally from the domain
    %   structure;
    % \item \Lean-style \(\prop\)-valued accessibility,
    %   which breaks normalisation \cite{Felicissimo2026}.
\end{itemize}
We have mechanised the proofs (thrice!) in \Agda, \Lean and \Rocq, and comment on
formalisation-specific aspects in \Cref{sec:formalisation}. The exact
rundown on what has been mechanised is summarised in the following table,
where proof assistant names link to the (anonymised) formalisations.

\begingroup
\setlength{\tabcolsep}{14pt}
\renewcommand{\arraystretch}{1.5}
\begin{tabular}{r|cccccc}
  & \(\univ \ty \univ \) & \(\P\) & \(\Sigma, 1\) & \(\fix\) & \(\nat\) + match & \(\Id\) \\
  \hline
  \href{https://anonymous.4open.science/r/domain-semantics-4ED6/README.md}{\Agda} &
  \cmark & \cmark & & \cmark & \cmark & \cmark \\
  \href{https://anonymous.4open.science/r/domain-semantics-lean-BEE7/README.md}{\Lean} & \cmark & \cmark & \cmark & \cmark & \cmark & transport only \\
  \href{https://anonymous.4open.science/r/domain-semantics-rocq-B18B/README.md}{\Rocq} &
  \cmark & \cmark & & & \(0\) and \(\suc\) only &
\end{tabular}
\endgroup

\section{A Domain Model of Type Theory}
\label{sec:domain-model}

An intuition from domain theory is that a program can be represented by its observable
behaviour: given a chosen set of predicates, each representing a finite
observation, a term is described by the set of predicates
it satisfies. Although the set of observations is
infinite (for functions, each input-output pair yields different
observations), a program can nonetheless be reconstructed from approximations
by finite sets of finite observations, embodied by \emph{compact} elements of the domain.
% This naturally incorporates non-termination:
% a diverging program is one that does not yield any interesting observation.

This approach gives us a new way to construct logical relations for dependent type
theory. In the
standard logical relations, one defines a predicate on terms by recursion on the
structure of types.  With dependent types, this is not directly possible since types themselves
compute, which means we cannot directly follow their structure.
Traditionally, the recursion instead goes on the types' normal forms \cite{Abel2013}.%
\footnote{Naturally leading to definitions by
induction-recursion \cite{Dybjer2000}.}  But this only works if types have a
normal form! Domain theory gives us a different solution: to each observation on
a type, which provides a partial view of its shape, we associate a
logical predicate. The type's predicate itself is obtained by aggregating these
partial predicates. This harmoniously accommodates non-terminating types,
instead of ruling them out. Although this is not
enough to prove normalisation, it suffices to derive the definitional inversion
properties we seek.
More precisely, the plan of this section is as follows:
\begin{itemize}
  \item we build in \cref{sec:fix-solution} a domain \(\dom\) as a solution to a
    domain equation;
  \item we introduce in \cref{sec:semantic-typing} a semantic notion of typing
    between elements of \(\dom\), which we reformulate in
    \cref{sec:finitary-projectors,sec:explicit-finitary-projectors} in the domain
    theoretic language of
    finitary projectors;
  \item we build a model of \MLTTo{} in \(\dom\), interpreting types as
    finitary projectors, in \cref{sec:domain-model-model};
  \item we establish, in \cref{sec:logical-relation,sec:lr-properties},
    the adequacy of this model, using a logical relation;
  \item finally, in \cref{sec:lr-consequences} we derive injectivity of the \(\P\)
    constructor of \MLTTo{} from adequacy.
\end{itemize}

Although we use domain theory to explain the insight behind our proofs, it is
not strictly necessary: there is a purely elementary form of the proof, which
does not explicitly need any concepts from domain theory. Thus, this section
can be followed even without strong domain theoretic background.
We will in fact work at two levels at the same time: in the traditional language
of domain theory, as well as directly at the level of compact elements, which,
as explained above, capture the intuitive notion of finite observations. The
two approaches relate via an equivalence of categories
\begin{center}
  \begin{tikzcd}[column sep=small]
    \DomL \ar[rr, "\compacts{-}", bend left] & \eqv & \CUSL \ar[ll, "\ideals{-}", bend left]
  \end{tikzcd}
\end{center}
between the category \(\DomL\) of Scott domains and embedding-projection pairs as
morphisms, and
the category \(\CUSL\) of conditional upper semi-lattices and embeddings, see
\cref{def:scott-domain,def:cusl}.

Following a well-known recipe in domain theory, we will build an increasing sequence
of domains,
related by embedding-projection pairs
\[ \cdots \triangleleft \dom[n-1] \triangleleft \dom[n] \triangleleft \dom[n+1] \triangleleft
  \cdots
\]
approximating the solution \(\dom\) of the recursive equation. The sets of compact elements
\(\comp[n] = \compacts{\dom[n]}\) form an increasing sequence of \(\CUSL\), related
by inclusion
embeddings
\[ \cdots \subseteq \comp[n-1] \subseteq \comp[n] \subseteq \comp[n+1] \subseteq \cdots.
\]
In fact, the \(\comp[n]\) have an elementary inductive description,
which we use in our mechanisations.

\subsection{Computing a fixed point domain}
\label{sec:fix-solution}

We will build the model from a solution \(\dom\) to the following equation,
which represents \MLTTo{}'s constructors using functions to
capture binders. %
% , similarly to higher-order abstract syntax \cite{??},
The sum is the coalesced sum, which identifies the least elements,
and \([\dom \Rightarrow \dom]\) is the domain of continuous functions from
\(\dom\) to itself.
\begin{align*}
  \tag{dom} \label{eq:dom-eq}
  \dom &\quad \iso \quad \underbrace{\; [\dom \Rightarrow \dom] \; }_{\lambda} \; +
  \;\underbrace{\;\lift{(\dom \times [\dom \Rightarrow \dom])}\;}_{\Pi}\; +
  \underbrace{\quad\lift{1}\quad}_{\univ}
  % \\ &= [\dom \Rightarrow \dom] + \P \dom {[\dom \Rightarrow \dom]} + \univ
\end{align*}
Note that certain constructors are lifted, using \(\lift{}\), which adds
a least element to a domain, while the function space (encoding
λ-abstraction) is not. This means that \(\dom\) looks as follows:

\begin{figure}[h]
  \begin{tikzpicture}[
      bullet/.style = {
        minimum size = 0pt,
        inner sep = .9pt,
        outer sep = 1.2pt,
        fill,
        circle
      },
      label/.style = {
        minimum size = 0pt,
        inner sep = 0pt
      },
      node distance = 15pt and 2pt,
    ]
    \node[bullet] (bot) {};
    \node[right=of bot, label, yshift=-3pt] {$\bot$};
    \draw (bot) -- (-2.3, 3) -- (-.5, 3) -- cycle;
    \node at (-1.15,2.7) {$[\dom \Rightarrow \dom]$};
    \node[bullet] (pi) [above right=15pt and 5pt of bot] {};
    \node[right=of pi, label] {$\Pi\,\bot\,\bot$};
    \node[bullet] (univ) [above right=15pt and 45pt of bot] {};
    \node[right=of univ, label] {$\univ$};
    \draw (bot) -- (pi);
    \draw (bot) -- (univ);
    \draw (pi) -- (-.3, 3) -- (2.2, 3) -- cycle;
    \node at (.8, 2.7) {$\P \dom\ {[\dom \Rightarrow \dom]}$};
  \end{tikzpicture}
  \caption{Representation of \(\dom\)}
  \label{fig:d}
  \Description{Coalesced sum merging the \(\bot\) element of \([\dom \Rightarrow \dom]\) with lifted
  parts of the rest of the sum.}
\end{figure}
The insight is that to validate η for functions we should
identify \(\bot\) and \(\l x. \bot\): both correspond to a function which diverges
on all arguments, thus no observation can distinguish them. On the
contrary, \(\Pi\,\bot\,\bot\) contains observable information: that the
type is a \(\Pi\)-type.

Standard results in domain theory \cite{Smyth1982, Scott1981} allow us to construct
a solution to
such an equation by interpreting it in terms of domains and continuous functions. To explain
how we work with \(\dom\), let us start with some domain theory primer.

\begin{definition}[Compatibility]
  Two elements \(u,v\) of a poset \(P\) are \emph{compatible},
  written \(u \compat_P v\), if there is \(w \in P\) such that \(u \le w\) and
  \(v \le w\).
\end{definition}

\begin{definition}[Directed and bounded subsets]
  A \emph{directed subset} of a poset \(P\) is a non-empty set \(X\) such that for any
  \(u,v \in X\), there exists \(w \in X\) such that \(u \le w\) and \(v \le w\).
  A subset \(X\) of a poset is \emph{bounded} if there is a \(v\) above all \(u \in X\).
\end{definition}

Viewing \(u \le v\) as meaning that “\(v\) contains more information than \(u\)”,
two elements are compatible when their common information can be subsumed
by a third one, and a directed set \(X\) is one where any two elements are
subsumed by a third element itself in \(X\).

\begin{definition}[Bounded and directed complete partial order]
  A partial order \(P\) is said to be \emph{directed complete} if every directed set
  \(X\) has a supremum, and \emph{bounded complete} if every bounded
  set \(X\) has a supremum. We write \(\bigvee X\) for the supremum of a set \(X\),
  and \(u \vee v\) for \(\bigvee \{u,v\}\).
\end{definition}

\begin{definition}[Compact elements]
  Let \(P\) be a directed complete poset.  An element \(u \in P\) is \emph{compact}
  if, for any directed \(X \subseteq P\), if \(u \le \bigvee X\), then there is some
  \(v \in X\) such that \(u \le v\).
\end{definition}

% A compact element \(u\) represents an undivisible amount of information, a primitive,
% basic observation: for any compound observation \(\bigvee X\) that is above \(u\),
% there is already an element \(v\) in \(X\) which contains the whole information
% that \(u\) contains.
A compact element represents a finite amount of information.
By contrast, a typical non-compact element is a function
\(\mathbb{N} \to \mathbb{N}\), because one cannot test every input/output pair
in finite time.

% \begin{definition}[Basis of an element]
%   Let \(P\) be a bounded and directed complete poset. For \(u \in P\), the \emph{basis} of $u$,
%   written $\basis{u}$, is the (bounded) set of compact elements below $u$.
% \end{definition}

\begin{definition}[Scott domains] \label[definition]{def:scott-domain}
  A \emph{Scott domain} $D$ is a bounded and directed complete poset $D$, such that
  every element \(u \in D\) is limit of its \emph{basis}, the
  (bounded) set of compact elements below it: \[
    u = \bigvee \{v \in \compacts{D} \mid v \le u \}
  \]
\end{definition}

In a Scott domain, we think of the order as the information order:
\(u \le v\) if \(u\) contains less information
(is less defined/diverges more/passes less observable tests).
The main property says that elements are completely characterised by the finite
pieces of information (\ie compact elements) they contain, or, viewed otherwise,
by the finite tests they pass.

Indeed, a Scott domain can in fact be reconstructed just from its set of compact elements.
\begin{definition}[Conditional upper semi-lattice \cite{StoltenberHansen1994}]
  \label[definition]{def:cusl}
  A \emph{conditional upper semi-lattice} (\(\CUSL\)) is a poset \(P\) with a least
  element \(\bot\) and
  where every two compatible elements \(u \compat_P v\) have a supremum \(u \vee v\).%
  \footnotemark
\end{definition}
\footnotetext{This is weaker than the common definition of a semi-lattice, which mandates that
\emph{every} two elements have a supremum.}

\begin{definition}[Ideals]
  Let \(P \in \CUSL\), an \emph{ideal} \(I\) of \(P\) is a subset of \(P\) that is downwards
  closed (if \(u \le v \in I\) then \(u \in I\)) and closed under finite suprema: if
  \(u,v \in I\), then \(u \vee v\) exists and \(u \vee v \in I\).  We write \(\ideals{P}\) for
  the poset of these ideals, ordered by inclusion.

  For any \(u \in P\) the downset of \(u\), \(\{v \in P \mid v \le u\}\),
  is an ideal, called the \emph{principal ideal} generated by \(u\).
  This embeds \(P\) in \(\ideals{P}\).
\end{definition}
% \footnotetext{The notation is here to suggest that \(\ideals{P}\) works as a
% completion of \(P\)
%   with ideal elements, where every element \(a \in P\) can be seen as a principal ideal
% \((a) \in \ideals{P}\).\mlb{TODO: change this.}}

\begin{proposition}
  The ideal completion of a poset \(P\) is a Scott domain if, and only if,
  $P$ is a $\CUSL$.
\end{proposition}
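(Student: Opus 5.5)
The plan is to prove both directions by identifying the compact elements of \(\ideals{P}\) with the principal ideals. Throughout, an ideal is understood as a \emph{non-empty} subset that is downwards closed and directed. When \(P\) is a \(\CUSL\) this coincides with the definition above (closure under binary suprema), since inside a downwards closed set any two elements with a common upper bound there are compatible, so their supremum exists and lies below that bound. Two general facts will be used in both directions.
\begin{enumerate*}[label=(\roman*)]
\item Every ideal \(I\) is the directed union of the principal ideals \(\{v \mid v \le u\}\) for \(u \in I\).
\item Directed suprema in \(\ideals{P}\) are unions: a directed union of ideals is again downwards closed and directed.
\end{enumerate*}

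Next I will show that the compact elements of \(\ideals{P}\) are exactly the principal ideals. If the principal ideal of \(u\) is contained in a directed union of ideals, then \(u\) lies in one of them, and that ideal then contains the whole principal ideal. Conversely, let \(I\) be compact. Writing \(I\) as the directed union from (i), compactness puts \(I\) below some principal ideal of \(u\) with \(u \in I\). Hence \(I\) equals that principal ideal.

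\emph{If direction.} Assume \(P\) is a \(\CUSL\).
\begin{itemize}
\item Directed completeness follows from (ii).
\item For bounded completeness, let \(X\) be a family of ideals contained in some ideal \(J\). All elements of \(\bigcup X\) lie in \(J\) and are therefore pairwise compatible. So the downward closure of the finite suprema of elements of \(\bigcup X\) exists; for the empty finite supremum I take \(\bot\), which handles \(X=\emptyset\). This set is an ideal, and it is clearly the least ideal containing every member of \(X\).
\item Algebraicity, i.e.\ that every ideal is the supremum of the compacts below it, is exactly (i) combined with the identification of compact elements.
\end{itemize}

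\emph{Only if direction.} Assume \(\ideals{P}\) is a Scott domain.
\begin{itemize}
\item The empty family is bounded, so it has a supremum, which is the least ideal \(I_0\). The least element of any dcpo is compact, so \(I_0\) is the principal ideal of some \(b\). Every ideal is non-empty, hence contains some \(v\), and therefore contains \(I_0\); in particular \(b \le v\) for every \(v \in P\). So \(b\) is the least element \(\bot\) of \(P\).
\item Let \(u \compat_P v\), witnessed by an upper bound \(w\). The principal ideals of \(u\) and \(v\) are then bounded by the principal ideal of \(w\), so they have a supremum \(K\) in \(\ideals{P}\).
\item The main point is that \(K\) is principal. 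I will use that a finite supremum of compact elements is compact: if \(K \subseteq \bigcup Y\) with \(Y\) directed, then \(u\) and \(v\) lie in members of \(Y\), and by directedness both lie in a single member, which then contains \(K\). So \(K\) is the principal ideal of some \(m\).
\item Finally, \(m\) is the supremum \(u \vee v\) in \(P\). It is an upper bound of \(u\) and \(v\) because \(u, v \in K\). For any other upper bound \(z\), the principal ideal of \(z\) bounds both principal ideals, so it contains \(K\), giving \(m \le z\).
\end{itemize}

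I expect the only delicate points to be bookkeeping: using non-empty directed ideals for a general \(P\), and using the empty family to produce \(\bot\). The core step is the compactness-of-finite-suprema argument in the only-if direction.
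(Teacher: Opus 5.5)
The paper gives no proof of this proposition. It quotes it as a standard fact from the domain-theory literature, the same source as the equivalence theorem that follows, so there is no proof in the paper to compare your route with. Your argument is the standard one. The compacts of \(\ideals{P}\) are the principal ideals, and bounded suprema are generated ideals. Conversely, \(\bot\) comes from the supremum of the empty family, and \(u \vee v\) from the compactness of the join of two compacts. It is correct, up to two points you should tighten.

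First, in the only-if direction, ``the empty family is bounded'' presupposes that \(\ideals{P}\) is non-empty, i.e.\ that \(P \neq \emptyset\). Under the paper's literal definitions the empty poset is vacuously a Scott domain, since it has no bounded and no directed subsets. Moreover \(\ideals{\emptyset} = \emptyset\), while \(\emptyset\) has no least element. So you must either state the usual convention that Scott domains are pointed (as in Stoltenberg-Hansen et al., where a cpo has a least element) or assume \(P\) non-empty.

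Second, in the bounded-completeness step, pairwise compatibility alone does not give finite joins in a \(\CUSL\): three pairwise compatible elements need not have a common upper bound. The correct justification, which you already have, is that \(J\) is closed under binary joins. By induction, every finite join of elements of \(\bigcup X\) then exists and lies in \(J\).

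Finally, your choice of non-empty directed lower sets as ideals for a general \(P\) is not just convenient but necessary. Read verbatim for an arbitrary poset, the paper's definition (closure under binary suprema, which must exist) would make the only-if direction false. Take \(P = \{\bot, a, b, c, d\}\) with \(\bot\) least, \(a, b < c\), \(a, b < d\), and no other strict relations. That definition yields only the ideals \(\{\bot\}\), \(\{\bot, a\}\), \(\{\bot, b\}\), which form a Scott domain, even though \(a \vee b\) does not exist in \(P\).
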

% \mc{TODO review propositions to clarify literature result vs formalized}

This means we can just as well work with a Scott domain or its \(\CUSL\) of
compact elements, since each can be reconstructed from the other.
This change of points of view can in fact be extended to an equivalence of categories.

\begin{definition}[Embedding]
  Let $C$ and $C'$ be two \(\CUSL\), a map \(f : C \to C'\) is an \emph{embedding} if:
  \begin{itemize}
    \item $f(\bot) = \bot$
    \item for $x, y \in C$, if $x \compat_C y$, then $f(x) \compat_{C'} f(y)$ and $f(x \vee
      y) = f(x) \vee f(y)$
    \item for $x, y \in C$, if $f(x) \leq f(y)$, then $x \leq y$;
    \item for $x, y \in C$, if $f(x) \compat_{C'} f(y)$, then $x \compat_C y$.
  \end{itemize}
\end{definition}
% \mlb{Either define both ep-pairs and embeddings (probably better) or neither?}

\begin{theorem}[\cite{StoltenberHansen1994}]
  The category $\DomL$ of Scott domains (and embedding-projection pairs as morphisms), and the
  category $\CUSL$ of conditional upper semi-lattices (and embeddings as morphisms) are
  equivalent. The map \(\DomL \to \CUSL\) maps a domain to its poset of compact elements, while
  the map \(\CUSL \to \DomL\) takes a conditional upper semi-lattice to the domain
  of its ideals.
\end{theorem}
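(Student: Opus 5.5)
We prove the equivalence by exhibiting the two functors, checking they are well defined on objects and morphisms, and then building natural isomorphisms \(\ideals{\compacts{D}} \iso D\) and \(\compacts{\ideals{P}} \iso P\).

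\textbf{Objects.} For a Scott domain \(D\), we show \(\compacts{D}\) is a \(\CUSL\). First, \(\bot\) is compact: the empty set is bounded, so it has a supremum, which is the least element. Next, let \(u,v\) be compact and compatible in \(\compacts{D}\). By bounded completeness \(u \vee v\) exists in \(D\), and it is compact, since a directed set above \(u \vee v\) contains an element above \(u\) and one above \(v\), hence a common one above both. Thus \(u \vee v\) is also the supremum in \(\compacts{D}\). In the other direction, the earlier proposition already gives that \(\ideals{P}\) is a Scott domain for a \(\CUSL\) \(P\). We will use the standard facts that directed suprema of ideals are unions, and that the compact elements of \(\ideals{P}\) are exactly the principal ideals.

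\textbf{Units.} Define \(\ideals{\compacts{D}} \to D\) by \(I \mapsto \bigvee I\). This is well defined because an ideal is directed, being closed under finite joins. Its inverse is \(u \mapsto \{v \in \compacts{D} \mid v \le u\}\). The basis of \(u\) is an ideal, since it is downward closed and bounded by \(u\), so joins of its members exist and are compact by the object step. The two maps are mutually inverse for the following reasons.
\begin{itemize}
  \item \(\bigvee\) of the basis of \(u\) is \(u\), by the definition of a Scott domain.
  \item The basis of \(\bigvee I\) is \(I\): a compact \(v \le \bigvee I\) lies below some member of the directed set \(I\), hence lies in \(I\) by downward closure.
\end{itemize}
Both maps are monotone, so this is an order isomorphism. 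Symmetrically, \(P \to \compacts{\ideals{P}}\), \(u \mapsto {\downarrow} u\), is an order isomorphism, by the characterisation of the compacts of \(\ideals{P}\) as principal ideals.

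\textbf{Morphisms.} An embedding-projection pair \(e : D \to D'\), \(p : D' \to D\) satisfies \(p \circ e = \mathrm{id}\) and \(e \circ p \le \mathrm{id}\), with both maps continuous.
\begin{itemize}
  \item \(e\) preserves compacts. If \(e(u) \le \bigvee X\) with \(X\) directed, then \(u = p(e(u)) \le \bigvee p(X)\). By compactness of \(u\), some \(x \in X\) has \(u \le p(x)\), so \(e(u) \le e(p(x)) \le x\).
  \item The restriction of \(e\) to \(\compacts{D}\) is an embedding. It preserves \(\bot\) and existing binary joins because it is a left adjoint. It reflects order via \(p\). It reflects compatibility because \(e(x), e(y) \le w\) gives \(x, y \le p(w)\).
\end{itemize}
Conversely, given an embedding \(f : P \to P'\), set
\[
\hat e(I) = \{\, v \mid \exists u \in I,\ v \le f(u) \,\},
\]
which is an ideal because \(f\) preserves joins of compatible elements. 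Set
\[
\hat p(J) = \{\, u \mid f(u) \in J \,\},
\]
which is an ideal because \(f\) reflects compatibility, so \(f(u), f(u') \in J\) forces \(u \compat u'\) and then \(f(u \vee u') = f(u) \vee f(u') \in J\). We then check the following.
\begin{itemize}
  \item \(\hat p \hat e = \mathrm{id}\), using order reflection.
  \item \(\hat e \hat p \le \mathrm{id}\).
  \item Both maps are continuous, because they commute with directed unions.
\end{itemize}
Functoriality is then a direct check. Naturality of the units amounts to the observation that \(\hat e({\downarrow} u) = {\downarrow} f(u)\) and \(e(\bigvee I) = \bigvee e(I)\).

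\textbf{Main obstacle.} I expect the morphism part to be the hardest. The issue is that ep-pairs are determined by the embedding alone, because the projection is its right adjoint. So the step that makes the morphism correspondence bijective, and the functor full and faithful, is showing that a continuous embedding is determined by its restriction to compacts, and that every \(\CUSL\) embedding arises this way.
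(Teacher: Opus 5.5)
The paper gives no proof of this theorem; it only cites Stoltenberg-Hansen et al. Your argument is the standard one from that source, and it is correct.

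One small tightening concerns compatibility in \(\compacts{D}\), which means having a \emph{compact} upper bound. In your compatibility-reflection step, \(p(w)\) need not be compact. For instance, embed \(\omega+1\) into \(\omega+2\): the new top element is compact, but it projects onto the non-compact \(\omega\). Instead, invoke your object-step argument, which in fact shows that \(x \vee y\) is compact whenever \(x, y\) are compact and bounded in \(D\). The same remark is what actually justifies that the basis of \(u\) is an ideal, since two compacts below \(u\) are a priori only bounded in \(D\).

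The ``main obstacle'' you anticipate is already discharged by what you wrote. Two functors with natural isomorphisms in both directions form an equivalence outright, so full faithfulness comes for free. Your naturality identities do the remaining work:
\begin{itemize}
\item \(e(\bigvee I) = \bigvee e(I)\) says exactly that an embedding is determined by its compacts;
\item \(\hat e({\downarrow}u) = {\downarrow}f(u)\) says exactly that every \(\CUSL\) embedding arises by restriction.
\end{itemize}
For the counit, also note that the projection components commute. This follows from \(e \dashv p\), which gives \(\{u \in \compacts{D} \mid e(u) \le y\} = \{u \in \compacts{D} \mid u \le p(y)\}\).
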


Moreover, it is well-known that the category of Scott domains and continuous functions is
cartesian closed.  This means that every pair of Scott domains \(D\) and \(E\) induces a Scott
domain \([D \Rightarrow E]\) of continuous maps.
Given \(X, Y \in \CUSL\), we would thus like to describe
the compact elements of \([\ideals{X} \Rightarrow \ideals{Y}]\).

\begin{definition}[Step function]
  A \emph{step function} is a monotone function \(f : X \rightarrow Y\) such that
  \[
    f(u) = \bigvee \{ v_i \mid u_i \leq u \}
  \]
  for a finite set of pairs (which we suggestively write using \(\mapsto\))
  \[
    \{u_1 \mapsto v_1, \ldots, u_n \mapsto v_n\} \in \mathcal{P}_\mathsf{fin}(X \times Y).
  \]
  Such a finite set is called a \emph{presentation} of the step function \(f\).
\end{definition}

The presentation of a step function is not unique! For instance, we can always add
a pair \((u \mapsto \bot)\) to a presentation without changing the underlying
function. Conversely, not all sets of pairs represent a valid
function: a necessary and sufficient condition for a set
\(\{u_1 \mapsto v_1, \ldots, u_n \mapsto v_n\}\) to represent a function is that
\(u_i \compat_{X} u_j\) implies \(v_i \compat_{Y} v_j\), since
this ensures that the supremum \(\bigvee \{ v_i \mid u_i \leq u \}\) is always well
defined.

\begin{proposition}[Step functions are the compact functions]
  \label{prop:step-compact}
  The set \(X \Rightarrow Y\) of step functions, ordered pointwise, is a \(\CUSL\).
  Moreover, for every two Scott domains \(D\) and \(E\), we have
  \[
    \compacts{[D \Rightarrow E]} \quad \cong \quad \compacts{D} \Rightarrow \compacts{E}
  \]
\end{proposition}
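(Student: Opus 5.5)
We prove the two claims in turn: first that step functions form a \(\CUSL\), then that compact continuous maps are exactly the step functions on compact elements.

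\emph{Step functions form a \(\CUSL\).} The least element is the empty presentation, which denotes the constant function \(\bot\). For the order, the key observation is that a step function \(g\) lies below a monotone \(f\) if and only if \(v_i \le f(u_i)\) for every pair \((u_i \mapsto v_i)\) in a presentation of \(g\). One direction evaluates at \(u_i\). For the other, given \(u\), every \(v_i\) with \(u_i \le u\) satisfies \(v_i \le f(u_i) \le f(u)\) by monotonicity, so their supremum lies below \(f(u)\).

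Now let two step functions \(f,g\) be compatible, bounded by a step function \(h\). We take the union of their presentations. This union is a valid presentation: if \(u_i \compat u_j\) with witness \(w\), then both \(v_i\) and \(v_j\) lie below \(h(w)\), so they are compatible. The resulting function is exactly \(u \mapsto f(u) \vee g(u)\), and this is the pointwise supremum, hence the supremum in the poset of step functions.

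\emph{Extension to continuous maps.} Given a step function \(f\) on \(X=\compacts{D}\), \(Y=\compacts{E}\), we extend it to \(\hat f : D \to E\) by
\[
  \hat f(x) = \bigvee \{ v_i \mid u_i \le x \}.
\]
The supremum exists because the relevant \(u_i\) are all bounded by \(x\), hence pairwise compatible, and then the validity condition makes the \(v_i\) pairwise compatible. To see that \(\hat f\) is continuous, fix a directed \(X' \subseteq D\) and note that there are only finitely many \(i\). For each \(i\) with \(u_i \le \bigvee X'\), compactness of \(u_i\) gives an element of \(X'\) above it. Directedness then gives a single element of \(X'\) above all these \(u_i\) at once, which is what continuity needs.

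\emph{The compact continuous maps are exactly the \(\hat f\).} Compactness of \(\hat f\) in \([D \Rightarrow E]\) follows from the order characterisation above. If \(\hat f \le \bigvee_k g_k\) for a directed family \((g_k)\), then each \(v_i\) lies below \(\bigvee_k g_k(u_i)\); since \(v_i\) is compact, it lies below some \(g_k(u_i)\). There are finitely many \(i\), so directedness provides one \(g_k\) that works for all of them.

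Conversely, every continuous \(g\) is the directed supremum of the step functions below it. Indeed, for each compact \(u \le x\) and compact \(v \le g(u)\), the single step \(\{u \mapsto v\}\) lies below \(g\). Since \(x\) is the limit of its basis and \(g\) is continuous, \(g(x)\) is the join of these \(v\). The family of step functions below \(g\) is directed, because finite unions of compatible presentations stay below \(g\). So if \(g\) is compact, it lies below one of these step functions, and since that step function is also below \(g\), \(g\) equals it. Finally, the assignment \(f \mapsto \hat f\) is an order isomorphism, again by the order characterisation, which establishes \(\compacts{[D \Rightarrow E]} \cong \compacts{D} \Rightarrow \compacts{E}\).

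\emph{Main obstacle.} The delicate part is the bookkeeping around presentations, which are not unique. Every argument above should therefore go through the order characterisation (\(g \le f\) iff \(v_i \le f(u_i)\) for all pairs), which does not depend on the chosen presentation. The other point needing care is checking well-definedness of the union presentation when forming suprema.
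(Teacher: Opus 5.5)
Your overall route is the standard one; the paper states this proposition without proof and defers to the domain-theory literature. You characterise the order through a presentation, take unions of presentations for joins, extend a step function to \(\hat f\) on \(D\), and get compactness of \(\hat f\) from finiteness plus directedness. You then write every continuous map as the directed join of the step functions below it. Most of these steps are right, but one fails as written: the existence of \(\hat f(x) = \bigvee\{v_i \mid u_i \le x\}\).

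You only show that the relevant \(v_i\) are pairwise compatible. In a \(\CUSL\) or Scott domain, a finite pairwise compatible set need not be bounded. Take \(\bot\) below three atoms \(a,b,c\), and three maximal elements \(m_{ab}, m_{bc}, m_{ac}\), each above exactly the two indicated atoms. Every compatible pair has a join, yet \(\{a,b,c\}\) has no upper bound. So on \(\{\bot < x\}\), the presentation \(\{x\mapsto a, x\mapsto b, x\mapsto c\}\) satisfies the pairwise condition but defines no function. The paper's remark that the pairwise condition is necessary and sufficient has the same defect. It is only correct for coherent codomains (where finite pairwise compatible sets are bounded), as the \(\comp[n]\) happen to be, not in the generality of the proposition.

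The repair is short. The finitely many \(u_i \le x\) are bounded by \(x\), so their join \(u\) exists in \(D\) by bounded completeness, and \(u\) is compact as a finite join of compacts. Since \(u \le x\), the index sets \(\{i \mid u_i \le x\}\) and \(\{i \mid u_i \le u\}\) coincide, hence \(\hat f(x) = f(u)\), which exists because \(f\) is a step function. Compactness of finite joins of compacts also shows that a join computed in \(\compacts{E}\) is the join in \(E\). Hence \(\hat f\) restricts to \(f\) on compact elements, which your isomorphism step and the presentation-independence of \(\hat f\) use without saying so.

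Likewise, whenever you form a union of presentations, bound the whole relevant set \(\{v_i \mid u_i \le u\}\) at once rather than pairwise. In the first part, bound it by \(h(u)\); your observation that the result is \(u \mapsto f(u) \vee g(u)\) already does this. In the directedness step, bound it by \(g(u)\). There \(g(u)\) need not be compact, so you again need bounded completeness of \(E\) and compactness of finite joins to land back in \(\compacts{E}\). With these changes the proof is complete.
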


Conversely, this means that for every pair of \(\CUSL\) \(X\) and \(Y\), we have
\[
  \ideals{X \Rightarrow Y} \quad \cong \quad [\ideals{X} \Rightarrow \ideals{Y}]
\]

\begin{proposition}
  The \(\CUSL\) of these step functions forms a functor
  \[ - \Rightarrow -\ :\ \CUSL \times \CUSL \longrightarrow \CUSL
  \]
\end{proposition}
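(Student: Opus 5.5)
We need to show that \(- \Rightarrow -\) is a functor \(\CUSL \times \CUSL \to \CUSL\), with embeddings as morphisms. Since the morphisms are embeddings, the functor is covariant in both arguments. This is possible because embeddings play the role of embedding-projection pairs, so the contravariance in the domain is absorbed by the projection. Given embeddings \(f : X \to X'\) and \(g : Y \to Y'\), I define \(f \Rightarrow g\) on presentations by
\[
  \{u_1 \mapsto v_1, \ldots, u_n \mapsto v_n\} \ \longmapsto\ \{f(u_1) \mapsto g(v_1), \ldots, f(u_n) \mapsto g(v_n)\}.
\]

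\textbf{Well-definedness.} Two things need checking.
\begin{enumerate*}[label=(\roman*)]
  \item The image is a valid presentation. If \(f(u_i) \compat f(u_j)\), then \(u_i \compat u_j\) by the fourth embedding axiom. Hence \(v_i \compat v_j\), and so \(g(v_i) \compat g(v_j)\) by the second axiom.
  \item The image step function does not depend on the chosen presentation.
\end{enumerate*}
For (ii) I use the domain-theoretic view. The embedding \(f\) extends to an embedding-projection pair \((e_f, p_f)\) between \(\ideals{X}\) and \(\ideals{X'}\), and similarly for \(g\). I then claim that the step function of the translated presentation equals \(e_g \circ h \circ p_f\), where \(h\) is the original step function.

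This is the main obstacle. It requires computing \(p_f\) on principal ideals, namely \(p_f(\downarrow w) = \{u \mid f(u) \le w\}\), and checking that
\[
  \bigvee \{ g(v_i) \mid f(u_i) \le w \} \;=\; e_g\Bigl(\bigvee \{ v_i \mid u_i \le p_f(w) \}\Bigr).
\]
This identity holds because \(g\) preserves the relevant finite joins, which are joins of compatible elements, so the second axiom applies; it also preserves \(\bot\) for the empty join. Once the identity is established, \(f \Rightarrow g\) is expressed intrinsically through \(h\), so the choice of presentation is irrelevant.

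\textbf{Embedding axioms.} I would deduce these from the standard fact that \(h \mapsto e_g \circ h \circ p_f\) is the embedding part of an embedding-projection pair \([\ideals{X} \Rightarrow \ideals{Y}] \to [\ideals{X'} \Rightarrow \ideals{Y'}]\). Its projection is \(k \mapsto p_g \circ k \circ e_f\), and one checks \(p_g e_g h p_f e_f = h\) and \(e_g p_g k e_f p_f \le k\). By \cref{prop:step-compact} together with the equivalence \(\DomL \simeq \CUSL\), restricting this embedding to compact elements yields a \(\CUSL\) embedding, and that restriction is exactly \(f \Rightarrow g\).

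For a purely elementary argument, the axioms can instead be checked directly on presentations:
\begin{itemize}
  \item \(\bot\), the empty presentation, is preserved.
  \item Joins of compatible step functions correspond to unions of presentations, and these are preserved.
  \item Order reflection holds: if \(e_g h p_f \le e_g h' p_f\), evaluate both at \(f(u)\). Since \(p_f(\downarrow f(u)) = \downarrow u\), this gives \(g(h(u)) \le g(h'(u))\), hence \(h(u) \le h'(u)\).
  \item Compatibility reflection follows the same way: an upper bound \(k\) of the two images projects, via \(p_g \circ k \circ e_f\), to an upper bound of \(h\) and \(h'\).
\end{itemize}

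\textbf{Functor laws.} Identities are immediate, since \(e_{\mathrm{id}} = p_{\mathrm{id}} = \mathrm{id}\). Composition follows from \(e_{f' f} = e_{f'} e_f\) and \(p_{f' f} = p_f p_{f'}\). Equivalently, it is visible directly on presentations, where \(f \Rightarrow g\) simply maps each pair \(u \mapsto v\) to \(f(u) \mapsto g(v)\).
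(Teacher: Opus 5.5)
The paper states this proposition without proof. It is presented as the $\CUSL$-side counterpart, through the equivalence $\DomL\simeq\CUSL$ and \cref{prop:step-compact}, of the usual function-space functor on embedding--projection pairs. Your main argument makes this explicit and is correct. You identify $f\Rightarrow g$, acting on presentations pair by pair, with the restriction to compact elements of $h\mapsto e_g\circ h\circ p_f$. Your computation in (ii) gives the right intrinsic description: the image step function at $w$ is $g$ applied to the largest value of $h$ on the ideal $\{u\mid f(u)\le w\}$.

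There are two local slips, and neither affects that main line.

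\emph{Step (i).} You use the pairwise criterion for valid presentations quoted in the paper. In an arbitrary $\CUSL$ this criterion is not sufficient, because pairwise compatible elements need not be jointly bounded. Your computation in (ii) already gives validity in the correct form. The $u_i$ with $f(u_i)\le w$ lie in an ideal of $X$, so they are bounded. Hence the corresponding $v_i$ are bounded too, and $g$ preserves their finite join.

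\emph{Compatibility reflection in the elementary check.} Here $p_g\circ k\circ e_f$ need not be a step function. Take the inclusion of chains $g\colon\mathbb{N}\hookrightarrow\mathbb{N}\cup\{\top\}$ and $k$ constant at $\top$. Then $p_g$ sends $\downarrow\top$ to the non-principal ideal $\mathbb{N}$, so the resulting bound is not compact. What the bound does give is that $h(u)$ and $h'(u)$ lie in a common ideal for every $u$. So the union of the two presentations is valid and presents a step function above both. Equivalently, $h\vee h'$ taken in $[\ideals{X}\Rightarrow\ideals{Y}]$ is compact. Your embedding--projection argument needs no such repair, since the embedding of such a pair reflects compatibility of compact elements.
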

Note that, unlike the usual function space,
this functor is covariant \emph{in both arguments}! It moreover preserves directed
colimits, which will play an important role in the construction below.

Our definition is arguably simpler than that of \textcite{StoltenberHansen1994}:
we define compact functions directly as step functions on compact elements,
rather than by a quotient.

This point of view of compact elements provides a way to solve equations such as
\cref{eq:dom-eq} \cite{Larsen91, StoltenberHansen1994}, by working directly with
\(\CUSL\).  One
first constructs the least fixed point in \(\CUSL\) of the functor (once again,
the sum is coalesced)
\begin{equation}
  F(X) \coloneq (X \Rightarrow X) + \lift{(X \times (X \Rightarrow X))} + \lift{1}
  \tag{func} \label{eq:func}
\end{equation}
as the colimit \(\comp \coloneq \bigcup_{n \in \mathbb{N}} \comp[n]\), with
\(\comp[0] \coloneq \{\bot\}\) and \(\comp[n+1] \coloneq F(\comp[n])\), and then defines
\(\dom \coloneq \ideals{\comp}\), which implies that \(\comp \cong \compacts{\dom}\).%
\footnote{In
  our case, \(\comp[n]\) is finite, and therefore \(\dom[n] \coloneq \ideals{\comp[n]}\) is
  isomorphic to \(\comp[n]\) (every ideal is principal).  However, when extending
  the type theory
  (for instance with an infinite hierarchy of universes), \(\comp[n]\) might not necessarily be
  finite, and then \(\dom[n]\) is no longer isomorphic to \(\comp[n]\).  Thus, we will not rely
on this property in the forthcoming proofs.}
Because \(F\) is covariant and continuous, \(\comp\) defined this way is a solution
to \cref{eq:func}. Combined with \cref{prop:step-compact}
(and the similar fact that the equivalence preserves
\(\lift{}\)), this implies that \(\dom\) is the solution to \cref{eq:dom-eq} we are after,
although we will mostly work directly with \(\comp\).

Unravelling what this construction concretely does, the elements of \(\comp[n+1]\)
are as follows, with \(\cuniv\), \(\cabs\) and \(\cpi\) corresponding
to the three components of the sum in \cref{eq:func}:
\begin{itemize}
  \item \(\bot\)
  \item \(\cuniv\)
  \item \(\cabs(f)\) for \(f\) a step function in \(\comp[n] \Rightarrow \comp[n]\)
  \item \(\cpi(a,b)\) for \(a \in \comp[n]\) and \(b \in \comp[n] \Rightarrow \comp[n]\)
  \item under the constraint that \(\cabs(\bot) = \bot\)
\end{itemize}
Thus \(\comp\) is essentially a quotient inductive type, with the quotient identifying
\(\cabs(\bot)\) with \(\bot\), and all different representations of the same
step function. And in fact one can forget most of the rest of this section and
simply work with this inductively defined \(\comp\).

\begin{proposition}
  The inclusion \(\comp[n] \subseteq \comp[n+1]\) is an embedding.
\end{proposition}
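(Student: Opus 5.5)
We argue by induction on \(n\), proving that the inclusion \(\iota_n : \comp[n] \hookrightarrow \comp[n+1]\) is an embedding. We simultaneously make the statement precise: in \(\comp[n+1] = F(\comp[n])\), each element of \(\comp[n]\) is read via its constructor. So \(\cabs(f)\) with \(f \in \comp[n-1] \Rightarrow \comp[n-1]\) is read as \(\cabs(F'(f))\), where \(F'\) is the functorial action of \(- \Rightarrow -\) along \(\iota_{n-1}\), and similarly for \(\cpi(a,b)\). Thus \(\iota_n = F(\iota_{n-1})\), and the base case \(\iota_0 : \{\bot\} \to \comp[1]\) is trivially an embedding: it preserves \(\bot\), and the remaining three conditions only involve \(\bot\).

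For the inductive step, it suffices to show that \(F\) sends embeddings to embeddings. This is checked constructor by constructor. For lifting, products and the coalesced sum it is routine: two elements of the sum are compatible iff one is \(\bot\) or both use the same summand and are compatible there. Their join is computed componentwise. Order and compatibility are therefore reflected as soon as they are reflected componentwise. The only care needed is with \(\cabs(\bot) = \bot\): the action on \(\Rightarrow\) must send the \(\bot\) step function to \(\bot\), which it does since it preserves \(\bot\).

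The main obstacle is the function-space case: if \(e : X \to Y\) is an embedding, then \(e \Rightarrow e : (X \Rightarrow X) \to (Y \Rightarrow Y)\), defined on presentations by \(\{u_i \mapsto v_i\} \mapsto \{e(u_i) \mapsto e(v_i)\}\), is an embedding. We proceed in four steps.
\begin{itemize}
\item \textbf{Well-definedness.} The result is a valid presentation, because \(e\) reflects compatibility on the \(u_i\) and preserves it on the \(v_i\). Independence from the chosen presentation follows from the order characterisation below.
\item \textbf{Order.} For step functions, \(f \le g\) iff for each pair \(u_i \mapsto v_i\) of \(f\) we have \(v_i \le \bigvee\{v'_j \mid u'_j \le u_i\}\), where the \(u'_j \mapsto v'_j\) are the pairs of \(g\). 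Since \(e\) preserves and reflects \(\le\) and preserves the relevant binary joins of compatible elements, this condition transfers in both directions, so the order is both preserved and reflected.
\item \textbf{Compatibility.} Two step functions are compatible iff the union of their presentations is a valid presentation, and their join is then given by that union. Validity of the union is again preserved and reflected, because \(e\) preserves and reflects compatibility.
\item \textbf{Joins.} Since joins are given by unions of presentations and \(e \Rightarrow e\) acts pairwise, it preserves joins.
\end{itemize}
The delicate point is that joins in \(Y\) of images \(e(v_j)\) coincide with images of joins in \(X\), i.e.\ \(e(\bigvee v_j) = \bigvee e(v_j)\). This is exactly the join-preservation clause applied to compatible families, extended to finite families by induction.
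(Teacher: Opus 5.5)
Your proof is correct and is essentially the argument the paper relies on. The paper states the proposition without proof: it rests on the claim that \(- \Rightarrow -\), and hence \(F\), is a functor on \(\CUSL\) with embeddings, so iterating \(F\) on the trivial embedding \(\comp[0] \to \comp[1]\) gives embeddings at every stage, and your presentation-level check of the function-space case (validity, independence of presentation via the order characterisation, compatibility and joins via unions of presentations) is exactly what that functoriality claim needs.
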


% Elements of \(\comp\) too are of one of the four forms above, with the compact functions in
% \(\comp^{\comp}\).

Finally, as we said \(\dom\) is obtained as the set \(\ideals{\comp}\) of ideals of \(\comp\),
and all operations can be lifted to \(\dom\), for instance \(\duniv \coloneq \{\bot,
\cuniv\}\) and,
if \(f \in [\dom \Rightarrow \dom] \iso \ideals{\comp \Rightarrow \comp}\), then
\(\dabs(f) \coloneq
\{\cabs(u)\mid u \in f\}\).

We can similarly define operations corresponding to eliminators. For instance,
we define the application operator \(\capp \in \comp \to \comp \to \comp\)%
\footnote{This is not a step function!}
by \(\capp(\cabs f,u) = f(u)\) and \(\capp(v,u) = \bot\) otherwise, then
derive \(\dapp \in \dom \to \dom \to \dom\) as
\[\dapp(\overline{v},\overline{u}) \coloneq \{\capp(v,u) \mid v \in \overline{v}, u \in
\overline{u}\}\]

\subsection{Semantic Typing}
\label{sec:semantic-typing}

The domain \(\dom\) will give us a good structure in which we can interpret terms of
\MLTTo{}. However, before moving to that model we also need to explain how to interpret the
typing relation. To do this, we define a semantic typing relation on elements of
\(\comp[n]\), in
\cref{fig:sem-typing}.

\begin{figure}
  \begin{mathpar}
    \jform{\strut u \ty_{n} a}[\(u\) has type \(a\) (\(u, a \in \comp[n]\))]
    \inferdef[Bot]{a \ty_{n} \cuniv}{\bot \ty_{n} a} \and
    \inferdef[Univ]{ }{\cuniv \ty_{n+1} \cuniv} \and
    \inferdef[Lam]{
      a \ty_n \cuniv \\
      \domain{f}{a} \\
      \forall u \in \comp[n], f(u) \ty_n g(u)
    }{f \ty_{n+1} \cpi(a,g)} \and
    \inferdef[Pi]{
      a \ty_n \cuniv \\
      \domain{g}{a} \\
      \forall u \in \comp[n], g(u) \ty_n \cuniv
    }{\cpi(a, g) \ty_{n+1} \cuniv} \\
    \domain{f}{a} \coloneq
    \forall u \in \comp[n], \exists v \in \comp[n], v \leq u \land v \ty_{n} a
    \land f(u) = f(v)
  \end{mathpar}
  \caption{Semantic typing rules}
  \label{fig:sem-typing}
  \Description{Semantic typing rules, case splitting on the type.}
\end{figure}

To handle functions, we use an auxiliary predicate which asserts that the domain of a (compact)
function is a given type, in the sense that the function is completely characterised by its
values on elements of that type.  This extensional (negative) presentation of typing is
equivalent to an intensional (positive) one, thanks to the following characterisation.

\begin{proposition}[Characterisation of function typing]
  \label[proposition]{thm:fun-typing}
  Assume given \(f \in \comp[n] \Rightarrow \comp[n]\) and \(a \in \comp[n]\).  The predicate
  \(\domain{f}{a}\) holds if and only if there exists
  \(\{u_i \mapsto v_i\} \in \mathcal{P}_\mathsf{fin}(\comp[n]\times\comp[n])\)
  representing \(f\)
  such that \(u_i \ty_{n} a\) for all \(i\).

  Moreover, if \(g \ty_n \cpi(a,\_\mapsto\cuniv)\), we have that \(\forall u \in
  \comp[n], f(u) \ty_n g(u)\) if, and only if, there exists a representation \(\{u_i
  \mapsto v_i\}\) of \(f\) such that
  \[
    \forall i, v_i \ty_n g(u_i)
  \]
\end{proposition}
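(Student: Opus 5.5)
The first equivalence is a rearrangement of presentations; the substance is in two closure properties of \(\ty_n\), which I would prove first. Throughout I assume \(a \ty_n \cuniv\), as in \textsc{Lam} and \textsc{Pi}. This gives \(\bot \ty_n a\) by \textsc{Bot}, which covers the degenerate cases (empty presentations, or no \(u_i \le u\)).

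The two closure properties are the following.
\begin{enumerate*}[label=(\roman*)]
\item \emph{Join closure}: if \(u \ty_n b\), \(u' \ty_n b\) and \(u \compat u'\), then \(u \vee u' \ty_n b\).
\item \emph{Upward closure in the type}: if \(v \ty_n b\), \(b \le b'\) and \(b' \ty_n \cuniv\), then \(v \ty_n b'\).
\end{enumerate*}
I would prove both by a simultaneous induction on \(n\), with a case analysis on the type (\(\cuniv\) or \(\cpi\)). In the \(\cpi\) case, the joins of step functions are computed pointwise, and we have \(\cpi(a,g)\vee\cpi(a',g') = \cpi(a\vee a', g\vee g')\). The \(\mathsf{dom}\) condition must be re-established at the larger type. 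This is where I expect the main obstacle: it needs the first part of the present proposition at level \(n\), so the proposition itself should be folded into the same induction. The characterisation at level \(n\) uses (i) and (ii) only at level \(n\). The lemmas at level \(n+1\) then use the characterisation at level \(n\), so the induction is well-founded.

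\emph{First equivalence, left to right.} Take any presentation \(\{u_i \mapsto v_i\}\) of \(f\). By \(\domain{f}{a}\), choose for each \(i\) some \(u'_i \le u_i\) with \(u'_i \ty_n a\) and \(f(u'_i) = f(u_i)\). I claim \(\{u'_i \mapsto f(u'_i)\}\) represents \(f\). This family is consistent, since the values \(f(u'_i)\) with \(u'_i\le u\) are all bounded by \(f(u)\). For any \(u\), monotonicity gives \(\bigvee\{f(u'_i)\mid u'_i\le u\} \le f(u)\). Conversely, \(f(u) = \bigvee\{v_i \mid u_i \le u\}\). If \(u_i\le u\), then \(u'_i \le u\) and \(v_i \le f(u_i) = f(u'_i)\), so equality holds.

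\emph{First equivalence, right to left.} Given a presentation with every \(u_i \ty_n a\) and an element \(u\), set \(v = \bigvee\{u_i \mid u_i \le u\}\). This join exists because the family is bounded by \(u\) and \(\comp[n]\) is a \(\CUSL\); it is \(\bot\) if the family is empty. Then \(v \le u\), and \(v \ty_n a\) by \textsc{Bot} and (i). Moreover, \(u_i \le v\) if and only if \(u_i \le u\), so \(f(v) = f(u)\).

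\emph{Second equivalence.} Assume also \(\domain{f}{a}\), as in \textsc{Lam}. For the forward direction, the presentation \(\{u'_i \mapsto f(u'_i)\}\) just built has values \(f(u'_i) \ty_n g(u'_i)\) directly from the hypothesis. For the converse, write \(f(u) = \bigvee\{v_i \mid u_i\le u\}\). From \(g \ty_n \cpi(a,\_\mapsto\cuniv)\) we get \(g(u)\ty_n\cuniv\), so \(\bot \ty_n g(u)\). For each \(u_i \le u\) we have \(v_i \ty_n g(u_i)\) and \(g(u_i) \le g(u)\) by monotonicity, so (ii) gives \(v_i \ty_n g(u)\). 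Finally (i) assembles these into \(f(u) \ty_n g(u)\), the required compatibility being witnessed by the bound \(f(u)\).
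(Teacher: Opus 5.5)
Your argument is correct. The paper gives no written proof of this proposition. It states join and upward closure separately, as \cref{thm:typing-lub} and \cref{thm:typing-up}. In the mechanisations the intensional form is the primitive definition of \textsc{Lam}/\textsc{Pi}, and the extensional form is derived from it. Your two directions are the natural content of that derived equivalence: shrink each \(u_i\) to a typed witness \(u'_i\) with \(f(u'_i)=f(u_i)\), and conversely join the typed inputs below \(u\).

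Where your decomposition differs is in folding the proposition into a mutual induction with (i) and (ii), and this is not actually needed. When proving (i) at level \(n+1\) for \(\cpi(a_1,g_1)\vee\cpi(a_2,g_2)\), you can re-establish \(\domain{g_1\vee g_2}{a_1\vee a_2}\) directly from the extensional definition. Given \(u\), take witnesses \(w_k\le u\) with \(w_k\ty_n a_k\) and \(g_k(w_k)=g_k(u)\). Then \(w_1\vee w_2\le u\) has type \(a_1\vee a_2\) by (ii) and (i) at level \(n\). Monotonicity gives \(g_k(w_1\vee w_2)=g_k(u)\). So (i) and (ii) can be proved first by a plain induction on \(n\), which matches the paper's ordering.

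Your hypotheses are well chosen.
\begin{itemize}
\item \(a\ty_n\cuniv\), or at least \(\bot\ty_n a\), is genuinely necessary. For an uninhabited \(a\) such as \(\cabs(h)\) with \(h\neq\bot\), the empty presentation of \(f=\bot\) satisfies the right-hand side, while \(\domain{\bot}{a}\) fails at \(u=\bot\).
\item The side condition \(b'\ty_n\cuniv\) in (ii) is also needed: for \(n\ge 1\) we have \(\bot\ty_n\bot\) and \(\bot\le\cabs(h)\), yet nothing has type \(\cabs(h)\). \Cref{thm:typing-up} as printed omits this condition.
\end{itemize}

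On the other hand, assuming \(\domain{f}{a}\) is not needed for the second equivalence as stated. For any presentation \(\{u_i\mapsto v_i\}\), the family \(\{u_i\mapsto f(u_i)\}\) already serves in the forward direction. What your choice buys is a single presentation with both \(u'_i\ty_n a\) and \(f(u'_i)\ty_n g(u'_i)\), which is the form an intensional \textsc{Lam} rule over one representation requires.
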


Importantly, the properties only hold for \emph{some well-chosen} representation of \(f\)!
Indeed, one can extend any representation with a pair \((u,\bot)\)
without changing the represented function, but this \(u\) might not be of type \(a\).
Although we will not make use of this fact, there is actually a canonical representation
of functions, which will always satisfy the intensional properties whenever
the function satisfies the extensional ones.

Semantic typing has a number of good closure properties.
% : terms of a given type
% are closed downwards and under (conditional) supremum,
% and typing is preserved and reflected by the embedding
% inclusion of \(\comp[n]\) in \(\comp[n+1]\).

\begin{proposition}
  \label[proposition]{thm:typing-lub}
  Suppose that \(u, v \in \comp[n]\) are compatible.  In that case, for all \(a \in \comp[n]\)
  we have \[
    u \ty_n a \quad \text{and} \quad v \ty_n a \quad \Longrightarrow \quad (u \vee v) \ty_n a
  \]
\end{proposition}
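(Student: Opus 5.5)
We argue by induction on \(n\), and within a fixed \(n\) by case analysis on the last rules of the derivations of \(u \ty_n a\) and \(v \ty_n a\), following the rules of \cref{fig:sem-typing}.

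If one of \(u, v\) is \(\bot\) (rule \textsc{Bot}), the supremum is the other element and we are done. We may therefore assume that neither is \(\bot\). Compatibility then forces them to share a head constructor, since the coalesced sum makes distinct summands incompatible except through \(\bot\). Because \(\cabs(\bot) = \bot\), we also have \(u \vee v \ne \bot\) in the remaining cases.

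If both are \(\cuniv\), the result is immediate. If both are \(\cpi(a_1,g_1)\) and \(\cpi(a_2,g_2)\) typed by \textsc{Pi} at \(\cuniv\) in \(\comp[n+1]\), then compatibility gives \(a_1 \compat a_2\) and \(g_1 \compat g_2\) in \(\comp[n]\), and \(u \vee v = \cpi(a_1 \vee a_2, g_1 \vee g_2)\). The induction hypothesis at level \(n\) gives \(a_1 \vee a_2 \ty_n \cuniv\). For each \(w\), the step functions satisfy \((g_1 \vee g_2)(w) = g_1(w) \vee g_2(w)\), and these two values are compatible, so again \((g_1\vee g_2)(w) \ty_n \cuniv\) by induction.

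The \(\domain{g_1 \vee g_2}{a_1 \vee a_2}\) premise is the delicate point, and we handle it using \cref{thm:fun-typing}. Take representations \(\{u_i \mapsto v_i\}\) of \(g_1\) and \(\{u'_j \mapsto v'_j\}\) of \(g_2\) with \(u_i \ty_n a_1\) and \(u'_j \ty_n a_2\). Their union represents \(g_1 \vee g_2\). We then need \(u_i \ty_n a_1 \vee a_2\), which is a \emph{monotonicity of typing in the type} lemma: if \(u \ty_n a\), \(a \le b\), and \(b \ty_n \cuniv\), then \(u \ty_n b\). This must be proved first, by induction on \(n\) together with \cref{thm:fun-typing}, and we expect it to be the main obstacle. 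In the \(\Pi\) case, a function typed at \(\cpi(a,g)\) must be shown to be typed at \(\cpi(a',g')\) with \(a\le a'\) and \(g \le g'\). The domain condition transfers by induction, and the codomain condition \(v_i \ty_n g(u_i)\) lifts to \(g'(u_i)\) again by induction, using \(g(u_i) \le g'(u_i)\) and \(g'(u_i)\ty_n\cuniv\). It may be most convenient to prove this lemma and the proposition simultaneously by induction on \(n\).

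In the \textsc{Lam} case, \(u = f_1\) and \(v = f_2\) are both typed at \(\cpi(a,g)\) in \(\comp[n+1]\). By \cref{thm:fun-typing} we obtain representations of \(f_1\) and \(f_2\) whose inputs have type \(a\) and whose outputs satisfy \(v_i \ty_n g(u_i)\). Their union represents \(f_1 \vee f_2\), which exists by compatibility, and it meets both intensional conditions. The converse direction of \cref{thm:fun-typing} then yields \(\domain{f_1\vee f_2}{a}\) and \(\forall w, (f_1\vee f_2)(w) \ty_n g(w)\), and the premise \(a \ty_n \cuniv\) is inherited. The only check needed is that the union of the two presentations is a valid presentation of the join, which holds because \(f_1 \compat f_2\) guarantees the compatibility condition on outputs.
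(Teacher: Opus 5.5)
Your proof is correct and follows essentially the route the paper relies on. No proof is printed, but the mechanisations argue by induction on the stage using the intensional form of \cref{thm:fun-typing}: joins of abstractions and of $\cpi$-codes become unions of presentations, and monotonicity in the type is needed for the domain condition in the $\Pi$ case. You are right to put the hypothesis $b \ty_n \cuniv$ into that monotonicity lemma, because \cref{thm:typing-up} as printed omits it and is false without it (take $\bot \ty_n \bot$ and $\bot \le \cabs(f)$ with $f \neq \bot$; then $\bot \ty_n \cabs(f)$ fails), and the converse of \cref{thm:fun-typing} at stage $n$ itself uses the join property at stage $n$, so all three statements belong to one simultaneous induction on $n$.
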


\begin{proposition}
  \label[proposition]{thm:typing-up}
  Let \(u, a, a' \in \comp[n]\).  We have that \[
    u \ty_n a \quad \text{and} \quad a \leq a' \quad \Longrightarrow \quad u \ty_n a'
  \]
\end{proposition}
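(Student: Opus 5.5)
The plan is to argue by induction on \(n\), and within each level by case analysis on the last rule of the derivation of \(u \ty_n a\) (\cref{fig:sem-typing}). The key structural fact is that the only elements of \(\comp[n]\) above \(\cuniv\) or \(\cpi(a_0,g_0)\) are, respectively, \(\cuniv\) and elements \(\cpi(a_1,g_1)\) with \(a_0 \le a_1\) and \(g_0 \le g_1\) pointwise. This holds because the order on \(\comp[n+1]=F(\comp[n])\) is componentwise on each summand of the coalesced sum. The intended case analysis runs as follows.

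\emph{Univ.} Here \(u=a=\cuniv\), so \(a'=\cuniv\) and the rule applies again.

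\emph{Pi.} Here \(a=\cuniv\), so again \(a'=\cuniv\) and there is nothing to prove.

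\emph{Lam.} Here \(a=\cpi(a_0,g_0)\), so \(a'=\cpi(a_1,g_1)\) with \(a_0\le a_1\) and \(g_0\le g_1\). I must re-establish three premises at level \(n\).
\begin{itemize}
  \item \(a_1 \ty_n \cuniv\). This is the hard part; see below.
  \item \(\domain{f}{a_1}\). For each \(u\), take the witness \(v\le u\) with \(v\ty_n a_0\) and \(f(u)=f(v)\); the induction hypothesis at level \(n\) upgrades \(v \ty_n a_0\) to \(v\ty_n a_1\).
  \item \(f(u)\ty_n g_1(u)\) for all \(u\). This follows from \(f(u)\ty_n g_0(u)\) and \(g_0(u)\le g_1(u)\), again by the induction hypothesis.
\end{itemize}

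\emph{Bot.} Here \(u=\bot\) and the premise is \(a\ty_n\cuniv\), while the goal is \(a'\ty_n\cuniv\).

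The Lam and Bot cases therefore reduce to the same auxiliary claim: \(a \ty_n \cuniv\) and \(a\le a'\) imply \(a'\ty_n\cuniv\). I would try to prove it simultaneously with the statement. The argument goes through for \(a=\cuniv\) (then \(a'=\cuniv\)) and for \(a=\cpi(a_0,g_0)\). In the latter case \(a'=\cpi(a_1,g_1)\), and the premises transfer by the induction hypothesis exactly as in the Lam case. The domain condition \(\domain{g_1}{a_1}\) needs a little more care: I expect to use \cref{thm:fun-typing} to pick a representation of \(g_0\) with typed inputs, and then control the extra steps of \(g_1\).

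This auxiliary claim for \(a=\bot\) is where I expect the main obstacle, and I should flag that I do not see how to discharge it. We have \(\bot\ty_n\cuniv\) for \(n\ge1\) via Bot and Univ, so \(\bot\ty_n\bot\). Yet \(\bot\le\cabs(f)\) for any nonzero step function \(f\), and no rule types \(\cabs(f)\) in \(\cuniv\). So \(\bot\ty_n\cabs(f)\) does not seem derivable, and the statement read literally appears to fail for \(u=a=\bot\), \(a'=\cabs(f)\).

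The same issue affects \(\domain{g_1}{a_1}\) whenever \(g_1\) adds steps outside the type. The first fix I would try is to check whether the intended reading restricts \(a'\) to elements that are already types, that is \(a'\ty_n\cuniv\). Under that reading the Bot case is immediate and the remaining cases go through as above. Otherwise I would look for a convention in \(\comp\) under which \(\cabs(f)\) is never above \(\bot\) in the relevant sense, though the order on \(\comp\) makes \(\bot\) least, so I doubt such a convention exists.
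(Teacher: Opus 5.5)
Your diagnosis is correct. The paper states this proposition without proof, so this is the main point: read literally, the statement is false. Your counterexample works concretely at \(n=2\).
First, \(\bot \ty_2 \bot\) holds by \textsc{Bot} from \(\bot \ty_2 \cuniv\), which itself holds by \textsc{Bot} from \(\cuniv \ty_2 \cuniv\).
Next, for \(f = \{\bot \mapsto \cuniv\}\) we have \(\bot \le \cabs(f) \in \comp[2]\).
Yet \(\bot \ty_2 \cabs(f)\) would need \(\cabs(f) \ty_2 \cuniv\), and no rule concludes that.

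The failure is not confined to \(a = \bot\). Your claim that the auxiliary statement goes through at a code \(\cpi(a_0,g_0)\) is also wrong, because that case calls the auxiliary claim on the domain component. For example, \(\bot \ty_3 \cpi(\bot,\bot)\) and \(\cpi(\bot,\bot) \le \cpi(\cabs(f),\bot)\). But \(\cpi(\cabs(f),\bot) \ty_3 \cuniv\) fails, because \(\cabs(f) \ty_2 \cuniv\) fails. Being a type is simply not upward closed, so drop the auxiliary claim rather than trying to repair it.

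The statement to prove is the repaired one, with the extra hypothesis \(a' \ty_n \cuniv\). This matches the side conditions the paper carries elsewhere: for instance, (MonoTypeIncr) presupposes \(\TVal{\Gamma}{A}{a'}\), hence \(a' \ty_n \cuniv\). Your induction on \(n\) then works. Make explicit, though, that the induction hypothesis is the repaired statement, so each appeal to it needs its own type side condition.

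In the \textsc{Lam} case at level \(m+1\), with \(a' = \cpi(a_1,g_1)\), only \textsc{Pi} can derive \(\cpi(a_1,g_1) \ty_{m+1} \cuniv\). Inverting it yields two facts:
\begin{itemize}
\item \(a_1 \ty_m \cuniv\). This is your ``hard'' premise. It is also the side condition for upgrading \(v \ty_m a_0\) to \(v \ty_m a_1\) when proving \(\domain{f}{a_1}\).
\item \(g_1(w) \ty_m \cuniv\) for all \(w \in \comp[m]\). This is the side condition for upgrading \(f(w) \ty_m g_0(w)\) to \(f(w) \ty_m g_1(w)\).
\end{itemize}
The \textsc{Bot} case is then immediate from the new hypothesis. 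The \textsc{Univ} and \textsc{Pi} cases are trivial since \(a' = \cuniv\).
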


\begin{proposition}[Stability of semantic typing]
  \label[proposition]{thm:typing-inj}
  For every \(u, a \in \comp[n]\), it holds \[
    u \ty_n a \quad \Longleftrightarrow \quad u \ty_{n+1} a
  \]
\end{proposition}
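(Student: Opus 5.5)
We argue by induction on \(n\), proving both directions of \(u \ty_n a \iff u \ty_{n+1} a\) simultaneously for all \(u, a \in \comp[n]\). The key point is that \(\comp[n] \subseteq \comp[n+1]\) is an embedding: an element of \(\comp[n]\) keeps its shape (\(\bot\), \(\cuniv\), \(\cabs(f)\), \(\cpi(a,g)\)) in \(\comp[n+1]\), and its step-function components, viewed in \(\comp[n-1] \Rightarrow \comp[n-1]\) or \(\comp[n] \Rightarrow \comp[n]\), are related by the functorial embedding of the function space. So each derivation of \(u \ty_n a\) ends with the same rule as any derivation of \(u \ty_{n+1} a\), and it suffices to compare the premises rule by rule.

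For the base case \(n = 0\), we have \(\comp[0] = \{\bot\}\) and nothing has type \(\bot\) at level \(0\): the only rule whose conclusion is at level \(0\) is \textsc{Bot}, whose premise \(\bot \ty_0 \cuniv\) fails since \(\cuniv \notin \comp[0]\). At level \(1\), \(\bot \ty_1 \bot\) would require \(\bot \ty_1 \cuniv\), which holds via \textsc{Univ}. So the base case needs care. Either \(\bot \ty_n a\) is unreachable at level \(0\), in which case I would start the induction by treating \(n=0\) vacuously and restrict the claim to the cases where \(a\) is a genuine code, or I would check whether the intended statement implicitly requires \(a \ty_n \cuniv\). Most likely the inductive structure only needs the step from \(n\) to \(n+1\) for codes that are types at level \(n\), and I would verify the base case against the exact rules.

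For the inductive step (from \(n\) to \(n+1\), with \(u, a \in \comp[n+1]\)), we go through the rules. \textsc{Bot} reduces to the claim for \(a \ty \cuniv\), which holds by induction. \textsc{Univ} is immediate. For \textsc{Pi} and \textsc{Lam}, with \(\cpi(a,g) \in \comp[n+1]\) meaning \(a \in \comp[n]\) and \(g \in \comp[n] \Rightarrow \comp[n]\), the premises at level \(n+2\) quantify over \(u \in \comp[n+1]\) and use \(\ty_{n+1}\), whereas at level \(n+1\) they quantify over \(\comp[n]\) and use \(\ty_n\). The premise \(a \ty \cuniv\) transfers by the induction hypothesis. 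For the quantified premises we use \cref{thm:fun-typing}: both \(\domain{g}{a}\) and \(\forall u, g(u) \ty \cuniv\) (or \(f(u) \ty g(u)\)) are equivalent to the existence of a representation \(\{u_i \mapsto v_i\}\) with \(u_i \ty a\) and \(v_i \ty \cuniv\) (respectively \(v_i \ty g(u_i)\)). A representation in \(\comp[n]\) is also one in \(\comp[n+1]\), and the conditions on it transfer by induction, which gives the downward-to-upward direction.

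The main obstacle is the converse direction. A representation witnessing the level-\((n+1)\) conditions may use points \(u_i \in \comp[n+1] \setminus \comp[n]\), and we must produce one inside \(\comp[n]\). Since \(g\) is, by hypothesis, a step function of \(\comp[n]\), I would argue directly with the extensional definition instead. Given \(u \in \comp[n]\), \(\domain{g}{a}\) at level \(n+1\) yields \(v \le u\) with \(v \in \comp[n+1]\). Downward closure of \(\comp[n]\) in \(\comp[n+1]\) (a consequence of the embedding, because compact elements below an element of \(\comp[n]\) lie in \(\comp[n]\)) gives \(v \in \comp[n]\), and \(v \ty_n a\) then follows by induction. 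The pointwise conditions \(g(u) \ty \cuniv\) restrict directly from all of \(\comp[n+1]\) to \(\comp[n]\), using that \(g\) extended to \(\comp[n+1]\) agrees with \(g\) on \(\comp[n]\) and takes values in \(\comp[n]\). The facts I would establish first, as lemmas, are this downward closure of \(\comp[n]\) in \(\comp[n+1]\) and the compatibility of evaluation with the embedding. \cref{thm:typing-up} and \cref{thm:typing-lub} should not be needed.
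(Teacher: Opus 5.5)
Your upward half is fine: a good representation in \(\comp[n]\) stays a good representation in \(\comp[n+1]\), via \cref{thm:fun-typing}. The downward half, however, has two genuine gaps.

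\textbf{First gap: the level-0 problem.} The problem you noticed at level \(0\) is not a base-case technicality, and neither of your proposed restrictions repairs it. With \(\comp[0]=\{\bot\}\) and the rules as written, \(\ty_0\) is empty: only \textsc{Bot} concludes at level \(0\), and its premise would need \(\cuniv\) at level \(0\). This emptiness propagates upward:
\begin{itemize}
  \item \(\cpi(\bot,\bot)\in\comp[1]\) satisfies \(\cpi(\bot,\bot)\ty_2\cuniv\) but not \(\cpi(\bot,\bot)\ty_1\cuniv\), since \textsc{Pi} at level \(1\) needs \(\bot\ty_0\cuniv\);
  \item \(\cpi(\cpi(\bot,\bot),\bot)\in\comp[2]\) is typed at level \(3\) but not at level \(2\), and so on.
\end{itemize}
So the downward implication fails at every \(n\), with \(a=\cuniv\) a perfectly genuine code (and \(\cuniv\ty_n\cuniv\) for \(n\ge 1\)). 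In your inductive step, the sentence that breaks is precisely ``the premise \(a\ty\cuniv\) transfers by the induction hypothesis'', when that hypothesis is at level \(0\). The level convention has to be fixed before any induction, for instance by letting \(\cuniv\) serve as a type at every level including \(0\), so that \(\bot\ty_0\cuniv\). After that, the base case is trivial.

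\textbf{Second gap: downward closure is false.} In the direction you rightly single out as the main obstacle, the lemma you rely on does not hold: \(\comp[n]\) is not downward closed in \(\comp[n+1]\). The embedding reflects order and compatibility, but its image is not a down-set, because the function space acquires new step functions below old ones. For a concrete counterexample, take \(f=\{\bot\mapsto\cuniv\}\in\comp[1]\Rightarrow\comp[1]\) and \(h=\{\cpi(\cuniv,\bot)\mapsto\cuniv\}\in\comp[2]\Rightarrow\comp[2]\).
\begin{itemize}
  \item Then \(\cabs(h)\le\cabs(f)\in\comp[2]\), but \(\cabs(h)\notin\comp[2]\).
  \item The reason: the elements of \(\comp[1]=\{\bot,\cuniv,\cpi(\bot,\bot)\}\) lying below \(\cpi(\cuniv,\bot)\) and below \(\cpi(\bot,\bot)\) are the same. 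So every step function with thresholds in \(\comp[1]\) takes equal values at these two points, whereas \(h\) does not.
  \item The same happens among well-typed elements: \(\cpi(\cuniv,h)\ty\cuniv\) and \(\cpi(\cuniv,h)\le\cpi(\cuniv,f)\in\comp[2]\), yet \(\cpi(\cuniv,h)\notin\comp[2]\).
\end{itemize}
Hence the witness \(v\) supplied by the level-\((n+1)\) domain condition for \(g\) may lie outside \(\comp[n]\), and nothing in your argument produces a level-\(n\) witness.

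A repair needs an extra ingredient. One option is to replace \(v\) by the largest element of \(\comp[n]\) below it, which exists since \(\comp[n]\) is finite. This leaves \(g(v)\) unchanged, because all thresholds of \(g\) lie in \(\comp[n]\). You would then need a separate inductive lemma that this approximation of an element of type \(a\in\comp[n]\) is still of type \(a\). This level-mixing is presumably why the mechanisations take the intensional form of \textsc{Lam} and \textsc{Pi}, quantifying over the entries of the function's representation, as primary: those entries do not change when an element is viewed one stage higher.
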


This hierarchy of semantic typing on \(\comp[n]\) induces a notion of semantic typing on
\(\comp\), defined as follows.  Given \(u,a \in \comp\), we define
\[ u \ty a \quad \coloneq \quad \exists n, u \ty_n a.
\]
Note that, by definition of \(\comp = \bigcup_{n \in \mathbb{N}} \comp[n]\), there exists, for
all \(u, a \in \comp\), a least \(n\), which can be computed, such that \(u, a \in
\comp[n]\).  It follows, from \cref{thm:typing-inj}, that \[
  u \ty a \quad \Longleftrightarrow \quad u \ty_{n} a.
\]
In the case where \(u \ty a\), we have moreover that \(u \ty_m a\) for any \(m \geq n\).

We can finally extend semantic typing to all domain elements \(\overline{u},
\overline{a} \in \dom\),
in the following way:
\[
  \overline{u} \ty \overline{a} \quad \coloneq \quad \forall u \in \overline{u},
  \exists v \in \overline{u}, \exists a \in \overline{a}, u \leq v \land v \ty a.
\]
Both typing relations, on \(\comp\) and on \(\dom\), inherit the good properties of
\(\_ \ty_{n} \_\).

\subsection{Finitary projectors}
\label{sec:finitary-projectors}

We can in fact analyse semantic typing using the well established concept
of finitary projector on a Scott domain. For the rest of this
section we fix a Scott domain \(D\) and set \(C \coloneq \compacts{D}\).
\begin{definition}[Finitary projector \cite{Scott1981}]
  \label[definition]{def:finitary-projector}
  A \emph{finitary projector} on \(D\) is a continuous map
  $p \in [D \Rightarrow D]$ such that
  \begin{enumerate}
    \item for $\alpha \in D$, $p\,\alpha \leq \alpha$
    \item for $\alpha \in D$, $p(p\,\alpha) = p\,\alpha$
    \item the image of $p$ is a Scott domain with respect to the restriction of the
      order on~$D$.
  \end{enumerate}
\end{definition}

We define \(\Proj D\) as the set of finitary projectors of \(D\),
ordered pointwise. An important observation of Scott is that this is again a domain.

\begin{theorem}[\cite{Scott1981}]
  \(\Proj D\) is a Scott domain.
\end{theorem}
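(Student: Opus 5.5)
The plan is to describe finitary projectors entirely through their sets of compact fixed points, and so reduce the statement to the ideal completion of a suitable \(\CUSL\). Given \(p \in \Proj D\), put \(\mathrm{Fix}(p) \coloneq \{c \in C \mid p\,c = c\}\). Using idempotence and \(p\,\alpha \le \alpha\), one checks that every compact element of the image \(\mathrm{im}(p)\) is compact in \(D\). Indeed, if \(x \le \bigvee X\) with \(X\) directed in \(D\), then \(x = p\,x \le \bigvee p[X]\) by continuity, and \(p[X]\) is directed inside \(\mathrm{im}(p)\). Hence \(\compacts{\mathrm{im}(p)} = \mathrm{Fix}(p)\). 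Since \(\mathrm{im}(p)\) is a Scott domain by clause (3), \(p\) is determined by \(\mathrm{Fix}(p)\):
\[
  p\,\alpha = \bigvee \{ c \in \mathrm{Fix}(p) \mid c \le \alpha \}.
\]
Conversely, I will characterise the subsets \(S \subseteq C\) arising this way as those that contain \(\bot\), are closed under compatible binary joins taken in \(C\), and satisfy the following condition \((\star)\): for every \(c \in C\), the set \(S \cap {\downarrow} c\) has a greatest element. For such an \(S\), the formula above defines a continuous map. It lies below the identity, it is idempotent, and its image is \(\ideals{S} \) with \(S\) a \(\CUSL\), hence a Scott domain by the earlier proposition. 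The pointwise order on projectors corresponds to inclusion of the sets \(S\).

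The second step is to show that the poset of these admissible sets \(S\), ordered by inclusion, is a Scott domain. I would exhibit it as the ideal completion of a \(\CUSL\) of ``finite'' projectors. Concretely, take the compact projectors to be those \(p\) with \(\mathrm{Fix}(p)\) generated, via compatible joins and \((\star)\), by a finite set of compact elements. Equivalently, and closer to Scott's original argument, take them to be the projectors that are themselves compact in \([D \Rightarrow D]\), i.e.\ step functions by \cref{prop:step-compact}, which are finitary projectors. Then I show three things. Directed sups of projectors, computed pointwise, are projectors: the fixed-point set of \(\bigvee_i p_i\) is the directed union of the \(\mathrm{Fix}(p_i)\). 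Bounded sets of projectors have a sup, obtained as the directed sup of the finite joins. Every projector is the directed sup of the compact projectors below it.

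The main obstacle is this last approximation step, together with checking that the pointwise join of two compatible projectors, or the closure of a finite set of fixed points, is again a finitary projector. The difficulty is condition \((\star)\): the closure of \(\mathrm{Fix}(p) \cup \mathrm{Fix}(q)\) under compatible joins must still admit a largest element below each compact \(c\). For an arbitrary Scott domain this can fail, since the downset of a compact element may be infinite. My first attempt is therefore to exploit what the paper actually needs: for \(D = \dom\), each \({\downarrow} c\) lies in a finite \(\comp[n]\), so the closure stays finite under each \(c\). In that case \((\star)\) holds because a finite, bounded family of compatible elements has a join. For a general \(D\), I would instead follow Scott and define the join projector as \(\bigvee_k (p \circ q)^k\). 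I would then verify idempotence and the finitary image directly, by showing that its compact fixed points are exactly the joins of finite compatible families of fixed points of \(p\) and \(q\).
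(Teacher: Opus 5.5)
\textbf{The gap is condition \((\star)\).} Fixed-point sets of finitary projectors need not satisfy it. So the claimed correspondence between \(\Proj D\) and your admissible sets fails for a general Scott domain, and everything after it concerns a proper sub-poset of \(\Proj D\).

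Take \(D\) to be the chain \(0 < 1 < 2 < \cdots < \omega < \omega+1\). It is a Scott domain whose compact elements are the finite \(n\) and \(\omega+1\), but not \(\omega\). Let \(p\) fix every \(\alpha \le \omega\) and send \(\omega+1\) to \(\omega\). Then \(p\) is continuous, below the identity and idempotent, and its image \(\{0,1,\ldots,\omega\}\) is a Scott domain, so \(p \in \Proj D\). Yet \(\mathrm{Fix}(p) = \mathbb{N}\), and \(\mathbb{N} \cap {\downarrow}(\omega+1)\) has no greatest element.

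The same example clashes with your own (correct) directed-sup step. The projectors \(p_n(\alpha) = \min(\alpha,n)\) have fixed-point sets \(\{0,\ldots,n\}\), which satisfy \((\star)\). Their pointwise sup is \(p\), and the union \(\mathbb{N}\) violates \((\star)\). So the obstacle you single out in your last paragraph comes only from the spurious condition.

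Your fallback does not repair it either. Since \(p, q \le \mathrm{id}\), we have \(p\circ q \le p\), \(p \circ q \le q\) and \((p\circ q)^{k+1} \le (p\circ q)^k\). Hence \(\bigvee_{k\ge 1}(p\circ q)^k = p\circ q\), which lies below \(p\) and \(q\) rather than above them. That iteration computes joins of closure operators (maps above the identity), not of projectors.

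\textbf{The fix is to drop \((\star)\) altogether.} The fixed-point sets are exactly the sets \(S \subseteq C\) that contain \(\bot\) and are closed under joins of pairs bounded in \(D\).
\begin{itemize}
\item Necessity is the proposition just before \cref{thm:proj-comp-set}.
\item Sufficiency needs only your formula \(p\,\alpha = \bigvee\{c \in S \mid c \le \alpha\}\). The indexing set is directed, and compactness of each \(c\) gives continuity and \(\mathrm{Fix}(p) = S\); this is \cref{thm:proj-comp-set}.
\end{itemize}
These sets are closed under arbitrary intersections and under directed unions, and \(C\) itself is one of them (it corresponds to \(\mathrm{id}\)). So they form an algebraic closure system, and \(\Proj D\) is an algebraic complete lattice.

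Bounded completeness is then immediate: every family is bounded by \(\mathrm{id}\), and its join is the projector onto the closure of the union of the fixed-point sets. The compact elements are the projectors whose fixed-point set is generated by a finite \(F \subseteq C\), namely the finite set \(\{\bigvee X \mid X \subseteq F \text{ bounded}\}\). These are exactly the finitary projectors that are step functions, as you suspected.

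No finiteness of \({\downarrow}c\) is used, so there is no need to specialise to \(\dom\), which would in any case prove less than the stated theorem. The paper does not reprove the result; it cites Scott. But the facts it records around it are precisely this route: closure of \(C_{|p}\) under finite sups, \cref{thm:proj-comp-set}, and directed sups of projectors corresponding to unions of fixed-point sets.
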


We use the notation $D_{|p}$ for the image of~$p$, that is \[
  D_{|p} \coloneq \{p(\alpha) \mid \alpha \in D\}.
\]
An element $\alpha\in D$ is in the image of $p$ precisely when $p(\alpha)=\alpha$.
Thus, the Scott domain \(D_{|p}\) is the set of
fixed points of \(p\). Accordingly, we use the notation
\(C_{|p}\) for the set
of compact fixed points of \(p\):
\[ C_{|p} \coloneq C \cap D_{|p} = \{ u \in C \mid p(u) = u \}
\]

\begin{proposition}
  Given any finitary projector \(p\) on \(D\),
  an element $u \in D_{|p}$ is compact in~$D$ if and only if it is compact in~$D_{|p}$.
  This means that \[
  C_{|p} = \compacts{D_{|p}} \]
\end{proposition}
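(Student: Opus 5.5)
We prove the two implications separately, using only continuity of \(p\), the two projector axioms (\(p\,\alpha \leq \alpha\) and idempotence), and the fact that directed suprema in \(D_{|p}\) agree with those in \(D\). That last fact is the first thing to establish, and everything else rests on it.

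\emph{Directed suprema.} Let \(X \subseteq D_{|p}\) be directed and let \(\bigvee X\) be its supremum in \(D\). By continuity, \(p(\bigvee X) = \bigvee_{x \in X} p(x) = \bigvee X\), since each \(x\) is a fixed point. So \(\bigvee X \in D_{|p}\). It is then automatically the least upper bound inside \(D_{|p}\) as well, since any upper bound in \(D_{|p}\) is in particular an upper bound in \(D\).

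\emph{Compact in \(D\) implies compact in \(D_{|p}\).} Let \(u \in D_{|p}\) be compact in \(D\), and let \(X \subseteq D_{|p}\) be directed with \(u \leq \bigvee^{D_{|p}} X\). By the previous step this supremum is \(\bigvee X\) computed in \(D\), and \(X\) is also directed as a subset of \(D\). Compactness of \(u\) in \(D\) then gives some \(x \in X\) with \(u \leq x\).

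\emph{Compact in \(D_{|p}\) implies compact in \(D\).} This is the main step. Let \(u \in D_{|p}\) be compact in \(D_{|p}\), and let \(Y \subseteq D\) be directed with \(u \leq \bigvee Y\). Consider the image \(p(Y) = \{p(y) \mid y \in Y\}\).
\begin{itemize}
  \item \(p(Y)\) is directed, because \(p\) is monotone.
  \item \(p(Y)\) lies in \(D_{|p}\), by idempotence.
  \item By continuity and \(p\,u = u\), we get \(u = p(u) \leq p(\bigvee Y) = \bigvee p(Y)\).
\end{itemize}
By the first step, \(\bigvee p(Y)\) is also the supremum in \(D_{|p}\). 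Compactness of \(u\) in \(D_{|p}\) therefore yields some \(y \in Y\) with \(u \leq p(y)\). Finally \(p(y) \leq y\), so \(u \leq y\), as required.

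\emph{Conclusion.} Combining the two directions gives \(\compacts{D_{|p}} = \{u \in D_{|p} \mid u \text{ compact in } D\} = C \cap D_{|p} = C_{|p}\).

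The only real obstacle is the reverse direction. An arbitrary directed subset of \(D\) need not lie in \(D_{|p}\), so compactness in \(D_{|p}\) cannot be applied to it directly; the fix is to push the set through \(p\) and then use deflationarity \(p(y) \leq y\) to come back down to \(Y\). Notably, the third clause of the definition (that \(D_{|p}\) is a Scott domain) is not needed for this argument.
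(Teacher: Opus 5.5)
Your proof is correct and complete. Scott continuity of $p$ makes directed suprema in $D_{|p}$ agree with those in $D$. Deflation and idempotence then let you push an arbitrary directed $Y \subseteq D$ into $D_{|p}$ and come back down. Your remark that clause (3) of the definition is unused is also accurate; the paper only needs that clause for the next proposition, that every fixed point is a limit of compact fixed points.

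The paper states this proposition without proof, as standard background from Scott's theory of finitary projections, so there is no in-paper argument to compare against. What you wrote is the standard argument. It can be phrased more abstractly: the inclusion $i \colon D_{|p} \hookrightarrow D$ is Scott-continuous (your first step) and order-reflecting, and it is lower adjoint to the corestriction of $p$, since $x \le y \iff x \le p(y)$ for $x \in D_{|p}$. Such a map reflects compactness, which is your second step. Because its upper adjoint $p$ is Scott-continuous, it also preserves compactness, which is your third step.
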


It follows that:

\begin{proposition}
  Every fixed point $\alpha\in D_{|p}$ is a limit of compact fixed points in~$C_{|p}$.
\end{proposition}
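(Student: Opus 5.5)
We prove the statement by combining the previous proposition, \(C_{|p} = \compacts{D_{|p}}\), with the fact that \(D_{|p}\) is a Scott domain, which is clause (3) of \cref{def:finitary-projector}.

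Fix \(\alpha \in D_{|p}\). Since \(D_{|p}\) is a Scott domain, \cref{def:scott-domain} applied inside \(D_{|p}\) expresses \(\alpha\) as the supremum, computed in \(D_{|p}\), of its basis \(\{v \in \compacts{D_{|p}} \mid v \le \alpha\}\). By the previous proposition this basis is exactly \(B \coloneq \{v \in C_{|p} \mid v \le \alpha\}\). So \(\alpha\) is the supremum of compact fixed points, at least in the order of \(D_{|p}\).

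It remains to check that this supremum agrees with the supremum taken in \(D\). This is the only step needing care, since a subposet need not inherit suprema.

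\begin{itemize}
  \item \(B\) is directed. It contains \(p(\bot) = \bot\), since \(p(\bot) \le \bot\). Given \(u, v \in B\), both lie below \(\alpha\), so \(u \vee v\) exists in \(D\), is compact, and lies below \(\alpha\).
  \item Set \(\beta \coloneq \bigvee_D B\). Then \(\beta \le \alpha\), since \(\alpha\) is an upper bound of \(B\).
  \item By continuity of \(p\), \(p(\beta) = \bigvee_D p(B) = \bigvee_D B = \beta\). Hence \(\beta\) lies in \(D_{|p}\) and is an upper bound of \(B\) there.
  \item Since \(\alpha\) is the least upper bound of \(B\) within \(D_{|p}\), we get \(\alpha \le \beta\), so \(\alpha = \beta\).
\end{itemize}

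Thus \(\alpha = \bigvee_D B\), a directed supremum in \(D\) of elements of \(C_{|p}\), as claimed.

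An alternative route avoids clause (3) entirely. Write \(\alpha = \bigvee_D \{u \in C \mid u \le \alpha\}\), apply \(p\), and use continuity to get \(\alpha = \bigvee_D \{p(u)\}\). The catch is that \(p(u)\) need not be compact in general, so this route fails without the finitary hypothesis. That is why the argument above goes through \(D_{|p}\) and the previous proposition instead.
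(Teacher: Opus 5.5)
Your proof is correct and follows the paper's route: you express \(\alpha\) as the supremum of its basis in the Scott domain \(D_{|p}\), identify that basis with \(\{v \in C_{|p} \mid v \le \alpha\}\) via the previous proposition, and add a check that this supremum agrees with the one computed in \(D\), which the paper leaves implicit. The one step to add is in the directedness argument, where you should say why \(u \vee v\) is again a fixed point of \(p\): \(u \vee v = p(u) \vee p(v) \le p(u \vee v) \le u \vee v\) by monotonicity and clause (1).
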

\begin{proof}
  We use the fact that, by definition, the image \(D_{|p}\) is a Scott domain.  From
  this follows
  that every element in \(D_{|p}\) is the limit of the compact elements below it.
  One concludes
  with the characterisation of the compact elements of \(D_{|p}\) as the elements of
  \(C_{|p}\).
\end{proof}
This property provides an alternative formulation of the fact the image \(D_{|p}\)
of a finitary
projector is a Scott domain in \cref{def:finitary-projector}.  This, together with
continuity,
implies that every finitary projector~$p$ can be recovered by its
set of compact
fixed points $C_{|p}$ in the following way.
\begin{proposition}
  The set of compact fixed points of a finitary projector~$p \in [D \Rightarrow D]$
  is closed under finite suprema.
\end{proposition}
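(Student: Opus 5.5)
The plan is a direct order-theoretic argument, using only monotonicity of \(p\) (which follows from continuity) and the two axioms \(p\,\alpha \le \alpha\) and \(p(p\,\alpha) = p\,\alpha\) of \cref{def:finitary-projector}. Here "finite suprema" means suprema in \(D\) of finite subsets of \(C_{|p}\) that are bounded in \(D\). We first treat the empty supremum, and then binary suprema of compatible elements. The general finite case then follows by induction on the number of elements.

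For the empty supremum, we must show \(\bot \in C_{|p}\). The element \(\bot\) is compact: if \(\bot \le \bigvee X\) for a directed \(X\), then any element of \(X\), which is non-empty, lies above \(\bot\). Moreover \(p(\bot) \le \bot\) by the first axiom, hence \(p(\bot) = \bot\).

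For the binary case, take \(u, v \in C_{|p}\) with \(u \compat_D v\). Since \(D\) is bounded complete, \(u \vee v\) exists in \(D\). Two facts need to be checked.

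\begin{enumerate}
  \item \emph{\(u \vee v\) is compact in \(D\).} This is the standard argument. Suppose \(u \vee v \le \bigvee X\) with \(X\) directed. Then \(u \le \bigvee X\) and \(v \le \bigvee X\), so by compactness there are \(x_1, x_2 \in X\) with \(u \le x_1\) and \(v \le x_2\). Directedness of \(X\) gives some \(x \in X\) above both \(x_1\) and \(x_2\), and then \(u \vee v \le x\).
  \item \emph{\(u \vee v\) is a fixed point of \(p\).} Since \(p\) is monotone, \(u = p(u) \le p(u \vee v)\), and likewise \(v \le p(u \vee v)\). Hence \(u \vee v \le p(u \vee v)\). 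The first projector axiom gives the reverse inequality, so \(p(u \vee v) = u \vee v\).
\end{enumerate}

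Together these give \(u \vee v \in C_{|p}\). As a by-product, \(u \vee v\) is also the supremum of \(u\) and \(v\) computed inside \(D_{|p}\), so there is no ambiguity about which order the supremum is taken in.

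I do not expect a real obstacle here. The only point needing care is making precise which suprema are meant: compatibility is taken in \(D\), and the supremum is computed in \(D\). This is harmless, because the argument shows that the supremum automatically lands in \(D_{|p}\), and is also its supremum there. Monotonicity of \(p\) should be recalled explicitly as a consequence of Scott continuity.
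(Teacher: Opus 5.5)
Your argument is correct. The paper gives no proof of this proposition. It only presents it as following from the preceding facts (that \(D_{|p}\) is a Scott domain whose compact elements are exactly \(C_{|p}\)) ``together with continuity''.

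The route implicit there would run as follows. \(C_{|p}\), being the set of compact elements of a Scott domain, is a conditional upper semi-lattice. It is therefore closed under joins of compatible pairs \emph{computed in \(D_{|p}\)}. One must then still check that such a join \(w\) coincides with the join in \(D\), which is exactly your computation \(u \vee v \le w \le p(u \vee v) \le u \vee v\).

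Your direct route bypasses all of this. It uses only monotonicity and \(p\,\alpha \le \alpha\), never idempotence nor the requirement that the image be a Scott domain. So it actually shows that the fixed points of any monotone deflationary map are closed under every supremum that exists in \(D\). You then intersect with the compact elements, which are closed under finite bounded suprema. This is more elementary and more general, and it settles the ambiguity about where suprema are taken. For completeness you could also note that compatibility is unambiguous: if \(u, v \le w\) with \(u, v\) fixed, then \(u, v \le p(w) \in D_{|p}\).
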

Conversely, a simple argument establishes that, for \(C = \compacts{D}\):
\begin{proposition}\label[proposition]{thm:proj-comp-set}
  Every set~$E \subseteq C$ closed under finite sups defines a finitary projector
  \[ p(\alpha) = \{ u \in E \mid u \leq \alpha \}.
  \]
  This finitary projector is uniquely determined by the equation $E=C_{|p}$.
\end{proposition}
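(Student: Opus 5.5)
We read \(p(\alpha)\) as the ideal \(\{u \in E \mid u \le \alpha\}\), with \(\alpha\) viewed as the ideal of compact elements below it (here \(D\) is identified with \(\ideals{C}\)). We assume \(\bot \in E\), the supremum of the empty family; otherwise \(p(\alpha)\) could be empty and fail to be an ideal. The plan has four steps: check that \(p\) is well defined, verify the three clauses of \cref{def:finitary-projector}, identify \(C_{|p}\) with \(E\), and finally prove uniqueness.

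\textbf{Well-definedness and continuity.} The set \(p(\alpha) = E \cap \alpha\) is non-empty because it contains \(\bot\). It is closed under binary sups: if \(u, v \in E \cap \alpha\), then \(u \vee v\) exists in \(C\) and lies in \(\alpha\), since \(\alpha\) is an ideal, and lies in \(E\) by hypothesis. It is not downward closed, however, so we take its downward closure, which is again an ideal because sups of elements of \(E\) bounding given elements stay in \(E\). Concretely, \(p(\alpha) = \bigvee\{u \in E \mid u \le \alpha\}\), a directed sup in \(D\). Monotonicity is immediate. Continuity follows because a compact \(u \le \bigvee X\) with \(X\) directed already lies below some element of \(X\).

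\textbf{The projector axioms.} We have \(p(\alpha) \le \alpha\) since every \(u \le \alpha\). For idempotence, \(u \in E\) satisfies \(u \le p(\alpha)\) iff \(u \le \alpha\): one direction is trivial, and the other holds because \(u\) is a generator of \(p(\alpha)\). Hence \(p(p(\alpha)) = p(\alpha)\). For a compact \(u\), if \(u \in E\) then \(p(u) = u\). Conversely, if \(p(u) = u\), compactness of \(u\) gives \(u \le \bigvee F\) for a finite \(F \subseteq E\) below \(u\), so \(u = \bigvee F \in E\). This yields \(C_{|p} = E\).

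\textbf{Scott domain property of the image.} This is the main obstacle, since it requires knowing compactness in \(D_{|p}\). Each fixed point \(\alpha\) is the directed sup of the elements of \(E\) below it. An element \(u \in E\) is compact in \(D_{|p}\): directed sups of fixed points are computed as in \(D\) because \(p\) is continuous, and \(u\) is compact in \(D\). Bounded completeness of \(D_{|p}\) follows by applying \(p\) to the sup in \(D\), which is bounded in \(D\): \(p\) of that sup is an upper bound in the image and is least by monotonicity together with \(p(\beta) = \beta\) for upper bounds \(\beta\). Directed completeness is inherited from \(D\).

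\textbf{Uniqueness.} If \(q\) is any finitary projector with \(C_{|q} = E\), then by the preceding propositions every fixed point of \(q\) is the limit of its compact fixed points. Consider \(q(\alpha) = \bigvee\{u \in E \mid u \le q(\alpha)\}\). Each such \(u\) satisfies \(u \le \alpha\). Conversely, \(u \in E\) with \(u \le \alpha\) gives \(u = q(u) \le q(\alpha)\). So \(q(\alpha) = \bigvee\{u \in E \mid u \le \alpha\} = p(\alpha)\).
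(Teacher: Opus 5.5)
Your proof is correct. The paper gives no proof of this proposition beyond the remark that ``a simple argument establishes'' it, and your direct verification is that argument: taking the ideal generated by the directed set $E \cap \alpha$, continuity via compactness, the identification $C_{|p} = E$, showing the image is a Scott domain, and uniqueness via the two preceding propositions.
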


\subsection{Semantic types expressed as finitary projectors}
\label{sec:explicit-finitary-projectors}

In this section we explain how to use curryfication to view each level \(- \ty_n -\) of the
semantic typing hierarchy as a continuous function \(\p_n : \dom[n] \rightarrow
\Proj \dom[n]\).
The function \(\p_n\) is defined as
\[
  \arraycolsep=2pt
  \begin{array}{r c c l l}
    \p_n & : & \dom[n] & \longrightarrow & \Proj \dom[n] \\
    & & \alpha &\longmapsto & \{ u \in \comp[n] \mid \exists a \in \alpha, u \ty_n a \}
  \end{array}
\]
where we identify, thanks to \cref{thm:proj-comp-set}, the set of compact elements with the
finitary projector defined as
\[ \p_n(\alpha) = \omega \mapsto \bigvee \{ u \in \comp[n] \mid u \in \omega \land
    \exists a \in
  \alpha, u \ty_n a \}
\]
\begin{proposition}
  The map $\p_n$ is well-defined, and continuous.
\end{proposition}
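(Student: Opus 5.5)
To prove the statement, I would split it into two parts: first that each \(\p_n(\alpha)\) is a finitary projector, then that \(\p_n\) is continuous as a map into \(\Proj \dom[n]\). Both parts reduce, via \cref{thm:proj-comp-set}, to statements about the set of compact elements
\[ E_\alpha \coloneq \{ u \in \comp[n] \mid \exists a \in \alpha, u \ty_n a \}. \]

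\emph{Well-definedness.} By \cref{thm:proj-comp-set} it suffices to show that \(E_\alpha\) is closed under finite suprema of compatible elements, including the empty supremum \(\bot\).
\begin{itemize}
  \item For \(\bot\): note that \(\bot \in \alpha\) and \(\bot \ty_n \cuniv\) when \(n \geq 1\) by rule \textsc{Bot} (for \(n = 0\) the domain is trivial). But \(\bot \ty_n \bot\) needs \(\bot \ty_n \cuniv\), which does hold for \(n\ge 1\), so \(\bot \in E_\alpha\). In the degenerate case \(n = 0\), I would either treat it separately or adopt the convention that the empty sup is handled by the \(\bot\) of the image.
  \item For binary suprema: take \(u, v \in E_\alpha\) compatible, witnessed by \(u \ty_n a\) and \(v \ty_n b\) with \(a, b \in \alpha\). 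Since \(\alpha\) is an ideal, \(a \vee b\) exists and lies in \(\alpha\). By \cref{thm:typing-up}, both \(u \ty_n a \vee b\) and \(v \ty_n a \vee b\). By \cref{thm:typing-lub}, \(u \vee v \ty_n a \vee b\), so \(u \vee v \in E_\alpha\).
\end{itemize}
Hence \(\p_n(\alpha)\) is the finitary projector determined by \(E_\alpha\).

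\emph{Monotonicity.} If \(\alpha \subseteq \beta\), then \(E_\alpha \subseteq E_\beta\). Therefore, pointwise, \(\p_n(\alpha)(\omega) \le \p_n(\beta)(\omega)\).

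\emph{Continuity.} Let \(X\) be a directed set of ideals, with union \(\bigvee X\). A compact element \(u \in E_{\bigvee X}\) has a witness \(a\) lying in some \(\alpha \in X\), so
\[ E_{\bigvee X} = \bigcup_{\alpha \in X} E_\alpha. \]
This union is directed. It remains to check that the projector associated with a directed union of compact-fixed-point sets is the pointwise supremum of the corresponding projectors. For that, I would evaluate at \(\omega\): the left-hand side is \(\bigvee \{u \in \omega \mid u \in \bigcup_\alpha E_\alpha\}\). Because that set is a directed union, the supremum commutes with the union, giving exactly \(\bigvee_\alpha \p_n(\alpha)(\omega)\). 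Suprema in \(\Proj \dom[n]\) of directed families are pointwise, as is standard for Scott's theorem, or can be checked directly using the same compact-fixed-point description.

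\emph{Main obstacle.} I expect the main obstacle to be making rigorous the identification of projectors with their compact-fixed-point sets at the level of suprema, that is, checking that directed sups in \(\Proj \dom[n]\) correspond to directed unions of the sets \(E\). The key lemmas about typing, \cref{thm:typing-lub,thm:typing-up}, handle the substantive part.
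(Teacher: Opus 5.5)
Your proof is correct and follows the paper's own argument: closure of \(E_\alpha\) under compatible binary suprema via \cref{thm:typing-up} and \cref{thm:typing-lub} (using that \(\alpha\) is an ideal), followed by continuity from the fact that directed suprema of finitary projectors correspond to unions of their sets of compact fixed points. You also check explicitly that \(\bot \in E_\alpha\) for \(n \ge 1\), via \(\bot \ty_n \bot\), which fills in a detail the paper leaves implicit.
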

\begin{proof}
  The set \(\{ u \in \comp[n] \mid \exists a \in \alpha, u \ty_n a \}\) is closed
  by conditional sup.  Indeed, let \(u_1, u_2 \in \comp[n]\) compatible such that, for
  \(i \in \{1, 2\}\), there exists \(a_i \in \alpha\) such that \(u_i \ty_n a_i\).  Because
  \(\alpha\) is an ideal, \(a_1 \compat_{\comp[n]} a_2\) and \(a_1 \vee a_2 \in
  \comp[n]\).  By
  \cref{thm:typing-up}, we have
  \[ u_1 \ty_n (a_1 \vee a_2) \qquad \text{and} \qquad u_2 \ty_n (a_1 \vee a_2)
  \]
  and by \cref{thm:typing-lub}, we conclude that \[
    (u_1 \vee u_2) \ty_n (a_1 \vee a_2)
  \]
  thus \[
    u_1 \vee u_2 \in \{ u \in \comp[n] \mid \exists a \in \alpha, u \ty_n a \}.
  \]
  Therefore, by \cref{thm:proj-comp-set}, this set is the set of compact fixed
  points of a finitary projector.

  Monotonicity is immediate, and continuity is a consequence of a fact that the supremum of a
  directed set \(\{p_i\}\) of finitary projectors in \(\Proj \dom[n]\), is the
  finitary projector
  \(p = \bigvee_{i} p_i\) whose set of compact fixed points \(C_{|p}\) is precisely
  the union of
  the \(C_{|p_i}\).
\end{proof}

There is a one-to-one relationship between:
\begin{itemize}
  \item finitary projectors on \(D\);
  \item embedding-projection pairs \(D' \triangleleft D\), up to isomorphism.
\end{itemize}

Every embedding-projection \((e, r) : D \triangleleft E\) induces a function
\(e_* : \Proj D \to \Proj E\), which transports every finitary projector \(p \in \Proj D\) to \[
  e_*(p) \coloneq e \circ p \circ r.
\]

Conversely, an embedding-projection \((e, r) : D \triangleleft E\) also induces a function
\(e^* : \Proj E \to \Proj D\) which transports every finitary projector \(q \in \Proj E\) to the
finitary projector whose set of compact fixed points is
\(\compacts{D} \cap e^{-1}(\compacts{E}_{|q})\),
which happens to be a Scott domain.

Note that \(e^*\) and \(e_*\) define an embedding-projection: if \(D \triangleleft E\), then
\(\Proj D \triangleleft \Proj E\).  This shows that \(\Proj : \DomL \to \DomL\) defines a
functor.

The property that \(u \ty_n a \iff u \ty_{n+1} a\) can be expressed in the following
diagrammatic way:
\begin{proposition}
  \begin{center}
    \begin{tikzcd}
      \dom[n+1] \ar[r, "\p_{n+1}"] & \Proj \dom[n+1] \ar[d, "e_n^*"] \\
      \dom[n] \ar[u, "e_n"] \ar[r, "\p_n"] & \Proj \dom[n]
    \end{tikzcd}
  \end{center}
\end{proposition}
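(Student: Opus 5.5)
The plan is to check commutativity of the square pointwise on compact fixed points, since by \cref{thm:proj-comp-set} a finitary projector on \(\dom[n]\) is uniquely determined by its set of compact fixed points. So fix \(\alpha \in \dom[n]\); we must show that \(e_n^*(\p_{n+1}(e_n(\alpha)))\) and \(\p_n(\alpha)\) have the same compact fixed points. The first step is to identify \(e_n\) concretely: under the equivalence \(\DomL \simeq \CUSL\), the embedding \(e_n : \dom[n] \to \dom[n+1]\) is induced by the inclusion \(\comp[n] \subseteq \comp[n+1]\), so \(e_n(\alpha)\) is the ideal of \(\comp[n+1]\) generated by \(\alpha\), namely \(\{b \in \comp[n+1] \mid \exists a \in \alpha, b \le a\}\), and on compact elements \(e_n\) is just the inclusion.

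Next, unfold both sides. By the definition of \(e^*\), the compact fixed points of \(e_n^*(q)\) are \(\comp[n] \cap e_n^{-1}(\compacts{\dom[n+1]}_{|q})\), which, since \(e_n\) is the inclusion on compacts, is \(\{u \in \comp[n] \mid u \in \compacts{\dom[n+1]}_{|q}\}\). Taking \(q = \p_{n+1}(e_n(\alpha))\), this set is \(\{u \in \comp[n] \mid \exists b \in e_n(\alpha), u \ty_{n+1} b\}\). The other side is \(\{u \in \comp[n] \mid \exists a \in \alpha, u \ty_n a\}\). So it remains to prove, for \(u \in \comp[n]\), that \(\exists a \in \alpha,\ u \ty_n a\) iff \(\exists b \in e_n(\alpha),\ u \ty_{n+1} b\).

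For the forward direction take \(b = a \in \alpha \subseteq e_n(\alpha)\) and apply \cref{thm:typing-inj}. For the converse, given \(b \in e_n(\alpha)\) with \(u \ty_{n+1} b\), pick \(a \in \alpha\) with \(b \le a\) in \(\comp[n+1]\); by \cref{thm:typing-up} at level \(n+1\), \(u \ty_{n+1} a\), and since \(u, a \in \comp[n]\), \cref{thm:typing-inj} gives \(u \ty_n a\).

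The only delicate point is the first step: making sure the abstract \(e_n\) and \(e_n^*\) coincide with the concrete descriptions (downward closure of \(\alpha\), preimage along the inclusion of compacts). If \(e_n(\alpha)\) were \(\alpha\) itself rather than its downward closure, the upward-closure argument via \cref{thm:typing-up} would be superfluous. In either case, the genuine content is \cref{thm:typing-inj}, which is assumed here.
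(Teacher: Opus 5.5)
Your proposal is correct and is essentially the paper's argument. The paper only presents this square as the diagrammatic form of the stability property \cref{thm:typing-inj}, and your computation of the compact fixed points of \(e_n^*(\p_{n+1}(e_n\,\alpha))\) and \(\p_n(\alpha)\) is the unfolding that this remark leaves implicit. The one point the paper glosses over is that \(e_n(\alpha)\) is the downward closure of \(\alpha\) in \(\comp[n+1]\) rather than \(\alpha\) itself, which matters since \(\comp[n]\) need not be downward closed in \(\comp[n+1]\); your appeal to \cref{thm:typing-up} handles this correctly.
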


Recall the application operator \(\dapp \in \dom \to \dom \to \dom\) from
\cref{sec:fix-solution}; concretely, \(\dapp(\dabs f,u) = f(u)\) and
\(\dapp(v,u) = \bot\) otherwise.  By continuity of \(\Proj\), the maps \(\p_n\) lift
to a single \(\p \colon \dom \rightarrow \Proj \dom\), satisfying the following
equations:
\begin{align*}
  \p\;\duniv\;\duniv &= \duniv \\
  \p\;\duniv\;(\dpi(\alpha, \beta))
  &= \dpi (\p\;\duniv\;\alpha, u \mapsto
  \p\;\duniv(\beta(\p\;\alpha\;u))) \\
  \p\;(\dpi(\alpha, \beta))\;f &= u\mapsto
  \p\;(\beta\;(\p\;\alpha\;u))\;(\dapp(f,\p\;\alpha\;u)) \\
  \p\;\alpha\;\beta &= \bot & \text{in all other cases.}
\end{align*}

Composing the embeddings \(e_n\) (resp.\ the projections \(r_n\)) yields, for each
\(n\), the injection of \(\dom[n]\) into the colimit and the corresponding projection,
\[
  e_n^\infty \colon \dom[n] \to \dom, \qquad r_n^\infty \colon \dom \to \dom[n],
\]
which again form an embedding--projection pair:
\(r_n^\infty \circ e_n^\infty = \mathrm{id}_{\dom[n]}\) and
\(\bigvee_n e_n^\infty \circ r_n^\infty = \mathrm{id}_{\dom}\).

Using these maps, the commutation of the diagram above can be refined into the
following pointwise identity, relating the global projector \(\p\) to its finite
stages.
\begin{proposition}
  For all \(a, x \in \dom[n]\),
  \[
    r_n^\infty\big(\p\;(e_n^\infty\,a)\;(e_n^\infty\,x)\big) = \p_n\;a\;x.
  \]
\end{proposition}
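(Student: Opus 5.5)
We prove the identity by passing to compact elements and using the characterisation of \(\p\) as the limit of the \(\p_m\). We first make the definition of the global \(\p\) explicit. By continuity of \(\Proj\), for \(\alpha \in \dom\) the projector \(\p\,\alpha\) is the one whose compact fixed points are
\[ \{ u \in \comp \mid \exists a \in \alpha,\ u \ty a \}, \]
where \(u \ty a\) means \(\exists m,\ u \ty_m a\). This set is the directed union over \(m\) of the sets \(\{u \in \comp[m] \mid \exists a \in r_m^\infty\alpha,\ u \ty_m a\}\); the union is directed by \cref{thm:typing-inj}. Applying \cref{thm:proj-comp-set}, we get
\[ \p\,\alpha\,\omega = \{ u \in \comp \mid u \in \omega,\ \exists a \in \alpha,\ u \ty a\}. \]

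Now fix \(a, x \in \dom[n]\). Viewed as ideals, \(e_n^\infty\) sends an ideal of \(\comp[n]\) to the ideal of \(\comp\) it generates. Since \(\comp[n]\subseteq\comp\) is an embedding, that ideal is just the downset in \(\comp\) of the same set. Dually, \(r_n^\infty\) intersects an ideal of \(\comp\) with \(\comp[n]\). Hence the left-hand side is
\[ \{ u \in \comp[n] \mid u \in \downarrow x,\ \exists a' \in \downarrow a,\ u \ty a' \}, \]
where \(\downarrow\) denotes the downset in \(\comp\) and \(a'\) ranges over \(\comp\). The right-hand side is \(\{u \in \comp[n] \mid u \in x,\ \exists a'' \in a,\ u \ty_n a''\}\). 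We compare the two sets.

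The inclusion \(\supseteq\) follows directly from \cref{thm:typing-inj}. For \(\subseteq\), let \(u \in \comp[n]\) with \(u \le w\) for some \(w \in x\), and \(u \ty a'\) with \(a' \le b\) for some \(b \in a\). Membership \(u\in x\) is immediate because \(x\) is downward closed in \(\comp[n]\) and \(\comp[n]\) is a sub-poset. The real issue is that \(a'\) may lie outside \(\comp[n]\). We avoid it by upward closure: by \cref{thm:typing-up}, \(u \ty a'\) and \(a' \le b\) give \(u \ty b\), with \(b \in \comp[n]\). Then \cref{thm:typing-inj} brings \(u \ty_m b\) for \(m\ge n\) down to \(u \ty_n b\). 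So \(a''=b\) works.

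The main obstacle is the first step, namely justifying that the abstractly defined colimit projector \(\p\) really has the compact-fixed-point description above. This requires the facts that \(\Proj\) preserves the colimit, and that \(e_m^*\) agrees with intersection with \(\comp[m]\), as in the square diagram already stated. Once that description is in hand, the comparison of the two sets is routine bookkeeping with the monotonicity and stability results.
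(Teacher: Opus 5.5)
\paragraph{The gap is in your computation of the left-hand side.} The value \(\p\;(e_n^\infty a)\;(e_n^\infty x)\) is not the set \(S=\{u\in\comp \mid u\in{\downarrow}x,\ \exists a'\in{\downarrow}a,\ u\ty a'\}\) of compact fixed points below \(e_n^\infty x\). It is the ideal \({\downarrow}S\) that this set generates. So applying \(r_n^\infty\) gives \({\downarrow}S\cap\comp[n]\), not \(S\cap\comp[n]\). Likewise, the right-hand side is the ideal of \(\comp[n]\) generated by \(S_n=\{u\in\comp[n]\mid u\in x,\ \exists a''\in a,\ u\ty_n a''\}\).

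What you actually prove is \(S\cap\comp[n]=S_n\). That amounts to \((e_n^\infty)^*(\p\;(e_n^\infty a))=\p_n\;a\), which is the commuting square stated just before, lifted to the colimit. The proposition is the stronger pointwise claim \(r_n^\infty\circ\p\;(e_n^\infty a)\circ e_n^\infty=\p_n\;a\). For an arbitrary finitary projector \(q\) one only has \(e^*(q)\le r\circ q\circ e\), and \(r\circ q\circ e\) need not even be idempotent.

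This cannot be dismissed in the present model. Because \(\cuniv\ty\cuniv\), being typed at a code in \(\comp[n]\) does not bound the stage of an element, so \(S\) genuinely contains elements outside \(\comp[n]\). For instance, take \(t=\cpi(\cuniv,\_\mapsto\cuniv)\in\comp[2]\). The code \(\cpi(\cuniv,\{t\mapsto\cuniv\})\) is typed at \(\cuniv\) and lies below \(t\), yet it is not in \(\comp[2]\). Your \(\subseteq\) argument only treats elements of \(\comp[n]\) that are themselves typed. It says nothing about elements of \(\comp[n]\) lying below such a deeper typed element.

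\paragraph{What is missing.} You need a lemma saying that stage-\(n\) approximants of a typed element are dominated by typed stage-\(n\) elements. Precisely: if \(u\in\comp\), \(u\ty b\) with \(b\in\comp[n]\), and \(v\in\comp[n]\) with \(v\le u\), then there is \(v'\in\comp[n]\) with \(v\le v'\le u\) and \(v'\ty_n b\).

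With this lemma the inclusion goes through:
\begin{enumerate}
\item from \(u\ty a'\le b\in a\), you get \(u\ty b\) by \cref{thm:typing-up};
\item the lemma then gives \(v'\ty_n b\);
\item \(v'\le u\le w\in x\) gives \(v'\in x\), hence \(v\le v'\in S_n\).
\end{enumerate}

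This lemma does not follow from monotonicity and stability; it needs its own induction on \(n\). The delicate case is the condition \(\domain{f}{a}\) for the projection of \(\cabs(f)\), or of the codomain function of a \(\cpi\)-code. The function \(f\) may have a presentation with inputs outside \(\comp[n-1]\). So for \(z\in\comp[n-1]\), the witness \(y_0\le z\) with \(y_0\ty a\) and \(f(z)=f(y_0)\) may itself lie outside \(\comp[n-1]\). Its projection \(r_{n-1}^\infty y_0\) can then miss exactly the inputs that produce \(f(y_0)\). You must instead find a typed element of \(\comp[n-1]\) between \(y_0\) and \(z\). This is a \emph{rounding-up} property, to be proved simultaneously with the projection lemma.

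This is the real content of the proposition. The step you flag as the main obstacle, the compact-fixed-point description of \(\p\), is essentially the definition of the lift. The same issue is why the paper notes that \(r_n^\infty\circ\p\;(e_n^\infty a)\circ e_n^\infty\) has to be shown, by induction on \(n\), to be a finitary projector.
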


This gives an alternative to the definitions of \(\p_n\) and of the typing relation: one may
first define \(\p \colon \dom \to \dom \to \dom\) directly by the recursive
equations above, then
\emph{define} \(\p_n\) from \(\p\) by
\(\p_n\;a\;x \coloneq r_n^\infty\big(\p\;(e_n^\infty\,a)\;(e_n^\infty\,x)\big)\), and finally
prove, by induction on \(n\), that each \(\p_n\,a\) is a finitary projection.

\subsection{The Domain Model}
\label{sec:domain-model-model}

\begin{figure}
  \begin{mathpar}
    % \jform{\typing{\Gamma}{A}}[\(A\) is a type in \(\Gamma\)]
    % %
    % \inferdef[FunTy]{\typing{\Gamma}{A} \\ \typing{\Gamma, x : A}{B}}
    % {\typing{\Gamma}{\P x : A.B}} \and
    % \inferdef[UnivTy]{\ctxty{\Gamma}}{\typing{\Gamma}{\univ}} \\
    \jform{\typing{\Gamma}{M}[A]}[\(M\) has type \(A\) in \(\Gamma\)]
    \inferdef[Univ]{\ctxty{\Gamma}}{\typing{\Gamma}{\univ}[\univ]} \and
    \inferdef[Pi]{\typing{\Gamma}{A}[\univ] \\ \typing{\Gamma, x :
    A}{B}[\univ]}
    {\typing{\Gamma}{\P_{x : A}B}[\univ]} \and
    \inferdef[Lam]{\typing{\Gamma}{A}[\univ] \\ \typing{\Gamma, x : A}{M}[B]}
    {\typing{\Gamma}{\l_{x : A}M}[\P_{x : A}B]} \and
    \inferdef[App]{\typing{\Gamma}{M}[\P_{x : A}B] \\ \typing{\Gamma}{N}[A]}
    {\typing{\Gamma}{M \, N}[\subs{B}{N}]} \\
    \jform{\conv{\Gamma}{M}{N}[A]}[\(M\) and \(N\) are definitionally equal at
    type \(A\) in \(\Gamma\)]
    \inferdef[β]{
      \typing{\Gamma}{A}[\univ] \\
      \typing{\Gamma, x : A}{M}[B] \\
      \typing{\Gamma}{N}[A]
    }{
      \conv{\Gamma}{(\l_{x : A}M)\,N}{\subs{M}{N}}[\subs{B}{N}]
    } \label{rule:beta-fun} \and
    \inferdef[η]{\conv{\Gamma, x : A}{M\,x}{N\,x}[B]}
    {\conv{\Gamma}{M}{N}[\P_{x : A}B]}
    \label{rule:eta-fun}
  \end{mathpar}

  \caption{Selected typing rules of \MLTTo{}}
  \label{fig:syntax}
  \Description{Typing and conversion rules for a version of MLTT with a Π type and
    a universe;
  the single universe types itself, and we have β and η rules for conversion of functions.}
\end{figure}

The next step is to use the structure built in \cref{sec:fix-solution} to
construct a model of
dependent type theory, and relate it to the semantic typing of
\cref{sec:semantic-typing}. The
syntax of our type theory, \MLTTo{}, is in \cref{fig:syntax} --~we omit the definition of
substitution \(\subs{M}{N}\), which substitutes the last variable in scope, and
congruence rules for definitional equality. We write \(\tm\) for the set of untyped terms,
and interpret these directly as elements of \(\dom\) –~ill-typed terms will simply
be interpreted using \(\bot\).

\begin{definition}[Environments]
  A \emph{domain environment} \(\rho \in \env_{\dom}\)
  is a mapping of variables to elements of \(\dom\), and a \emph{compact environment}
  \(\rho \in \env_{\comp}\) is a mapping of variables to elements of \(\comp\).
  The \emph{extension} \((\rho, x \mapsto a)\) maps \(x\) to \(a\) and any other
  variable \(y\) to
  \(\rho(y)\). Environments are ordered pointwise, and have a least element
  \(\bot \coloneq \_ \mapsto \bot\).
\end{definition}

With the idea that domain elements are ideals of compact elements, we can directly
describe the semantics of a term (by induction on it) as a set of compact elements,
which we then
show to be an ideal, see the left side of \cref{fig:sem-compacts}.
Alternatively, we
can describe the semantics directly in terms of domain elements, making it clear that the
definition is the natural one. This characterisation is on the right side of
\cref{fig:sem-compacts}. More precisely if \(\rho \in \env[\dom]\), we have
\[\bigcup_{\rho' \in \env[\comp], \rho' \le \rho} \sem{t}[\rho'] \quad = \quad \sem{t}[\rho]\]
where the left semantics is in terms of \(\ideals{\comp}\) and the right is in
terms of \(\dom\).

\begin{figure}
  \setlength{\arraycolsep}{6pt}
  \renewcommand{\arraystretch}{1.5}
  \[
    \begin{array}{rl|rl}
      \sem{\_}[\_]           \in & \tm \to \env[\comp] \to \ideals{\comp} &
      \sem{\_}[\_]           \in & \tm \to \env[\dom] \to \dom \phantom{\ideals{C}}
      \\[.5em]

      \sem{x}[\rho] \coloneq & \{u \in \comp \mid \rho(x) \le u\} &
      \sem{x}[\rho] = & \rho(x) \\

      \sem{\univ}[\rho]      \coloneq & \{\bot, \cuniv\} &
      \sem{\univ}[\rho]      = & \duniv \\

      \sem{\P_{x : A}B}[\rho] \coloneq &
      \left\{\cpi(a,b) \mid a \in \sem{A}[\rho] \wedge \domain{b}{a} \right. &
        \sem{\P_{x : A}B}[\rho] = &
        \dpi(\sem{A}[\rho], \\[-.2em]
          & \hfill \left.
          \wedge~\forall_{u : a} b~u \in \sem{B}[\rho, x
        \mapsto u]\right\}
        \cup \{\bot\}
      && \hfill u \mapsto \sem{B}[\rho, x \mapsto \p{\sem{A}[\rho]}{u}])  \\

      \sem{\l_{x : A}M}[\rho] \coloneq &
      \left\{\cabs(f) \mid
        \exists_{a \in \sem{A}[\rho]}\left(\domain{f}{a} \right.\right. &
          \sem{\l_{x : A}M}[\rho] = & \dabs(u \mapsto \sem{M}[\rho, x
          \mapsto \p{\sem{A}[\rho]}{u}]) \\[-.4em]
          & \hfill \left.\left.
          \wedge~\forall_{u : a} f\,u \in \sem{M}[\rho, x \mapsto
      u]\right)\right\} \\

      \sem{M\,N}[\rho]       \coloneq &
      \left\{ \capp(u,v) \mid u \in \sem{M}[\rho] \wedge v \in \sem{N}[\rho]\right\} &
      \sem{M\,N}[\rho] = & \dapp(\sem{M}[\rho],\sem{N}[\rho])
    \end{array}
  \]

  \caption{Definition of the semantics of a term as a predicate on compacts, and
  description as a domain element}
  \label{fig:sem-compacts}
  \Description{On the right, the definition in terms of domain elements is
  straightforward: every syntactic operation is mapped to the corresponding
  domain operation. On the left, the definition in terms of compact elements
  uses subsets and the \(\dom\) predicate to describe the ideals.}
\end{figure}

\begin{lemma}[Semantic substitution]
  For any terms \(M\) and \(N\), and compact environment \(\rho\), \[
    \sem{M[N/x]}[\rho] = \sem{M}[\rho,x\mapsto\sem{N}[\rho]]
  \]
\end{lemma}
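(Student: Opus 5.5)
Since \(\rho\) is a compact environment but \(\sem{N}[\rho]\) is an ideal, the right-hand side extends \(\rho\) with an ideal. Using the \(\dom\)-level semantics, I read it as \(\bigcup_{w \in \sem{N}[\rho]} \sem{M}[\rho, x\mapsto w]\), consistent with the identity \(\sem{t}[\rho] = \bigcup_{\rho'\le\rho}\sem{t}[\rho']\) stated before \cref{fig:sem-compacts}. I will prove the more general statement where \(\rho\) is any domain environment, interpreted through this union, and argue by structural induction on \(M\), generalising over the environment (and, in the binder case, over \(N\) after weakening). The first step is a small monotonicity/continuity lemma, proved by a separate induction on terms. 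It says that \(\rho\le\rho'\) implies \(\sem{M}[\rho]\subseteq\sem{M}[\rho']\), and that \(\sem{M}[\rho]\) is the directed union of \(\sem{M}[\rho']\) over compact \(\rho'\le\rho\) that differ from \(\bot\) only on the free variables of \(M\). Together with the routine facts that renaming and weakening do not change the semantics, this lemma handles all the bookkeeping.

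The base cases are immediate. For a variable \(x\), both sides are \(\sem{N}[\rho]\), since \(\bigcup_{w\in\sem{N}[\rho]}\{u \mid w\le u\}\)... — more precisely, the downward-closed union is \(\sem{N}[\rho]\) itself. For another variable \(y\), and for \(\univ\), \(N\) plays no role. For application, \(\sem{(M_1M_2)[N/x]}[\rho]=\{\capp(u,v)\mid u\in\sem{M_1[N/x]},v\in\sem{M_2[N/x]}\}\), and by induction this is a union over \(w_1,w_2\in\sem{N}[\rho]\). Since \(\sem{N}[\rho]\) is an ideal, \(w_1\vee w_2\in\sem{N}[\rho]\), so monotonicity lets me replace the pair by the single \(w=w_1\vee w_2\). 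This gives the union over one \(w\), which is exactly the right-hand side.

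The main obstacle is the binder cases \(\P_{y:A}B\) and \(\l_{y:A}M\). Take the \(\P\) case (the \(\l\) case is identical). An element \(\cpi(a,b)\) of the left side requires \(a\in\sem{A[N/x]}[\rho]\), and for each of the finitely many \(u\ty a\) relevant to \(b\), \(b\,u\in\sem{B[N/x]}[\rho,y\mapsto u]\). Applying induction to \(B\) with environment \((\rho,y\mapsto u)\) is legitimate because \(y\notin\mathrm{fv}(N)\) after \(\alpha\)-renaming, so \(\sem{N}[\rho,y\mapsto u]=\sem{N}[\rho]\). This gives one witness \(w_u\in\sem{N}[\rho]\) for each \(u\), and another for \(a\). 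Here I use that \(b\) is a step function: by \cref{thm:fun-typing} it has a finite presentation whose arguments \(u_i\) are typed by \(a\), and \(\domain{b}{a}\) reduces all other \(u\) to some \(v\le u\) with \(v \ty a\) and \(b(u)=b(v)\). So only finitely many \(u\) matter, and \(w=\bigvee_i w_{u_i}\vee w_a\) exists in the ideal \(\sem{N}[\rho]\). Monotonicity then gives \(\cpi(a,b)\in\sem{\P_{y:A}B}[\rho,x\mapsto w]\). The converse inclusion is direct: fix one \(w\) and apply the induction hypothesis pointwise. The point to check carefully is precisely this finiteness, which is what makes the extensional quantifier \(\forall_{u:a}\) compatible with taking the union over \(w\).
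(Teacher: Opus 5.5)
Your proof is correct and follows the same route as the paper, whose entire proof is ``by structural induction on \(M\)''; your monotonicity lemma and your use of a presentation of \(b\) with all \(u_i \ty a\) supply exactly the details the paper leaves implicit in the binder cases. One step needs tightening: the \(v\) supplied by \(\domain{b}{a}\) need not be one of the \(u_i\), so it does not by itself extend membership from the \(u_i\) to an arbitrary \(u \ty a\). What does is that \(b(u)=\bigvee\{v_i \mid u_i\le u\}\), where each \(v_i\le b(u_i)\in\sem{B}[\rho,x\mapsto w,y\mapsto u_i]\subseteq\sem{B}[\rho,x\mapsto w,y\mapsto u]\) by monotonicity in \(y\), and the last set is an ideal, hence downward closed and closed under finite joins.
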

\begin{proof}
  By structural induction on \(M\).
\end{proof}

This semantics, a priori defined on untyped terms, is in fact sound, in that the
denotation of a well-typed term is well-typed.

\begin{definition}
  An environment \(\rho\) \emph{fits} a context \(\Gamma\) if for all \((x : A)
  \in \Gamma\) we
  have that \(\sem{A}[\rho] \ty \cuniv\) and \(\rho(x) \ty \sem{A}[\rho]\).
\end{definition}

\begin{theorem}[Soundness of the semantics
  \lean{Sound}{2900}{LE_Interp.sound}\,\rocq{finelt/typing_semantics}{1480}{typing_EvalRel}\,\agda{ID/Model/Soundness}{108}{theorem1}]
  \label[theorem]{thm:soundness}
  Assume \(\rho\) fits \(\Gamma\). Then
  \begin{itemize}
    \item \(\typing{\Gamma}{M}[A]\) implies that \(\sem{M}[\rho] \ty \sem{A}[\rho]\);
    \item \(\conv{\Gamma}{M}{N}[A]\) implies \(\sem{M}[\rho] = \sem{N}[\rho]\).
  \end{itemize}
\end{theorem}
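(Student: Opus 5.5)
We prove both items at once by mutual induction on the typing and conversion derivations, generalising over all environments \(\rho\) fitting \(\Gamma\). Soundness of context extension is needed in the binder cases: if \(\rho\) fits \(\Gamma\), \(\typing{\Gamma}{A}[\univ]\), and \(u \ty \sem{A}[\rho]\), then \((\rho, x\mapsto u)\) fits \(\Gamma, x : A\). This follows from the induction hypothesis \(\sem{A}[\rho]\ty\duniv\), together with the fact that the semantics of a term does not depend on variables not occurring in it (a routine structural induction). We work mainly with the domain-level description on the right of \cref{fig:sem-compacts}. We use the equations for \(\p\) to read semantic typing as \(\p\,\overline{a}\,\overline{u} = \overline{u}\), i.e.\ \(\overline{u} \ty \overline{a}\) iff \(\overline{u}\) is a fixed point of the finitary projector \(\p\,\overline{a}\). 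When needed, we drop to compact elements via \cref{thm:fun-typing}, \cref{thm:typing-lub}, \cref{thm:typing-up} and \cref{thm:typing-inj}.

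For typing, the case \textsc{Univ} is \(\p\,\duniv\,\duniv=\duniv\). For \textsc{Pi} and \textsc{Lam} we use the equations for \(\p\) on \(\dpi\). The interpretation inserts \(\p\,\sem{A}[\rho]\) in front of the bound variable exactly so that the resulting function factors through the projector of the domain. We must show, for \(\sem{\P_{x:A}B}[\rho]\), that \(\p\,\duniv\,(\sem{A}[\rho]) = \sem{A}[\rho]\), which is the induction hypothesis. We must also show \(\p\,\duniv(\sem{B}[\rho,x\mapsto \p\,\sem{A}[\rho]\,u]) = \sem{B}[\rho,x\mapsto\p\,\sem{A}[\rho]\,u]\), which follows from the induction hypothesis for \(B\) applied to the fitting environment \((\rho, x\mapsto \p\,\sem{A}[\rho]\,u)\), since \(\p\,\alpha\,u \ty \alpha\) by idempotence. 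Idempotence of \(\p\,\sem{A}[\rho]\) then gives the required fixed-point equation for the codomain. \textsc{Lam} is identical, using \(\dapp(\dabs f,v)=f(v)\). For \textsc{App}, from \(\p(\dpi(\alpha,\beta))\,f=f\) we get \(\dapp(f,\p\,\alpha\,v)\ty \beta(\p\,\alpha\,v)\). With \(v=\sem{N}[\rho]\ty\alpha\), we have \(\p\,\alpha\,v=v\). Semantic substitution then identifies \(\beta(v)\) with \(\sem{\subs{B}{N}}[\rho]\). Conversion rule applications (type conversion) use the conversion half of the induction hypothesis to rewrite the type.

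For conversion, the equivalence and congruence rules are immediate since the semantics is compositional. For \textsc{β}, \(\dapp(\dabs(u\mapsto \sem{M}[\rho,x\mapsto\p\,\sem{A}[\rho]\,u]),\sem{N}[\rho]) = \sem{M}[\rho,x\mapsto \p\,\sem{A}[\rho]\,\sem{N}[\rho]]\). By the typing induction hypothesis \(\sem{N}[\rho]\ty\sem{A}[\rho]\), the projector fixes \(\sem{N}[\rho]\), and semantic substitution concludes.

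The main obstacle is \textsc{η}. The hypothesis only gives \(\dapp(\sem{M}[\rho],u)=\dapp(\sem{N}[\rho],u)\) for \(u \ty \sem{A}[\rho]\) (via fitting environments \((\rho,x\mapsto u)\)), not for arbitrary \(u\), and \(\sem{M}[\rho]\) need not syntactically be a λ. We would first also carry the typing invariants \(\sem{M}[\rho]\ty\dpi(\alpha,\beta)\) (available from validity of conversion, which we add as part of the inductive statement, or from a presupposition lemma). By the \(\p\) equation at \(\dpi\), both sides then satisfy \(\sem{M}[\rho] = u\mapsto \p(\beta(\p\,\alpha\,u))(\dapp(\sem{M}[\rho],\p\,\alpha\,u))\). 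This expresses each function purely through its values on elements of type \(\alpha\), where the hypothesis applies, so the two are equal. Here the identification \(\cabs(\bot)=\bot\) in \(\comp\) is essential: a typed element of a \(\Pi\)-type is always \(\bot\) or an abstraction, so the functional extensionality in the domain really yields equality of elements of \(\dom\). If working at compact level, this step uses \cref{thm:fun-typing}, choosing representations whose inputs are typed by \(a\).
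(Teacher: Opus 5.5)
Your proof is correct and follows the paper's route. The paper proves soundness by induction on the typing and conversion derivations, and its \(\eta\) case likewise relies on both sides being well-typed (the mechanisations build such presuppositions into the conversion rules) and on \(\dabs(\bot)=\bot\); reasoning with the domain-level equations for \(\p\) rather than with compact elements, as the mechanisations do, is only a change of presentation. One small correction: the congruence rules under binders do not follow from compositionality alone, since the induction hypothesis for the body only holds in fitting environments, i.e.\ at typed arguments. They do follow from the same projection onto \(\sem{A}[\rho]\) in front of the bound variable that you already use for \textsc{Pi} and \textsc{Lam}, together with \(\sem{A}[\rho]\ty\duniv\), which is again supplied by a presupposition.
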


Note that, thanks to the use of a coalesced sum to interpret λ-abstraction in
\cref{eq:dom-eq},
the model validates \(\dabs(\bot) = \bot\), which is critical to interpret
\ruleref{rule:eta-fun}.

\subsection{The Logical Relation}
\label{sec:logical-relation}

The model of \cref{sec:domain-model-model} already gives us some of the inversions
we are after.

\begin{theorem}[Disjointness of type constructors
  \lean{Consequences}{520}{sort_forallE_inv'}\,\rocq{finelt/adequacy}{2763}{tuniv_not_tpi}]
  For any (well-formed) context \(\Gamma\) and types \(\typing{\Gamma}{A}[\univ]\) and
  \(\typing{\Gamma,x:A}{B}[\univ]\),
  we have \(\nconv{\Gamma}{\P_{x : A}B}{\univ}[\univ]\).
\end{theorem}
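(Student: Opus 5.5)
We argue by contradiction using the soundness theorem (\cref{thm:soundness}). Suppose \(\conv{\Gamma}{\P_{x : A}B}{\univ}[\univ]\). We pick an environment \(\rho\) that fits \(\Gamma\); soundness then gives \(\sem{\P_{x : A}B}[\rho] = \sem{\univ}[\rho] = \duniv\). We then check that the left-hand side contains a compact element of the form \(\cpi(a,b)\), which cannot lie in \(\duniv = \{\bot, \cuniv\}\). This is the required contradiction.

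The first step is to produce a fitting environment for an arbitrary well-formed \(\Gamma\). The natural candidate is the least environment \(\bot\). We must check, for each \((y : C) \in \Gamma\), that \(\sem{C}[\bot] \ty \cuniv\) and \(\bot \ty \sem{C}[\bot]\). We proceed by induction on the well-formed context, using soundness on the prefix: from \(\typing{\Gamma'}{C}[\univ]\) and \(\bot\) fitting \(\Gamma'\) we get \(\sem{C}[\bot] \ty \sem{\univ}[\bot] = \duniv\), which gives the first condition. The second condition follows from rule \textsc{Bot} of \cref{fig:sem-typing}: \(\bot \ty_n a\) whenever \(a \ty_n \cuniv\), lifted to domain elements via the definition of \(\ty\) on \(\dom\) (for \(\bot \in \bot\), choose a witness \(a\) in \(\sem{C}[\bot]\) with \(a \ty \cuniv\)). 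This step only needs soundness and rule \textsc{Bot}, so we do not expect difficulty.

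The second step is to show that \(\sem{\P_{x:A}B}[\rho]\) contains some \(\cpi(a,b)\). By the compact description in \cref{fig:sem-compacts}, it suffices to take \(a = \bot \in \sem{A}[\rho]\) and \(b\) the constant-\(\bot\) step function, i.e.\ the empty presentation. Then \(\domain{b}{\bot}\) holds, taking \(v = \bot\) together with \(\bot \ty \bot\) via \textsc{Bot}, since \(\bot \ty \cuniv\) also by \textsc{Bot}. The requirement \(b\,u = \bot \in \sem{B}[\rho, x\mapsto u]\) holds because semantics are ideals and therefore contain \(\bot\). Hence \(\cpi(\bot,\bot) \in \sem{\P_{x:A}B}[\rho]\). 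This argument does not even use the typing of \(A\) and \(B\).

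Finally, \(\cpi(\bot,\bot)\) is distinct from \(\bot\) and from \(\cuniv\) in \(\comp\): the coalesced sum identifies only \(\cabs(\bot)\) with \(\bot\), and \(\cpi\) is lifted. So \(\cpi(\bot,\bot) \notin \duniv\), contradicting the equation obtained from soundness. The only mildly delicate point is the first step, namely that the fitting condition really holds for \(\bot\) by induction on \(\Gamma\), with soundness invoked at each extension. If \(\bot\) were to cause trouble, one would instead construct the environment variable by variable, picking \(\bot \in \sem{C}[\rho]\) each time, which amounts to the same construction.
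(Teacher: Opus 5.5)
Your proposal is correct and follows the paper's proof. Like the paper, you apply soundness with the everywhere-\(\bot\) environment to get \(\sem{\P_{x : A}B}[\bot] = \duniv\), and then derive a contradiction from the disjointness of \(\cpi\) and \(\cuniv\) in the model. You also spell out two points the paper leaves implicit: why \(\bot\) fits every well-formed context (induction on \(\Gamma\), invoking soundness on each prefix), and the explicit witness \(\cpi(\bot,\bot)\), which lies in the left-hand side but not in \(\duniv\).
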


\begin{proof}
  Assume \(\conv{\Gamma}{\P_{x : A}B}{\univ}[\univ]\). The environment
  \(\bot \coloneq \_ \mapsto \bot\) fits all contexts, thus by \cref{thm:soundness}
  \(\sem{\P_{x :A}B}[\bot] = \sem{\univ}[\bot]\) and, computing both semantics,
  \(\dpi\bigl(\sem{A}[\bot],\ u \mapsto \sem{B}[\bot, x \mapsto \p{\sem{A}[\bot]}{u}]\bigr) =
  \duniv.\) But in the model the \(\dpi\) and \(\duniv\) constructors are
  disjoint, so this is
  impossible.
\end{proof}

However, things are not so simple for injectivity properties, for two reasons.
First, instead of falsity we need to conclude definitional equality.
But even if the \(\dpi\) constructor is injective in the model, this will only
imply semantic equalities, from which it is not possible to go back to
the syntax. Second, if we wish to deduce injectivity in non-empty contexts,
we also need to take care of open terms, and thus neutrals. This is why we
will build a logical relation on top of our model.

\paragraph{Structure} The main specificity of our logical relation is the
structure on which it is built. In the absence of dependent types,
a logical relation is a predicate/relation on terms defined by recursion on
types. With dependent types, a naïve version of this fails, because types
themselves compute and thus one cannot simply define the predicate by recursion
on it. Instead, the standard solution is to do recursion on a semantic object
associated to each type –~its normal form. Here, we do not have normal forms,
but domain elements play a similar role.

Concretely, we define two relations
\(\TValEq{\Gamma}{A}{A'}{a}\) and \(\ValEq{\Gamma}{M}{M'}{A}{u}{a}\), where
\(A\), \(A'\), \(M\) and \(M'\) are syntactic terms in a common context \(\Gamma\),
and \(a, u \in \comp[n]\).
%% Both relations are moreover parametrised by an
%% environment \(\rho\) that \emph{fits} \(\Gamma\), that is, interprets each free
%% variable of \(\Gamma\) by a finite element; we keep \(\rho\) implicit in the
%% notation.\mc{This is technically not true; in the formalization there is no such
%%   $\rho$. Conceptually we could say that one exists but we either always use $\Gamma=[]$
%%   or otherwise for nonempty $\Gamma_0$ we just use $\rho=\bot$. This base context never
%%   changes even when we enter a binder, so we don't need to provide interpretations for
%% the free variables in it.}
The intent is that \(A\) and \(A'\)
(resp.\ \(M\) and \(M'\)) are related types (resp.\ terms of \(A\)),
where the relation is constructed by following the structure of \(a\) and \(u\).
As precondition for this relation to make sense, we should have \(u \ty_{n} a\),
\(a\) should be in the interpretation of \(A\) and \(A'\),
and \(u\) in that of \(M\) and \(M'\). More precisely, we should have
\(a \le \sem{A}[\bot]\) and \(a \ty_n \cuniv\) (and similarly for the other terms).

Variables in $\Gamma$ are all interpreted by $\bot$, and so neutral
terms%
\footnote{Eliminations of variables, such as \(x\,0\).}
will be interpreted by $\bot$ as well. This is a sort of compromise between
closed and open logical relations: we still work with terms in an open context,
and will thus be able to deduce definitional inversion properties in an open
context too; but because we interpret neutrals in this degenerate fashion,
we can avoid the weight of Kripke-style quantifications.
This however means that our technique does not directly apply to obtain
definitional inversion properties for neutrals, such as the fact that
\(x~M \convop y~N\) implies that \(x\) and \(y\) are the same variable, and
that \(M \convop N\). We come back to this issue in the conclusion.
\begin{comment}
which are defined under the following constraints
\begin{align*}
  \TVal{\Gamma}{A}{a} &\Rightarrow a \leq \sem{A}_n \\
  \Val{\Gamma}{M}{A}{u}{a} &\Rightarrow
  \begin{cases}
    \typing{\Gamma}{M}[A] \\
    u \leq \sem{M}_n \\
    a \leq \sem{A}_n \\
    u \ty_n a
  \end{cases} \\
  \TValEq{A_1}{A_2}{a} &\Rightarrow a \leq \sem{A}_n \\
  \ValEq{M_1}{M_2}{A}{u}{a} &\Rightarrow
  \begin{cases}
    a \leq \sem{A}_n \\
    u \leq \sem{M}_n \\
    u \ty_n a
  \end{cases}
\end{align*}
\end{comment}

\paragraph{Head reduction} The relations are stated up to head reduction. We
write \(M \sred M'\) for a single \emph{head-reduction} step, generated by
\(\beta\)-reduction and its congruence in function position,
\begin{mathpar}
  (\l_{x : A}M)\,N \sred \subs{M}{N} \and
  \inferdef{M \sred M'}{M\,N \sred M'\,N}
\end{mathpar}
and \(M \red M'\) for its reflexive--transitive closure.  The \emph{typed} head
reduction \(\convred{\Gamma}{M}{M'}[A]\) is defined to hold when \(M \red M'\)
\emph{and} \(\conv{\Gamma}{M}{M'}[A]\).  We build the conversion into the
definition because subject reduction is not yet available at this point, so it
cannot be recovered from \(M \red M'\) and \(\typing{\Gamma}{M}[A]\) alone.

\paragraph{Definition} We abbreviate the diagonals of the two relations by
\[
  \TVal{\Gamma}{A}{a} \defeq \TValEq{\Gamma}{A}{A}{a}
  \qquad\text{and}\qquad
  \Val{\Gamma}{M}{A}{u}{a} \defeq \ValEq{\Gamma}{M}{M}{A}{u}{a}.
\]
The four relations \(\TVal{\Gamma}{A}{a}\), \(\TValEq{\Gamma}{A}{A'}{a}\),
\(\Val{\Gamma}{M}{A}{u}{a}\) and \(\ValEq{\Gamma}{M}{M'}{A}{u}{a}\) can be seen as
refinements of the four judgements of dependent type theory,
\(\typing{\Gamma}{A}[\univ]\), \(\conv{\Gamma}{A}{A'}[\univ]\),
\(\typing{\Gamma}{M}[A]\) and \(\conv{\Gamma}{M}{M'}[A]\) respectively, each
annotated with a semantic witness \(a, u \in \comp[n]\) recording the shape of
the (in)equality.

As explained, the definition goes by induction on the stage \(n\), and then by
case analysis on the semantic type \(a\).
In the base case where it is \(\bot\), the relation is
degenerate.
\begin{equation*}
  % \TVal{\Gamma}{A}{\bot} &\coloneq \T \\
  \TValEq{\Gamma}{A}{A'}{\bot} \defeq \T \hspace{.2\textwidth}
  % \Val{\Gamma}{M}{A}{u}{\bot} &\coloneq \T \\
  \ValEq{\Gamma}{M}{M'}{A}{u}{\bot} \defeq \T
\end{equation*}
This is very reminiscent of step-indexed logical relations,
which have a similar degenerate base case. Indeed we can think of this as a
generalization of step-indexing where the step indexes are replaced by a partial
order. We will not explore the connection further in this work.

In the case where \(a\) is a Π-type, two types are related if they both head-reduce
to syntactic Π-types with syntactically equal and semantically
related domains and codomains, the latter incorporating the presence of a
binder by requiring that the codomains map related inputs to related outputs.
\begin{align*}
  &\TValEq[n+1]{\Gamma}{T}{T'}{\cpi(a,b)} \quad\defeq\\
  &\hspace{1em}
  \begin{cases}
    \convred{\Gamma}{T}{\P_{x : A}B}[\univ] \qquad
    \convred{\Gamma}{T'}{\P_{x : A'}B'}[\univ] \\
    \conv{\Gamma}{A}{A'}[\univ] \qquad
    \conv{\Gamma,x:A}{B}{B'}[\univ] \qquad
    \TValEq{\Gamma}{A}{A'}{a} \\
    \forall M~M'~v,\ %
    \begin{cases}
      \presup{v \ty_n a} \quad
      \presup{\conv{\Gamma}{M}{M'}[A]} \\
      \ValEq{\Gamma}{M}{M'}{A}{v}{a}
    \end{cases}
    \Rightarrow\hspace{.5em}
    \begin{cases}
      \TValEq{\Gamma}{\subs{B}{M}}{\subs{B}{M'}}{b(v)} \\
      \TValEq{\Gamma}{\subs{B'}{M}}{\subs{B'}{M'}}{b(v)} \\
      \TValEq{\Gamma}{\subs{B}{M}}{\subs{B'}{M'}}{b(v)}
    \end{cases}
  \end{cases}
\end{align*}
Terms are related at function types if they map related inputs to related outputs:
\begin{align*}
  &\ValEq[n+1]{\Gamma}{M}{M'}{T}{u}{\cpi(a,b)}\quad \defeq\\
  &\hspace{1em}
  \begin{cases}
    \presup{\TVal[n+1]{\Gamma}{T}{\cpi(a,b)}} \qquad
    \convred{\Gamma}{T}{\P_{x : A}B}[\univ] \\
    \forall N~N'~w,\ %
    \begin{cases}
      \presup{w \ty_n a} \quad
      \presup{\conv{\Gamma}{N}{N'}[A]} \\
      \ValEq{\Gamma}{N}{N'}{A}{w}{a}
    \end{cases}
    \Rightarrow\hspace{.5em}
    \begin{cases}
      \ValEq{\Gamma}{M\,N}{M\,N'}{\subs{B}{N}}{\capp(u,w)}{b(w)} \\
      \ValEq{\Gamma}{M'\,N}{M'\,N'}{\subs{B}{N}}{\capp(u,w)}{b(w)} \\
      \ValEq{\Gamma}{M\,N}{M'\,N'}{\subs{B}{N}}{\capp(u,w)}{b(w)}
    \end{cases}
  \end{cases}
\end{align*}

Finally, when \(a\) is the universe, two types are related if they both reduce
to \(\univ\), and two terms are related, at a type which itself reduces to
\(\univ\), if they are related as types.
\begin{align*}
  \TValEq{\Gamma}{A}{A'}{\cuniv} &\defeq
  \convred{\Gamma}{A}{\univ}[\univ] \quad \wedge \quad
  \convred{\Gamma}{A'}{\univ}[\univ] \\
  \ValEq[n+1]{\Gamma}{M}{M'}{A}{u}{\cuniv} &\defeq
  \convred{\Gamma}{A}{\univ}[\univ] \quad \wedge \quad
  \TValEq[n+1]{\Gamma}{M}{M'}{u}
\end{align*}

\subsection{Properties of the logical relation}
\label{sec:lr-properties}

We state the properties for the two-sided relations \(\TValEq{\Gamma}{A}{A'}{a}\) and
\(\ValEq{\Gamma}{M}{N}{A}{u}{a}\); the statements for the diagonals
\(\TVal{\Gamma}{A}{a}\) and
\(\Val{\Gamma}{M}{A}{u}{a}\) are the special cases where the two syntactic
arguments coincide.

Recall from the discussion above that the relations carry their own well-formedness
conditions: \(\ValEq{\Gamma}{M}{N}{A}{u}{a}\) presupposes \(u \ty_n a\) and
\(a \ty_n \cuniv\), and \(\TValEq{\Gamma}{A}{A'}{a}\) presupposes \(a \ty_n \cuniv\).
Consequently, some of the premises below are redundant: they already follow from one
of the \(\TValEq{}{}{}{}\) or \(\ValEq{}{}{}{}{}{}\) hypotheses appearing in the same
rule.  We typeset such a premise in \presup{grey}.  Grey premises may be read as
mere \emph{presuppositions}, carried along for the record; only the black ones
constitute genuine proof obligations.  We nonetheless spell them out, because the
mechanisation manipulates them explicitly.

First come the presuppositions: a related pair of terms has a reflexively related
type, and terms related at the universe are exactly types related at the
corresponding code.
\begin{align*}
  \tag{IsType}
  \ValEq{\Gamma}{M}{N}{A}{u}{a}
  &\hspace{.5em}\Rightarrow\hspace{.5em}
  \TVal{\Gamma}{A}{a}
  \label{eq:is-type} \\
  \tag{ToType}
  \ValEq{\Gamma}{M}{N}{A}{u}{\cuniv}
  &\hspace{.5em}\Rightarrow\hspace{.5em}
  \TValEq{\Gamma}{M}{N}{u}
  \label{eq:to-type}
\end{align*}
On the type side, \(\TValEq{\Gamma}{A}{B}{\bot}\) always holds: this is the
degenerate clause of the definition. On the term side, the analogous statement
for a \(\bot\) \emph{term} witness is a genuine property.
\begin{align*}
  \tag{Bot}
  \presup{a \ty_n \cuniv} \qquad
  \presup{\TVal{\Gamma}{A}{a}}
  &\hspace{.5em}\Rightarrow\hspace{.5em}
  \ValEq{\Gamma}{M}{N}{A}{\bot}{a}
  \label{eq:bot}
\end{align*}
Both relations are partial equivalence relations.
\begin{align*}
  % \tag{Left}
  % \ValEq{\Gamma}{M}{N}{A}{u}{a}
  % &\hspace{.5em} \Rightarrow \hspace{.5em}
  % \Val{\Gamma}{M}{A}{u}{a}
  % \label{eq:left} \\
  % \tag{LeftTy}
  % \TValEq{\Gamma}{A}{B}{a}
  % &\hspace{.5em} \Rightarrow \hspace{.5em}
  % \TVal{\Gamma}{A}{a}
  % \label{eq:left-ty} \\
  \tag{Symm}
  \presup{\conv{\Gamma}{M}{N}[A]} \qquad
  \ValEq{\Gamma}{M}{N}{A}{u}{a}
  &\hspace{.5em} \Rightarrow \hspace{.5em}
  \ValEq{\Gamma}{N}{M}{A}{u}{a}
  \label{eq:symm} \\
  \tag{SymmTy}
  \TValEq{\Gamma}{A_1}{A_2}{a}
  &\hspace{.5em} \Rightarrow \hspace{.5em}
  \TValEq{\Gamma}{A_2}{A_1}{a}
  \label{eq:symm-ty} \\
  \tag{Trans}
  \begin{cases}
    \presup{\conv{\Gamma}{M_1}{M_2}[A]} \qquad
    \presup{\conv{\Gamma}{M_2}{M_3}[A]} \\
    \ValEq{\Gamma}{M_1}{M_2}{A}{u}{a} \qquad
    \ValEq{\Gamma}{M_2}{M_3}{A}{u}{a}
  \end{cases}
  &\hspace{.5em} \Rightarrow \hspace{.5em}
  \ValEq{\Gamma}{M_1}{M_3}{A}{u}{a}
  \label{eq:trans} \\
  \tag{TransTy}
  \TValEq{\Gamma}{A_1}{A_2}{a} \qquad
  \TValEq{\Gamma}{A_2}{A_3}{a}
  &\hspace{.5em} \Rightarrow \hspace{.5em}
  \TValEq{\Gamma}{A_1}{A_3}{a}
  \label{eq:trans-ty}
\end{align*}
A heterogeneous transitivity is moreover available when the second step is taken at
the universe code, with an arbitrary type witness \(s\) on the first:
\begin{align*}
  \tag{Trans'}
  \ValEq{\Gamma}{A_1}{A_2}{\univ}{a}{s} \qquad
  \ValEq{\Gamma}{A_2}{A_3}{\univ}{a}{\cuniv}
  &\hspace{.5em} \Rightarrow \hspace{.5em}
  \ValEq{\Gamma}{A_1}{A_3}{\univ}{a}{s}
  \label{eq:trans-prime}
\end{align*}
The term relation transports along related types:
\begin{align*}
  \tag{Conv}
  \TValEq{\Gamma}{A}{B}{a} \qquad
  \ValEq{\Gamma}{M}{N}{A}{u}{a}
  &\hspace{.5em} \Rightarrow \hspace{.5em}
  \ValEq{\Gamma}{M}{N}{B}{u}{a}
  \label{eq:conv}
\end{align*}
Furthermore, the relations are monotone in their semantic witnesses, both by
lowering and by increasing them.
\begin{align*}
  \tag{MonoTypeIncr}
  \begin{cases}
    \presup{u \ty_n a} \quad \presup{u \ty_n a'} \quad \presup{\TVal{\Gamma}{A}{a'}} \\
    a \leq a' \quad \quad \ValEq{\Gamma}{M}{N}{A}{u}{a}
  \end{cases}
  &\hspace{.5em} \Rightarrow \hspace{.5em}
  \ValEq{\Gamma}{M}{N}{A}{u}{a'}
  \label{eq:mono-type-incr}\\
  \tag{MonoTypeDecr}
  a \leq a' \quad \presup{u \ty_n a} \quad \presup{a' \ty_n \cuniv} \quad
  \ValEq{\Gamma}{M}{N}{A}{u}{a'}
  &\hspace{.5em} \Rightarrow \hspace{.5em}
  \ValEq{\Gamma}{M}{N}{A}{u}{a}
  \label{eq:mono-type-decr} \\
  \tag{MonoTypeDecrTy}
  a \leq a' \quad \presup{a \ty_n \cuniv} \quad \presup{a' \ty_n \cuniv} \quad
  \TValEq{\Gamma}{A}{B}{a'}
  &\hspace{.5em} \Rightarrow \hspace{.5em}
  \TValEq{\Gamma}{A}{B}{a}
  \label{eq:mono-type-decr-ty} \\
  \tag{MonoTerm}
  u \leq u' \quad \presup{u \ty_n a} \quad \presup{u' \ty_n a} \quad
  \ValEq{\Gamma}{M}{N}{A}{u'}{a}
  &\hspace{.5em} \Rightarrow \hspace{.5em}
  \ValEq{\Gamma}{M}{N}{A}{u}{a}
  \label{eq:mono-term}
\end{align*}
Finally, the relations are closed under joins of compatible witnesses:
\begin{align*}
  \tag{JoinTy}
  \begin{cases}
    \presup{a_1 \compat a_2} \quad \presup{a_1 \ty_n \cuniv} \quad \presup{a_2
    \ty_n \cuniv} \\
    \TValEq{\Gamma}{A}{B}{a_1} \quad
    \TValEq{\Gamma}{A}{B}{a_2}
  \end{cases}
  &\hspace{.5em} \Rightarrow \hspace{.5em}
  \TValEq{\Gamma}{A}{B}{a_1 \vee a_2}
  \label{eq:join-ty} \\
  \tag{Join}
  \begin{cases}
    \presup{u_1 \compat u_2} \quad \presup{u_1 \ty_n a} \quad \presup{u_2 \ty_n a} \\
    \ValEq{\Gamma}{M}{N}{A}{u_1}{a} \quad
    \ValEq{\Gamma}{M}{N}{A}{u_2}{a}
  \end{cases}
  &\hspace{.5em} \Rightarrow \hspace{.5em}
  \ValEq{\Gamma}{M}{N}{A}{u_1 \vee u_2}{a}
  \label{eq:join}
\end{align*}

A main consequence of the monotonicity and stability laws above is that the
relations do not depend on the stage, as long as the semantic witnesses are
available at it.

\begin{corollary}[Stability]
  For \(u, a \in \comp[n] \subseteq \comp[n+1]\) we have
  \[
    \begin{aligned}
      \TValEq[n]{\Gamma}{A}{A'}{a} &\quad\iff\quad \TValEq[n+1]{\Gamma}{A}{A'}{a} \\
      \ValEq[n]{\Gamma}{M}{M'}{A}{u}{a} &\quad\iff\quad \ValEq[n+1]{\Gamma}{M}{M'}{A}{u}{a}
    \end{aligned}
  \]
\end{corollary}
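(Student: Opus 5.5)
The plan is to argue by induction on \(n\), proving both equivalences simultaneously, and within each stage by case analysis on the shape of the semantic type \(a \in \comp[n]\). The witnesses \(u, a\) are literally the same compact elements whether viewed in \(\comp[n]\) or \(\comp[n+1]\), since the inclusion is an embedding. Hence the shape of \(a\) (\(\bot\), \(\cuniv\), \(\cpi(a_0,b)\)) does not change between stages, and the same clause of the definition applies on both sides. Any presupposition of the form \(v \ty_n a\) transfers across stages by \cref{thm:typing-inj}. The syntactic components of the clauses (typed head reductions, definitional equalities) do not mention the stage at all, so they are identical on both sides.

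The base cases are immediate. If \(a = \bot\), both relations are \(\T\) at every stage. If \(a = \cuniv\), the type relation is stage-independent by definition. The term relation unfolds to \(\convred{\Gamma}{A}{\univ}[\univ]\) together with \(\TValEq{\Gamma}{M}{M'}{u}\) at the stage below. Since \(u\) is then a type code in \(\comp[n]\), we conclude by the induction hypothesis for the type relation at the earlier stage.

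The substantial case is \(a = \cpi(a_0,b)\) with \(a_0 \in \comp[n-1]\) and \(b \in \comp[n-1]\Rightarrow\comp[n-1]\). At stage \(n+1\), however, the clause views \(a_0, b\) as elements of \(\comp[n]\), so the quantifications \(\forall M~M'~v\) range over \(v \in \comp[n]\) rather than \(v \in \comp[n-1]\). The direction from stage \(n+1\) to stage \(n\) is easy: restrict the quantifier to \(v \in \comp[n-1]\), and transport hypotheses and conclusions using the induction hypothesis at \(n-1\) for \(a_0\), for \(b(v)\), and for \(\capp(u,v)\). The converse direction is the main obstacle. We are given the property only for inputs \(v \in \comp[n-1]\) and must derive it for an arbitrary \(v \in \comp[n]\) with \(v \ty_n a_0\).

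For this converse, the plan is to exploit that \(b\) (and \(u\), as a step function of \(\comp[n-1]\)) depends only on the part of its argument lying in \(\comp[n-1]\). By \(\domain{b}{a_0}\) (and the analogue for \(u\)), there is \(v' \le v\) with \(v' \in \comp[n-1]\), \(v' \ty a_0\), \(b(v) = b(v')\) and \(\capp(u,v) = \capp(u,v')\); \cref{thm:fun-typing} is the natural tool for this. Given \(\ValEq{\Gamma}{M}{M'}{A}{v}{a_0}\) at stage \(n\), apply \eqref{eq:mono-term} to lower it to witness \(v'\). Then use the induction hypothesis to move it to stage \(n-1\), and feed it to the given stage-\(n\) clause. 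This yields the conclusion at witnesses \(b(v')=b(v)\) and \(\capp(u,v')=\capp(u,v)\), which the induction hypothesis lifts back to stage \(n\). The delicate point is that \(v'\) must be chosen to serve \(b\) and \(u\) simultaneously. I would take the join of the two witnesses supplied by \(\domain{b}{a_0}\) and \(\domain{u}{a_0}\). These are compatible since both lie below \(v\), the join stays typed by \cref{thm:typing-lub}, and it still lies below \(v\), so monotonicity of step functions keeps both equalities. The presuppositions are then discharged via \cref{thm:typing-inj}.
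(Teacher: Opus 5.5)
Your proposal is correct and follows the paper's route, which derives stability from the monotonicity laws together with stability of semantic typing (\cref{thm:typing-inj}); your \(\Pi\)-case argument, using \eqref{eq:mono-term} to push an input witness from \(\comp[n]\) down into \(\comp[n-1]\) via the representation of \cref{thm:fun-typing}, is exactly how those two ingredients combine. One small fix: in the term clause at \(\cuniv\) the type relation is taken at the same stage, not the stage below, so you need the type half of the current induction step (prove it before the term half), not the induction hypothesis at \(n-1\).
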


Consequently the relations are independent of the stage at which \(u\) and
\(a\) are taken: for \(u, a \in \comp\) we may drop the index and simply write
\(\TValEq[]{\Gamma}{A}{A'}{a}\) and \(\ValEq[]{\Gamma}{M}{M'}{A}{u}{a}\).

Finally, the term relation is invariant under typed head reduction of \emph{both} of
its components, at the same semantic witnesses.  Since it is stated as an
equivalence, it yields head expansion as well as head reduction.

\begin{lemma}[Head reduction]
  \label{lem:whr}
  Assume \(\convred{\Gamma}{M}{M'}[A]\), \(\convred{\Gamma}{N}{N'}[A]\) and
  \(\conv{\Gamma}{M}{N}[A]\).  Then
  \[
    \ValEq[]{\Gamma}{M}{N}{A}{u}{a}
    \quad\iff\quad
    \ValEq[]{\Gamma}{M'}{N'}{A}{u}{a}.
  \]
\end{lemma}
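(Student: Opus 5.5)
We argue by induction on the stage \(n\) at which \(u, a \in \comp[n]\) live, followed by case analysis on the semantic type \(a\), exactly mirroring the definition of the relations. Before the induction we isolate one auxiliary fact about typed head reduction: it is \emph{deterministic} and \emph{compatible with conversion}. If \(\convred{\Gamma}{T}{T_1}[\univ]\) and \(T \red T_2\), then one of \(T_1, T_2\) head-reduces to the other, and the conversion components compose by transitivity. This holds because head reduction (β plus congruence in function position) is a deterministic relation on raw terms. Consequently, if \(T\) typed-head-reduces to a head normal form such as \(\P_{x:A}B\) or \(\univ\), then so does any \(T'\) with \(\convred{\Gamma}{T'}{T}\). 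Conversely, if \(T \red T'\), then any typed reduction of \(T'\) extends to one of \(T\).

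Case \(a = \bot\): both sides are \(\T\), so there is nothing to prove.

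Case \(a = \cuniv\): the relation unfolds to \(\convred{\Gamma}{A}{\univ}[\univ]\) together with \(\TValEq{\Gamma}{M}{N}{u}\). It therefore suffices to prove the type-level analogue: if \(\convred{\Gamma}{M}{M'}[\univ]\) and \(\convred{\Gamma}{N}{N'}[\univ]\), then \(\TValEq{\Gamma}{M}{N}{u} \iff \TValEq{\Gamma}{M'}{N'}{u}\). We prove this first, by case on \(u\).
\begin{itemize}
\item For \(u = \bot\) it is trivial.
\item For \(u = \cuniv\) it follows from the determinism fact above.
\item For \(u = \cpi(a',b')\), the clauses mention \(T\) and \(T'\) only through \(\convred{\Gamma}{T}{\P_{x:A}B}[\univ]\); all remaining data concern \(A, B, A', B'\), which are unchanged. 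The forward direction uses determinism (\(\P_{x:A}B\) is head-normal, so \(M'\) still reduces to it). The backward direction composes the reductions and uses transitivity of conversion.
\end{itemize}

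Case \(a = \cpi(a',b)\): the clause \(\presup{\TVal{\Gamma}{T}{\cpi(a',b)}}\) and the reduction of the type \(T\) are untouched, since only the terms change. The body quantifies over related arguments \(N_0, N_0'\) at \(w \ty_{n} a'\) and asks for relatedness of \(M\,N_0\), \(M\,N_0'\), \(N\,N_0\) and so on at \(\capp(u,w)\), \(b(w)\), which are witnesses at stage \(n\). By the congruence rule \(M \sred M_1 \Rightarrow M\,N_0 \sred M_1\,N_0\), together with the congruence rule of conversion for application (typed at \(\subs{B}{N_0}\) via the App rule), we get \(\convred{\Gamma}{M\,N_0}{M'\,N_0}[\subs{B}{N_0}]\), and likewise for the other applications. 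The required side condition \(\conv{\Gamma}{M\,N_0}{N\,N_0'}\) also follows by congruence. The induction hypothesis at stage \(n\) then gives each of the three equivalences pointwise, so the whole clause transfers in both directions. Here we use the Stability corollary to justify working with the witnesses at stage \(n\) and the index-free notation.

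The main obstacle is the bookkeeping of the typed reductions. We must check that the application congruence rule produces typed head reductions at exactly the type \(\subs{B}{N_0}\) expected by the clause, where \(B\) comes from the reduct of \(T\). We also need the determinism argument for the type-level \(\Pi\) and \(\univ\) cases, which requires that the targets \(\P_{x:A}B\) and \(\univ\) are head-normal, so that the reduction of \(M\) to \(M'\) can be "factored through". We also expect to need the (IsType) presupposition to supply typing of \(M\,N_0\). No other difficulty is anticipated.
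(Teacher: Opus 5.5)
Your proposal is correct and follows the route the paper's definitions dictate. You induct on the stage with case analysis on the code, and use determinism of head reduction (with \(\P_{x:A}B\) and \(\univ\) head-normal) for the reduction direction and composition of reductions for the expansion direction. You push the reduction under the \(\cpi\) clause via the application congruence, and you correctly observe that the type-level lemma needed in the \(\cuniv\) case requires no induction, since types enter only through their typed reduction to a head normal form.

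Two small bookkeeping points need fixing. First, the ``consequently'' sentence of your first paragraph should be about \(\convred{\Gamma}{T}{T'}[\univ]\), not the reverse, since that is where determinism is actually needed; you do use it correctly later. Second, in the \(\cuniv\) case you must retype \(\conv{\Gamma}{M}{M'}[A]\) at \(\univ\) using \(\conv{\Gamma}{A}{\univ}[\univ]\) before invoking the type-level lemma.
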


The next step is to strengthen the fundamental lemma, making it relative to
two parallel substitutions \(\sigma, \sigma' \colon \Delta \to \Gamma\).
This genuinely requires a general environment
fitting \(\Gamma\): the \(\bot\) environment suffices only for the closed
instances, since interpreting a \(\P\)-type already forces its codomain to be
considered under arbitrary arguments.

We extend the logical relation to substitutions as follows:
\(\sigma, \sigma' \colon \Delta \to \Gamma\) are \emph{related}
over an environment \(\rho\) fitting \(\Gamma\) when, for every \((x : A) \in \Gamma\),
\[
  \ValEq[]{\Delta}{\sigma(x)}{\sigma'(x)}{A\sigma}{\rho}{\sem{A}[\rho]}.
\]

With these tools in hand we can state the fundamental theorem of the logical relation,
called \emph{adequacy}:
\begin{theorem}[Adequacy
  \lean{Adequacy}{450}{LR.adequacy}\,\rocq{finelt/adequacy}{2461}{adequacySub}\,\agda{ID/Adequacy/Value}{199}{adequacyEqSub2}]
  \label[theorem]{thm:adequacy}
  Let \(\rho\) be an environment that fits \(\Gamma\),
  and let \(\sigma, \sigma' \colon \Delta \to \Gamma\) be
  related over \(\rho\).  Then, for \(a \le \sem{A}[\rho]\) with \(a \ty \cuniv\),
  and for \(u \le \sem{M}[\rho]\) with \(u \ty a\),
  \begin{align*}
    \conv{\Gamma}{A}{A'}[\univ] &\implies
    \TValEq[]{\Delta}{\subs{A}{\sigma}}{\subs{A'}{\sigma'}}{a}, \\
    \conv{\Gamma}{M}{M'}[A] &\implies
    \ValEq[]{\Delta}{\subs{M}{\sigma}}{\subs{M'}{\sigma'}}{\subs{A}{\sigma}}{u}{a}.
  \end{align*}
\end{theorem}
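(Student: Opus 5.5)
We will prove both implications simultaneously by induction on the derivation of the conversion judgement (typing judgements being the reflexive instances \(\conv{\Gamma}{M}{M}[A]\)), generalising over \(\Delta\), \(\rho\), \(\sigma,\sigma'\), and over the compact witnesses \(a \le \sem{A}[\rho]\), \(u \le \sem{M}[\rho]\). Throughout, we fix compact witnesses and a stage \(n\) large enough to contain them, which is harmless by the Stability corollary. Soundness (\cref{thm:soundness}) is used at each step to check that the witnesses we choose satisfy the presuppositions. Concretely, it supplies \(u \ty a\) and \(a \ty \cuniv\) for the subterms, and identifies \(\sem{M}[\rho] = \sem{M'}[\rho]\), so that a witness below one side is also below the other.

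The structural rules are handled using the properties of \cref{sec:lr-properties}:
\begin{itemize}
  \item \emph{Variables.} We use relatedness of \(\sigma,\sigma'\) at \(\rho(x)\), then \ref{eq:mono-term} to go down to \(u \le \rho(x)\), and \ref{eq:mono-type-decr} to go down to \(a \le \sem{A}[\rho]\).
  \item \emph{Symmetry and transitivity.} These follow from \ref{eq:symm}, \ref{eq:trans} and \ref{eq:trans-prime}. For transitivity we apply the induction hypothesis to \(\sigma,\sigma'\) on one premise and to \(\sigma',\sigma'\) on the other. We therefore first need to derive that \(\sigma'\) is related to itself, which follows from \ref{eq:symm} and \ref{eq:trans}.
  \item \emph{Type conversion.} We use \ref{eq:conv} with the induction hypothesis on \(\conv{\Gamma}{A}{B}[\univ]\) at witness \(a\).
\end{itemize}

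For the congruences we proceed as follows:
\begin{itemize}
  \item \emph{\(\univ\).} Here \(a\) is \(\bot\) or \(\cuniv\), and both cases are immediate since \(\univ\) reduces to itself.
  \item \emph{\(\P\)-formation.} A witness \(u \le \sem{\P_{x:A}B}[\rho]\) is either \(\bot\), which is handled by \ref{eq:bot}, or \(\cpi(a_0,b)\) with \(a_0 \le \sem{A}[\rho]\) and \(b\,v \le \sem{B}[\rho,x\mapsto v]\) for \(v \ty a_0\).
  \item \emph{Codomain clause.} To discharge it, given \(\ValEq[]{\Delta}{N}{N'}{A\sigma}{v}{a_0}\), we extend \(\rho\) to \((\rho,x\mapsto v)\) and the substitutions to \((\sigma,N)\), \((\sigma',N')\). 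These are related, and the extended environment fits \(\Gamma,x:A\). The induction hypothesis on \(B\) then gives the three required relations: \(B\) against \(B'\), \(B\) against \(B\), and \(B'\) against \(B'\). The last two come from the reflexive instances of the hypothesis.
  \item \emph{Domain clause.} The relation \(\conv{}{A\sigma}{A'\sigma'}\) required there comes from syntactic substitution of the conversion, together with the conversions stored in the relatedness of \(\sigma,\sigma'\).
\end{itemize}

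Application should be the easy case. The induction hypothesis on \(M\) at the witness \(\cabs(f)\) (or at \(\cpi\) as a type) directly yields related applications at \(\capp(u,w)\). We then use \ref{eq:mono-term} and \ref{eq:join}, since \(u \le \sem{M\,N}[\rho]\) is a finite join of elements \(\capp(u_i,w_i)\), each of which is handled separately. We also need the semantic substitution lemma to match \(b(w)\) with \(\sem{B}[\rho,x\mapsto w]\).

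The main obstacle, we expect, is \(\lambda\) together with the β and η rules. For \(\lambda\), the witness is \(\cabs(f)\) with \(\domain{f}{a_0}\). Given related arguments \(N,N'\) at \(w\), we must show that \((\l_{x:A}M)\sigma\,N\) is related to its counterpart at \(\capp(\cabs f,w) = f(w)\). Our plan is as follows:
\begin{enumerate}
  \item Use \(\domain{f}{a_0}\) to replace \(w\) by some \(v \le w\) with \(v \ty a_0\) and \(f(w) = f(v)\).
  \item Apply \ref{eq:mono-term} to the argument relation to obtain the relation at \(v\).
  \item Head-reduce by \cref{lem:whr} to \(M[\sigma,N]\).
  \item Conclude with the induction hypothesis for \(M\) under the extended substitution.
\end{enumerate}
When \(f\) is presented as a join of steps, we first split it using \ref{eq:join} and \cref{thm:fun-typing}, so that the witnesses are typed.

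The β rule then reduces to the \(\lambda\) case plus \cref{lem:whr}. For η, we unfold the \(\cpi\) clause of the relation and apply the induction hypothesis on \(M\,x \convop N\,x\) with the extended substitution, which gives exactly the required clause. The subtle point here is the witness \(\capp(u,w)\), which must be rewritten against \(\sem{M\,x}\) using \(\dabs(\bot)=\bot\) and the model's validation of η. The typed-reduction presuppositions, stated as conversions, come throughout from weakening and substitution of the syntactic judgements.
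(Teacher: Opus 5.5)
Your overall strategy matches the paper's: induction on the conversion derivation, closing each case with the properties of \cref{sec:lr-properties}. As written, however, it has gaps of two kinds.

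The first gap is the asymmetry of related substitutions. \(\sigma(x)\) and \(\sigma'(x)\) are related at \(A\sigma\), so \ref{eq:symm} and \ref{eq:trans} show that \(\sigma\) is related to itself, but not that \(\sigma'\) is. That would need the relation at \(A\sigma'\), i.e.\ \(\TValEq[]{\Delta}{A\sigma}{A\sigma'}{\ldots}\) for the types of \(\Gamma\), and this is not an induction hypothesis. Your transitivity plan, \((\sigma,\sigma')\) then \((\sigma',\sigma')\), also yields relations at \(A\sigma\) and at \(A\sigma'\), which \ref{eq:trans} cannot chain. Use \((\sigma,\sigma)\) then \((\sigma,\sigma')\) instead, and \((\sigma,\sigma)\) for the type premise of the conversion rule. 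The symmetry rule cannot be closed this way at all: keeping \(\sigma\) on the left, \(\conv{\Gamma}{M}{N}[A]\) relates \(M\sigma\) to \(N\sigma\) and to \(N\sigma'\), never to \(M\sigma'\). Similarly, in the \(\lambda\) and \(\eta\) cases you must establish the grey component \(\TVal{\Delta}{(\P_{x:A}B)\sigma}{\cpi(a_0,b)}\). That needs an induction hypothesis on the \(\P\)-type itself, which these rules do not provide. \ref{eq:is-type}, applied to the hypothesis on the body, only yields \(\TVal{\Delta}{B\sigma[N]}{\ldots}\), not \(B\sigma[N]\) against \(B\sigma[N']\). These missing hypotheses are exactly why the mechanised rules carry their presuppositions as extra premises. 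You must either adopt that formulation or strengthen the statement so that it also relates the types. Your ``reflexive instances'' in the \(\P\) case, by contrast, can be recovered by \ref{eq:symm-ty} and \ref{eq:trans-ty} from instances with \(\sigma\) on the left.

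The second gap is the witness bookkeeping, which is where the real work lies and which your \(\lambda\) and application cases skip.
\begin{itemize}
  \item In the \(\lambda\) case, the induction hypothesis over \((\rho,x\mapsto v)\) only reaches type witnesses below \(\sem{B}[\rho,x\mapsto v]\). That set contains \(b(v)\) but in general not the required \(b(w)\). For \(\l_{x:\univ}\l_{z:x}\univ\) and \(w=\cuniv\), \(f\) is constant, so \(v=\bot\) is a legitimate choice, whereas \(b(\cuniv)\) may be \(\cpi(\cuniv,\ldots)\notin\sem{\P_{z:x}\univ}[\rho,x\mapsto\bot]\). Take \(v\) to be the join of the witnesses given by \(\domain{f}{a_0}\) and \(\domain{b}{a_0}\), which is typed by \cref{thm:typing-lub}, or lift with \ref{eq:mono-type-incr}. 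You also need monotonicity of \(\sem{M}[\rho,x\mapsto v]\) in \(v\), since \(u\le\sem{\l_{x:A}M}[\rho]\) only controls \(f\) on inputs typed at some other \(a_1\in\sem{A}[\rho]\).
  \item Application is not direct either. For an arbitrary \(a\le\sem{\subs{B}{N}}[\rho]\) you must find \(u_1\le\sem{M}[\rho]\), \(w\le\sem{N}[\rho]\) and \(\cpi(a_0,b)\le\sem{\P_{x:A}B}[\rho]\) such that \(u_1\ty\cpi(a_0,b)\), \(w\ty a_0\), \(u\le\capp(u_1,w)\) and \(a\le b(w)\). First use soundness and continuity to find a typed compact \(n\le\sem{N}[\rho]\), say \(n\ty a_2\), with \(a\le\sem{B}[\rho,x\mapsto n]\). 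Then enlarge the argument witness to \(w\vee n\) and the \(\P\)-witness to \(\cpi(a_0\vee a_2,\,b\vee\{n\mapsto a\})\). Check that the latter still lies in \(\sem{\P_{x:A}B}[\rho]\) and still types \(u_1\) (\cref{thm:typing-up}). Only then do \ref{eq:mono-term} and \ref{eq:mono-type-decr} close the case.
  \item In the \(\eta\) case, \(\dabs(\bot)=\bot\) plays no role. \(\capp(u,w')\in\sem{M\,x}[\rho,x\mapsto w']\) holds directly, and only the same choice of \(w'\le w\) as for \(\lambda\) is needed.
\end{itemize}
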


As a special case when \(\Delta = \Gamma\),
\(\sigma = \sigma' = \mathrm{id}\) and \(\rho = \bot\), we get the fundamental lemma.
\begin{theorem}[Fundamental lemma]
  \label[theorem]{thm:fundamental}
  For any \(a \le \sem{A}[\bot]\) with \(a \ty \cuniv\),
  and \(u \le \sem{M}[\bot]\) with \(u \ty a\),
  \begin{align*}
    \conv{\Gamma}{A}{A'}[\univ] &\implies \TValEq[]{\Gamma}{A}{A'}{a}, \\
    \conv{\Gamma}{M}{M'}[A] &\implies \ValEq[]{\Gamma}{M}{M'}{A}{u}{a}.
  \end{align*}
\end{theorem}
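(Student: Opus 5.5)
This is the special case of \cref{thm:adequacy} with \(\Delta = \Gamma\), \(\sigma = \sigma' = \mathrm{id}\) and \(\rho = \bot\). The plan is therefore to discharge the two hypotheses of \cref{thm:adequacy} for these choices, and then simplify the conclusion using \(\subs{A}{\mathrm{id}} = A\) and \(\subs{M}{\mathrm{id}} = M\).

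The first hypothesis is that \(\bot\) fits \(\Gamma\). For each \((x : A) \in \Gamma\) we need \(\sem{A}[\bot] \ty \cuniv\) and \(\bot \ty \sem{A}[\bot]\). We argue by induction on the well-formed context \(\Gamma\). When \(\Gamma = \Gamma_0, x : A\) with \(\typing{\Gamma_0}{A}[\univ]\), the induction hypothesis says \(\bot\) fits \(\Gamma_0\). \Cref{thm:soundness} then gives \(\sem{A}[\bot] \ty \sem{\univ}[\bot] = \duniv\), that is, \(\sem{A}[\bot] \ty \cuniv\). Rule \textsc{Bot} then yields \(\bot \ty \sem{A}[\bot]\) at every compact approximation of \(\sem{A}[\bot]\), which lifts to the domain-level relation. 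This is the same fact the disjointness proof already uses.

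The second hypothesis is that \(\mathrm{id}, \mathrm{id} \colon \Gamma \to \Gamma\) are related over \(\bot\). For each \((x : A) \in \Gamma\) we must show \(\ValEq[]{\Gamma}{x}{x}{A}{\bot}{\sem{A}[\bot]}\), where the semantic witness is \(\rho(x) = \bot\). Since the logical relation is indexed by compact witnesses, we read this, as in \cref{thm:adequacy}, at each compact \(a \le \sem{A}[\bot]\) with \(a \ty \cuniv\). There the goal is an instance of (Bot), which requires \(\TVal[]{\Gamma}{A}{a}\). This is the main obstacle: it looks circular, because reflexivity of the type relation is exactly what adequacy produces. I would break the circularity with a simultaneous induction on \(\Gamma\). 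Suppose the fundamental lemma holds in the prefix \(\Gamma_0\) where \(A\) lives. Applying it to \(\conv{\Gamma_0}{A}{A}[\univ]\) gives \(\TVal[]{\Gamma_0}{A}{a}\). We then need weakening of the logical relation from \(\Gamma_0\) to \(\Gamma\). This is routine, since typed head reduction and conversion are stable under weakening and the relation only quantifies over terms in the same context.

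Alternatively, the substitution lemma used inside the proof of \cref{thm:adequacy} may already handle identity substitutions on variables. In that case the variable case reduces directly to (Bot) via (IsType) for the type, and no extra induction is needed. Once both hypotheses hold, instantiating \cref{thm:adequacy} gives both implications immediately.
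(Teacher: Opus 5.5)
Your specialisation of \cref{thm:adequacy} to \(\Delta=\Gamma\), \(\sigma=\sigma'=\mathrm{id}\), \(\rho=\bot\) is exactly the paper's one-line argument, and your check that \(\bot\) fits \(\Gamma\) is fine. You are also right that relatedness of \(\mathrm{id}\) over \(\bot\) is not free. Through (Bot) it needs \(\TValEq[]{\Gamma}{A}{A}{a}\) for every \((x:A)\in\Gamma\) and every compact \(a\le\sem{A}[\bot]\) with \(a\ty\cuniv\), a point the paper leaves implicit.

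The gap is in how you discharge this. You get the relation in the prefix \(\Gamma_0\) and then invoke a weakening lemma for the logical relation, which you call routine because the relation only quantifies over terms in the same context. That is exactly why it is \emph{not} routine. The relation is deliberately non-Kripke. So the \(\cpi\)-clause of \(\TValEq{\Gamma_0, x : A}{T}{T'}{\cpi(d,e)}\) asks for the codomains to be related on \emph{all} related arguments \(M, M'\) typed in \(\Gamma_0, x : A\), which may mention \(x\). By contrast, \(\TValEq{\Gamma_0}{T}{T'}{\cpi(d,e)}\) only provides this for arguments typed in \(\Gamma_0\). The quantifier is in negative position, so enlarging the context strengthens the requirement, and an induction on the witness breaks at the \(\cpi\) case (and likewise at the term clause for \(\cpi\)). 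Your fallback is circular as stated: (IsType) yields the type relation from a term relation about \(x\), which is precisely what you are trying to produce.

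The repair is to let \cref{thm:adequacy} do the weakening, using that its target context \(\Delta\) is arbitrary. Fix \(\Delta=\Gamma\) and show, by induction on \(k\), that the weakening substitution \(\mathrm{wk}_k\colon\Gamma\to\Gamma_k\) onto the length-\(k\) prefix is related to itself over \(\bot\). The base case is vacuous. For \(\Gamma_{k+1}=\Gamma_k, y : A\), apply \cref{thm:adequacy} to \(\conv{\Gamma_k}{A}{A}[\univ]\) with \(\sigma=\sigma'=\mathrm{wk}_k\), using that \(\bot\) fits \(\Gamma_k\) as you showed. This gives \(\TValEq[]{\Gamma}{A}{A}{a}\) directly in \(\Gamma\) for each compact \(a\le\sem{A}[\bot]\) with \(a\ty\cuniv\). (Bot) then gives \(\ValEq[]{\Gamma}{y}{y}{A}{\bot}{a}\), so \(\mathrm{wk}_{k+1}\) is related. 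At \(k=|\Gamma|\) the substitution is \(\mathrm{id}\), and a final application of \cref{thm:adequacy} yields the statement, with no weakening lemma for the relation ever needed.
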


\subsection{Consequences}
\label{sec:lr-consequences}
From the fundamental lemma, injectivity of Π follows.

\begin{corollary}[Injectivity of Π
  \lean{Adequacy}{2501}{forallE_inv}\,\rocq{finelt/adequacy}{2653}{piInjectivity}\,\agda{ID/PiInjectivity}{209}{piInjectivity}]
  \label[theorem]{thm:pi-injectivity}
  If \(\conv{\Gamma}{\P_{x : A}B}{\P_{x : A'}B'}[\univ]\), then
  \(\conv{\Gamma}{A}{A'}[\univ]\) and \(\conv{\Gamma,x:A}{B}{B'}[\univ]\).
\end{corollary}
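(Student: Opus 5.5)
The plan is to instantiate the fundamental lemma (\cref{thm:fundamental}) at the type conversion \(\conv{\Gamma}{\P_{x : A}B}{\P_{x : A'}B'}[\univ]\), choosing a semantic witness \(a\) of the form \(\cpi(a_0,b)\). The definition of \(\TValEq{\Gamma}{T}{T'}{\cpi(a_0,b)}\) then forces both sides to head-reduce to \(\P\)-types with convertible domains and codomains, and the only remaining work is identifying these with the original \(A,B\) and \(A',B'\).

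First I pick the witness. The interpretation \(\sem{\P_{x:A}B}[\bot]\) is the ideal of compacts \(\cpi(a_0,b)\) with \(a_0 \in \sem{A}[\bot]\), \(\domain{b}{a_0}\) and \(b(u) \in \sem{B}[\bot, x\mapsto u]\) for \(u \ty a_0\), together with \(\bot\). The simplest choice is \(a \coloneq \cpi(\bot, \_\mapsto\bot)\), which lives in \(\comp[1]\). It lies in this ideal: \(\bot \in \sem{A}[\bot]\), and \(\domain{(\_\mapsto\bot)}{\bot}\) holds because \(\bot \ty_0 \bot\) is not available at stage \(0\), so I take stage \(n\ge 1\) and use \textsc{Bot} with \(\bot \ty \cuniv\). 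Hence \(a \le \sem{\P_{x:A}B}[\bot]\). Next I check \(a \ty \cuniv\) by the rule \textsc{Pi}: \(\bot \ty_n \cuniv\) holds by \textsc{Bot} once \(\cuniv \ty_n \cuniv\) for \(n\ge1\), the domain condition holds with \(v=\bot\), and \(\bot \ty_n \cuniv\) holds for all codomain values. So \(a \ty_{2} \cuniv\).

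Applying \cref{thm:fundamental} gives \(\TValEq[]{\Gamma}{\P_{x:A}B}{\P_{x:A'}B'}{\cpi(\bot,\_\mapsto\bot)}\). Unfolding the definition, I get \(\convred{\Gamma}{\P_{x:A}B}{\P_{x:A_1}B_1}[\univ]\) and \(\convred{\Gamma}{\P_{x:A'}B'}{\P_{x:A_1'}B_1'}[\univ]\), together with \(\conv{\Gamma}{A_1}{A_1'}[\univ]\) and \(\conv{\Gamma,x:A_1}{B_1}{B_1'}[\univ]\); the remaining semantic clauses are discarded. A \(\P\)-type is not a redex: head reduction only fires on \((\l_{x:A}M)\,N\) and on applications in function position. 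So \(\P_{x:A}B \red \P_{x:A_1}B_1\) forces syntactic equality, \(A_1 = A\) and \(B_1 = B\), and likewise on the primed side. The conclusion follows directly.

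The main obstacle is the witness bookkeeping, namely that \(\cpi(\bot,\_\mapsto\bot)\) really is below \(\sem{\P_{x:A}B}[\bot]\) and semantically a type. Both facts come from the left-hand description in \cref{fig:sem-compacts} and the typing rules, and nothing depends on \(A\) or \(B\) being well-behaved, because \(\bot\) belongs to every ideal. If presuppositions of the fundamental lemma require well-formedness of \(\Gamma\) or of the \(\P\)-types, I would extract these from the conversion derivation by standard inversion or validity lemmas. Note also that the codomain conversion obtained lives in context \(\Gamma, x : A\), as required, since \(A_1 = A\).
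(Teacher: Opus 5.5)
Your proposal is correct and follows the paper's proof: apply the fundamental lemma to the conversion with the witness \(\cpi(\bot,\bot)\) (your \(\cpi(\bot,\_\mapsto\bot)\)), then read \(\conv{\Gamma}{A}{A'}[\univ]\) and \(\conv{\Gamma,x:A}{B}{B'}[\univ]\) off the \(\cpi\) clause of the logical relation. You also spell out two steps the paper leaves implicit: that this witness lies in \(\sem{\P_{x:A}B}[\bot]\) and is a semantic type (at stage \(2\)), and that a \(\P\)-type admits no head-reduction step, so the reducts given by the clause are syntactically \(\P_{x:A}B\) and \(\P_{x:A'}B'\).
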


\begin{proof}
  By the fundamental lemma, it follows that
  \(\TValEq[]{\Gamma}{\P_{x : A}B}{\P_{x : A'}B'}{\cpi(\bot,\bot)}\), from
  which the definition of the logical relation implies in particular
  \(\conv{\Gamma}{A}{A'}[\univ]\) and \(\conv{\Gamma,x:A}{B}{B'}[\univ]\).
\end{proof}

Injectivity of \(\P\) supplies exactly the inversion necessary to deal with the
case of application in progress and preservation (subject reduction).

\begin{theorem}[Subject reduction
  \lean{Consequences}{657}{WHRed.subject_red}\,\rocq{finelt/adequacy}{2730}{subject_red}\,\agda{ID/SubjectReduction}{126}{subject-red1}]
  \label[theorem]{thm:subject-reduction}
  If \(\typing{\Gamma}{M}[A]\) and \(M \red M'\), then \(\typing{\Gamma}{M'}[A]\).
\end{theorem}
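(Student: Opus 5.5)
The plan is to reduce to a single head-reduction step $M \sred M'$ and argue by induction on $M \red M'$, then induct on the derivation of that step. The reflexive–transitive closure follows immediately by chaining.

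For one step we need a generation (inversion) lemma for typing, proved by induction on typing derivations. It says that $\typing{\Gamma}{\l_{x:A}P}[C]$ implies $\typing{\Gamma}{A}[\univ]$, $\typing{\Gamma,x:A}{P}[B]$ and $\conv{\Gamma}{C}{\P_{x:A}B}[\univ]$ for some $B$. Similarly, $\typing{\Gamma}{P\,N}[C]$ implies $\typing{\Gamma}{P}[\P_{x:A}B]$, $\typing{\Gamma}{N}[A]$ and $\conv{\Gamma}{C}{\subs{B}{N}}[\univ]$ for some $A, B$. This is routine once the type-conversion rule is folded in: each rule gives the canonical shape up to reflexivity, and conversion steps are absorbed by transitivity. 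We also use the standard admissible facts that typing is closed under substitution of a well-typed term and under context conversion. The first follows from the usual substitution lemma. For the second we need $\conv{\Gamma}{A}{A'}[\univ]$ to justify retyping the bound variable.

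Take first the $\beta$ case, $(\l_{x:A'}P)\,N \sred \subs{P}{N}$ with $\typing{\Gamma}{(\l_{x:A'}P)\,N}[C]$. Generation on the application gives $\typing{\Gamma}{\l_{x:A'}P}[\P_{x:A}B]$, $\typing{\Gamma}{N}[A]$ and $C \convop \subs{B}{N}$. Generation on the abstraction gives $\typing{\Gamma,x:A'}{P}[B']$ and $\conv{\Gamma}{\P_{x:A}B}{\P_{x:A'}B'}[\univ]$. Here we invoke \cref{thm:pi-injectivity} to obtain $A \convop A'$ and $\conv{\Gamma,x:A}{B}{B'}[\univ]$. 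We retype $N$ at $A'$, substitute into $P$ to get $\typing{\Gamma}{\subs{P}{N}}[\subs{B'}{N}]$, and then use the substitution lemma for conversion to get $\subs{B'}{N} \convop \subs{B}{N} \convop C$. Conversion then concludes.

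The congruence case $P\,N \sred P'\,N$ with $P \sred P'$ is by the induction hypothesis. Generation gives $\typing{\Gamma}{P}[\P_{x:A}B]$, the IH gives $\typing{\Gamma}{P'}[\P_{x:A}B]$, and we reapply \textsc{App} and convert back to $C$.

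The main obstacle is the $\beta$ case, and specifically the use of injectivity of $\Pi$ on the conversion extracted by generation. This is exactly where \cref{thm:pi-injectivity}, and hence the whole adequacy machinery, is needed; everything else is the standard syntactic metatheory of the declarative system (generation, substitution, context conversion, and validity facts such as well-formedness of the types involved). One point needs care: generation must produce the $\Pi$-equation at type $\univ$ in the same context $\Gamma$, so that the corollary applies verbatim. We also need $\typing{\Gamma,x:A}{B}[\univ]$ to form the substituted conversion; it comes from validity of typing or directly from generation on the abstraction.
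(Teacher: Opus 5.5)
Your proposal is correct and follows the route the paper indicates. Subject reduction comes from the standard syntactic argument: induction on the head-reduction step, generation, and substitution. Injectivity of $\P$ (\cref{thm:pi-injectivity}) is exactly the ingredient that discharges the $\beta$ case for application, which is where the paper says it is needed.
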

\begin{theorem}[Progress
  \lean{Consequences}{881}{progress}\,\rocq{finelt/adequacy}{2843}{progress}]
  \label{thm:progress}
  If $\typing{{}}{M}[A]$, then either \(M\) is an abstraction, a product, or a universe,
  or there is an \(M'\) with \(M \sred M'\).
\end{theorem}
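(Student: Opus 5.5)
The plan is to prove progress for closed well-typed terms by induction on \(M\), using the typing rules of \cref{fig:syntax} and the inversions already established: disjointness of type constructors, injectivity of \(\P\) (\cref{thm:pi-injectivity}), and subject reduction (\cref{thm:subject-reduction}). Terms of the form \(\univ\), \(\P_{x:A}B\) or \(\l_{x:A}N\) satisfy the conclusion at once. There are no variables, since the context is empty. So the only real case is an application \(M = M_1\,M_2\).

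For an application, I first need a generation (inversion) lemma for typing: if \(\typing{}{M_1\,M_2}[A]\), then there are \(A_0, B_0\) with \(\typing{}{M_1}[\P_{x:A_0}B_0]\) and \(\typing{}{M_2}[A_0]\). This is proved by induction on the typing derivation, handling the conversion rule by transitivity. By the induction hypothesis on \(M_1\), either \(M_1 \sred M_1'\), in which case \(M_1\,M_2 \sred M_1'\,M_2\) by the congruence rule for head reduction, or \(M_1\) is an abstraction, a product, or a universe. If \(M_1 = \l_{y:C}N\), then \(M_1\,M_2 \sred \subs{N}{M_2}\) by β and we are done.

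The remaining work is to exclude \(M_1 = \univ\) and \(M_1 = \P_{y:C}D\). In both cases generation for \(\univ\) and \(\P\) gives \(\conv{}{\univ}{\P_{x:A_0}B_0}[\univ]\): \(M_1\) is typed only at \(\univ\) up to conversion, and \(M_1\) also has the type \(\P_{x:A_0}B_0\). This contradicts disjointness of type constructors. To apply that theorem I need \(\typing{}{A_0}[\univ]\) and \(\typing{x:A_0}{B_0}[\univ]\), which follow from a standard validity lemma (well-typed terms have well-formed types) together with inversion of the \(\P\)-formation rule. Alternatively, I can argue semantically: by \cref{thm:soundness} with the \(\bot\) environment, \(\duniv = \dpi(\dots)\), which cannot hold.

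I expect the main obstacle to be the generation and validity lemmas. These are syntactic and routine, but they must cope with the conversion rule everywhere, and for the \(\P\) case of generation they may themselves need injectivity of \(\P\). In particular, I need the type of an abstraction to be convertible to some \(\P\), which is what justifies the β-case being well typed; this is not strictly required for progress, but it is used for the inversion on \(\P_{x:A_0}B_0\). If generation for \(\P\)-formation is awkward, I will fall back on the semantic route above, which needs only that \(\P_{x:A_0}B_0\) is a well-formed type, and that is a direct consequence of validity.
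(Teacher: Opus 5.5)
Your proof is correct and is the standard argument the paper has in mind. It proceeds by induction on \(M\) with routine generation lemmas, and closes the application case by congruence, by \(\beta\), or by the no-confusion of \(\univ\) and \(\P\); that no-confusion is the only definitional inversion actually needed here, so injectivity and subject reduction play no role. Your concern about well-formedness side conditions is unnecessary: the disjointness argument only applies \cref{thm:soundness} to the conversion between \(\P_{x:A_0}B_0\) and \(\univ\) at the \(\bot\) environment, and \(\cpi(\bot,\bot)\) belongs to the interpretation of every \(\P\)-type, so no validity lemma is required.
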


\section{Extensions}
\label{sec:extensions}

The domain model of \cref{sec:domain-model} is deliberately coarse: it collapses
structure that
the syntax carefully keeps apart. In other words, although equality
is \emph{preserved} (definitionally equal terms are interpreted by equal domain elements),
it is certainly not \emph{reflected}: many terms which are not definitionally equal
have the same semantics.

Far from being a defect, this makes the technique scale better.
Indeed, the logical relation of \cref{sec:logical-relation} relates \emph{syntactic} terms to
\emph{semantic} witnesses, and the witnesses serve only to structure the induction: every
conversion we ultimately read off, and hence every definitional inversion principle we
obtain, is a conversion of the object theory.  A model may therefore identify terms that the
type theory distinguishes, without weakening the conclusions drawn from it.

\paragraph{Universe hierarchies}
The clearest instance of this is the universe hierarchy.  Proof assistants such as
\Agda, \Lean and \Rocq stratify types in a countable tower \(\univ[0] \ty \univ[1]
\ty \cdots\),
whereas \(\comp\) offers a single code \(\cuniv\), with \(\cuniv \ty \cuniv\).
Since \(\univ \ty
\univ\) is so permissive, we may interpret \emph{every} \(\univ[i]\)
by \(\cuniv\): levels are simply erased.  Each rule of the hierarchy is easily validated by
the flattened interpretation: \(\univ[i] \ty \univ[i+1]\) becomes \(\cuniv \ty \cuniv\), the
level-annotated \(\P\) formation rules collapse to the single rule \textsc{Pi}, and
cumulativity, where present, becomes trivial; so soundness (\cref{thm:soundness}) needs
no extra work.

This collapses a lot of information, but is not an issue: we can still run the
construction of \cref{sec:logical-relation}, with the stratified syntax
on the syntactic side and the unstratified domain \(\dom\) on the semantic side.
The clause at a \(\cpi\)-code
records \(\conv{\Gamma}{A}{A'}[\univ[i]]\) and \(\conv{\Gamma,x:A}{B}{B'}[\univ[j]]\) in the
\emph{stratified} theory, so the fundamental lemma yields injectivity of \(\P\)
there directly.
The semantic witness never mentions a level, and never needs to.

Indeed, although the model itself cannot tell \(\univ[i]\) from \(\univ[j]\), the logical
relation still can: its semantic witnesses carry no level, but its syntactic components
can and do speak of them: it suffices to sharpen
the clause at the universe code so that the two types reduce to the \emph{same} universe,
\[
  \TValEq{\Gamma}{A}{A'}{\cuniv} \quad\defeq\quad \exists i,\quad
  \convred{\Gamma}{A}{\univ[i]} \quad\wedge\quad \convred{\Gamma}{A'}{\univ[i]},
\]
to read off the sort injectivity theorem
$\conv{\Gamma}{\univ[i]}{\univ[j]}\Rightarrow i=j$ as a consequence of the fundamental lemma.
This construction works for both cumulative and non-cumulative universe hierarchies.

This is the general shape of the trade-off: the model may forget as much as it likes,
as long as it keeps elements for
\begin{enumerate*}[label=(\arabic*)]
\item injectivity properties we wish to obtain, and
\item constructors that can be eliminated by later operations ($\lambda$, pairs, $0$ and
  $\suc$)
\end{enumerate*}.

Each extension below follows the same pattern: the syntax gains constructors and rules,
the compact elements \(\comp\) gain new codes, and the logical relation gains a
clause at each
new code. We must show additional clauses in \cref{thm:soundness} and \cref{thm:adequacy},
and then we learn that the main results
\cref{thm:pi-injectivity,thm:subject-reduction,thm:progress}
hold for the extended type theory as well.

\subsection{Dependent sum types}

Our first extension adds the dependent sum type \(\Sigma_{x : A}B\), its pairing
constructor and two projections, together with the two
projection β-rules and surjective pairing (the η-rule for pairs):
\begin{mathparpagebreakable}
  \inferdef[Sig]{\typing{\Gamma}{A}[\univ] \\ \typing{\Gamma, x :
  A}{B}[\univ]}
  {\typing{\Gamma}{\Sigma_{x : A}B}[\univ]} \and
  \inferdef[Pair]{\typing{\Gamma}{M}[A] \\ \typing{\Gamma}{N}[\subs{B}{M}]}
  {\typing{\Gamma}{(M,N)}[\Sigma_{x : A}B]} \\
  \inferdef[Proj1]{\typing{\Gamma}{M}[\Sigma_{x : A}B]}
  {\typing{\Gamma}{M.1}[A]} \and
  \inferdef[Proj2]{\typing{\Gamma}{M}[\Sigma_{x : A}B]}
  {\typing{\Gamma}{M.2}[\subs{B}{M.1}]} \\
  \inferdef[Proj1β]{
    \typing{\Gamma}{M}[A] \\
    \typing{\Gamma}{N}[\subs{B}{M}]
  }{
    \conv{\Gamma}{(M,N).1}{M}[A]
  } \and
  \inferdef[Proj2β]{
    \typing{\Gamma}{M}[A] \\
    \typing{\Gamma}{N}[\subs{B}{M}]
  }{
    \conv{\Gamma}{(M,N).2}{N}[\subs{B}{M}]
  } \and
  \inferdef[η]{\typing{\Gamma}{M}[\Sigma_{x : A}B]}
  {\conv{\Gamma}{(M.1, M.2)}{M}[\Sigma_{x : A}B]}
\end{mathparpagebreakable}
As with the rules for \(\P\), the presuppositions \(\typing{\Gamma}{A}[\univ]\) and
\(\typing{\Gamma, x : A}{B}[\univ]\) are kept implicit; the congruence rules
for \(\Sigma\), pairing and the projections are the evident ones.

\paragraph{Head reduction}
% On top of the function \(\beta\)-step
% \((\l x : A.M)\,N \sred \subs{M}{N}\) and its congruence in function position,
\(\Sigma\) contributes the two projection steps and their congruences:
\begin{mathpar}
  (M,N).1 \sred M \and
  (M,N).2 \sred N \and
  \inferdef{M \sred M'}{M.1 \sred M'.1} \and
  \inferdef{M \sred M'}{M.2 \sred M'.2}
\end{mathpar}
% Its reflexive--transitive closure is written \(\red\); the \emph{typed} head
% reduction \(\convred{\Gamma}{M}{N}[A]\) is the restriction of \(\red\) to
% well-typed terms (\(M \red N\) with \(\typing{\Gamma}{M}[A]\)), whose generating
% steps are the computation rules \textsc{Proj1β}, \textsc{Proj2β} read left to
% right; by subject reduction it entails \(\conv{\Gamma}{M}{N}[A]\).

\paragraph{Domain model} The finite elements \(\comp[n]\) gain a code
\(\mathsf{Σ}(b,c)\) for the sum (like \(\cpi\)) together with pairs \((u,v)\),
subject to \((\bot,\bot) = \bot\). Like for \(\cabs\), this equation ensures that
we can interpret η. We set \(\mathsf{Σ}(b,c) \ty_{n+1} \cuniv\), and
\((u,v) \ty_{n+1} \mathsf{Σ}(b,c)\) whenever \(u \ty_n b\) and \(v \ty_n c(u)\).

\paragraph{Logical relation} At the code \(\mathsf{Σ}(b,c)\) the two relations
follow the pattern of \(\cpi\), using the projections in place of application: two
types are related when they both reduce to sums with related domains and
codomains. For the type itself we can literally reuse the same definition at for $\Pi$-types.
\begin{align*}
  \TValEq[n+1]{\Gamma}{T}{T'}{\mathsf{Σ}(b,c)} \quad\defeq\quad
  \begin{cases}
    \convred{\Gamma}{T}{\Sigma_{x : B}C}[\univ] \qquad
    \convred{\Gamma}{T'}{\Sigma_{x : B'}C'}[\univ] \\
    \TValEq[n+1]{\Gamma}{\P_{x:B}C}{\P_{x:B'}C'}{\cpi(b,c)}
  \end{cases}
\end{align*}
Two terms are related at a sum type through their two projections:
\begin{align*}
  \ValEq[n+1]{\Gamma}{M}{M'}{T}{(u,v)}{\mathsf{Σ}(b,c)} \quad\defeq\quad
  \begin{cases}
    \presup{\TVal[n+1]{\Gamma}{T}{\mathsf{Σ}(b,c)}}\\
    \ValEq{\Gamma}{M.1}{M'.1}{B}{u}{b} \\
    \ValEq{\Gamma}{M.2}{M'.2}{\subs{C}{M.1}}{v}{c(u)}
  \end{cases}
\end{align*}
Since \((\bot,\bot) = \bot\), the \(\bot\) clause stays degenerate.

The rest of the adequacy proof poses no particular difficulty.

\subsection{Unit type}
\label{sec:unit}

The nullary analogue of \(\Sigma\) is a unit type \(\unit\), with a single element
\(\star\) and a definitional η-rule identifying every inhabitant with it:
\begin{mathpar}
  \inferdef[Unit]{\ctxty{\Gamma}}{\typing{\Gamma}{\unit}[\univ]} \and
  \inferdef[UnitI]{\ctxty{\Gamma}}{\typing{\Gamma}{\star}[\unit]} \and
  \inferdef[Unitη]{\typing{\Gamma}{M}[\unit]}{\conv{\Gamma}{M}{\star}[\unit]}
\end{mathpar}
The η-rule \textsc{Unitη} collapses \(\unit\) to a single point up to
conversion, and in particular any two of its inhabitants are convertible, so \(\unit\)
is a (strict) proposition.

\paragraph{Domain model} We add a single code \(\cunit \ty_{n+1} \cuniv\) whose
only element is \(\bot\); thus \(\star\) is interpreted by \(\bot\) and \(\unit\) is
the one-point domain. This is the same as adding a code for \(\star\) but then requiring
that is equal to \(\bot\), just as we did to handle η for functions and pairs.

\paragraph{Logical relation} We need only define
\(\TValEq[n+1]{\Gamma}{A}{B}{\cunit}\defeq
  \begin{cases}
    \convred{\Gamma}{A}{\unit}[\univ] \\
    \convred{\Gamma}{B}{\unit}[\univ].
\end{cases}\)

\subsection{General recursion}

Since we are already dealing with non-termination, we can embrace it and add a
fixed point combinator \(\operatorname{Y}\) for general recursion, making the system
explicitly non-normalising (notwithstanding $\univ\ty\univ$ which yields
much more complicated non-normalizing terms). With the addition of natural
numbers in \cref{sec:nat}, we obtain a type theory which is a sort of
dependently typed PCF \cite{Plotkin1977}.
The combinator is typed at any type, and unfolds one step:
\begin{mathpar}
  \inferdef[Y]{\typing{\Gamma}{A}[\univ] \\
  \typing{\Gamma}{F}[A \rightarrow A]}
  {\typing{\Gamma}{\fix F}[A]} \and
  \inferdef[Yβ]{\typing{\Gamma}{A}[\univ] \\
  \typing{\Gamma}{F}[A \rightarrow A]}
  {\conv{\Gamma}{\fix F}{F\,(\fix F)}[A]}
\end{mathpar}

\paragraph{Head reduction} The new step unfolds the fixed point: \(\fix F \sred
F\,(\fix F)\).

\paragraph{Domain model} General recursion adds no new elements: \(\fix F\) is
interpreted as the supremum \(\bigvee_{k} \sem{F}^{k}(\bot)\) of its finite approximations.
Consequently, it contributes no logical relation clause either --~\(\fix F\)
is reached through the supremum of these approximations.

\subsection{Natural numbers}
\label{sec:nat}

To demonstrate our ability to deal with large elimination, we can add a type
\(\mathbb{N}\) of natural numbers, its two constructors and dependent
pattern-matching.%
\footnote{A recursor can be derived by working in a syntax with the \(\fix\)
  from the previous section. Alternatively, we can use the same idea as for
  universes at the beginning of the section, keep the non-terminating \(\fix\)
in the semantics only and have a syntax with only a primitive recursor.}
We
write \(\suc M\) for the successor and, for the (large) eliminator,
\(\case{M}{N}{P}\) where \(N\) is the branch for \(0\) and \(P\) the branch for
the successor (a function of the predecessor):
\begin{mathparpagebreakable}
  \inferdef[Nat]{\ctxty{\Gamma}}{\typing{\Gamma}{\mathbb{N}}[\univ]} \and
  \inferdef[Zero]{\ctxty{\Gamma}}{\typing{\Gamma}{0}[\mathbb{N}]} \and
  \inferdef[Succ]{\typing{\Gamma}{M}[\mathbb{N}]}
  {\typing{\Gamma}{\suc M}[\mathbb{N}]} \and
  \inferdef[Case]{
    \typing{\Gamma, x : \mathbb{N}}{C}[\univ] \\
    \typing{\Gamma}{M}[\mathbb{N}] \\\\
    \typing{\Gamma}{N}[\subs{C}{0}] \\
    \typing{\Gamma}{P}[\P_{n : \mathbb{N}}\subs{C}{\suc n}]
  }{
    \typing{\Gamma}{\case[C]{M}{N}{P}}[\subs{C}{M}]
  } \\
  \inferdef[Case-0]{
    \typing{\Gamma, x : \mathbb{N}}{C}[\univ] \\\\
    \typing{\Gamma}{N}[\subs{C}{0}] \\
    \typing{\Gamma}{P}[\P_{n : \mathbb{N}}\subs{C}{\suc n}]
  }{
    \conv{\Gamma}{\case[C]{0}{N}{P}}{N}[\subs{C}{0}]
  } \and
  \inferdef[Case-S]{
    \typing{\Gamma, x : \mathbb{N}}{C}[\univ] \\
    \typing{\Gamma}{M}[\mathbb{N}] \\\\
    \typing{\Gamma}{N}[\subs{C}{0}] \\
    \typing{\Gamma}{P}[\P_{n : \mathbb{N}}\subs{C}{\suc n}]
  }{
    \conv{\Gamma}{\case[C]{(\suc M)}{N}{P}}{P\,M}[\subs{C}{\suc M}]
  }
\end{mathparpagebreakable}
% The non-dependent case-splitting is the special case where the motive \(C\) does not
% depend on \(x\), together with the evident congruence rules for \(\suc\)
% and \(\operatorname{case}\).

\paragraph{Head reduction} The new steps fire pattern-matching,
with a congruence in the scrutinee:
\begin{mathpar}
  \begin{aligned}
    \case[C]{0}{N}{P} &\sred N \\[-1em]
    \case[C]{(\suc M)}{N}{P} &\sred P\,M
  \end{aligned} \and
  \inferdef{M \sred M'}{\case[C]{M}{N}{P} \sred \case[C]{M'}{N}{P}}
\end{mathpar}
% As before \(\red\) is the reflexive--transitive closure and
% \(\convred{\Gamma}{M}{N}[A]\) its typed restriction, generated by
% \textsc{Case-0} and \textsc{Case-S}; each yields \(\conv{\Gamma}{M}{N}[A]\) by
% subject reduction.

\paragraph{Domain model} We add a code \(\mathsf{N} \ty_{n+1} \cuniv\) and value
elements \(\mathsf{0}\) and \(\mathsf{S}\,u\). This time, \(\mathsf{S}\,\bot\) is
a proper element, distinct from \(\bot\): observing that a given number is
a successor is a valid observation. We then set \(\mathsf{0} \ty_{n+1} \mathsf{N}\) and
\(\mathsf{S}\,u \ty_{n+1} \mathsf{N}\) when \(u \ty_n \mathsf{N}\).

\paragraph{Logical relation} The code \(\mathsf{N}\) is a base type, so the type
clause only records the reduction and the term clauses read off the canonical
form named by the value code:
\begin{align*}
  \TValEq[n+1]{\Gamma}{A}{A'}{\mathsf{N}} &\defeq
  \convred{\Gamma}{A}{\mathbb{N}}[\univ] \quad\wedge\quad
  \convred{\Gamma}{A'}{\mathbb{N}}[\univ] \\
  \ValEq[n+1]{\Gamma}{M}{N}{A}{\mathsf{0}}{\mathsf{N}} &\defeq
  \begin{cases}
    \presup{\TVal[n+1]{\Gamma}{A}{\mathsf{N}}} \\
    \convred{\Gamma}{M}{0}[\mathbb{N}] \qquad
    \convred{\Gamma}{N}{0}[\mathbb{N}]
  \end{cases} \\
  \ValEq[n+1]{\Gamma}{M}{N}{A}{\suc u}{\mathsf{N}} &\defeq
  \begin{cases}
    \presup{\TVal[n+1]{\Gamma}{A}{\mathsf{N}}} \\
    \convred{\Gamma}{M}{\suc M'}[\mathbb{N}] \qquad
    \convred{\Gamma}{N}{\suc N'}[\mathbb{N}] \\
    \ValEq{\Gamma}{M'}{N'}{\mathbb{N}}{u}{\mathsf{N}}
  \end{cases}
\end{align*}
Thus a value of code \(\mathsf{0}\) forces both \(M\) and \(M'\) to reduce to \(0\), and one
of code \(\suc u\) forces both to reduce to successors whose predecessors are
related at \(u\).

\subsection{Proof-relevant identity type}

Continuing with inductive types, we tackle Martin-Löf's proof-relevant
identity type \(\Id_{A}\,M\,N\), its reflexivity
constructor \(\refl_{M}\), and transport \(\tr\).%
\footnote{The same development goes through for the fully dependent eliminator
  \(\J\), whose motive ranges over both endpoints and the proof, and this is what
  the \Agda{} formalisation does.  Since the domain model and the logical relation
  depend only on the code \(\Id(a,u,v)\) and the element \(\crefl{w}\),
  and not on the eliminator, the same model works for both versions;
  only syntactic typing and reduction rules change, as well as the proof of the
  logical relation.
We present transport, as in the \Lean{} development, for clarity and brevity.}
The motive \(C\) is a
family over the carrier alone, \(\typing{\Gamma, x : A}{C}[\univ]\), and
transport carries an inhabitant of \(\subs{C}{M}\) along a proof of
\(\Id_{A}\,M\,N\) to an inhabitant of \(\subs{C}{N}\):
\begin{mathpar}
  \inferdef[Id]{
    \typing{\Gamma}{A}[\univ] \\ \typing{\Gamma}{M}[A] \\
    \typing{\Gamma}{N}[A]
  }{
    \typing{\Gamma}{\Id_{A} M\,N}[\univ]
  } \and
  \inferdef[Ref]{
    \typing{\Gamma}{A}[\univ] \\ \typing{\Gamma}{M}[A]
  }{
    \typing{\Gamma}{\refl_{M}}[\Id_{A} M\,M]
  } \\
  \inferdef[Tr]{
    \typing{\Gamma}{A}[\univ] \\
    \typing{\Gamma}{M}[A] \\ \typing{\Gamma}{N}[A] \\
    \typing{\Gamma, x : A}{C}[\univ] \\
    \typing{\Gamma}{X}[\subs{C}{M}] \\
    \typing{\Gamma}{H}[\Id_{A}\,M\,N]
  }{
    \typing{\Gamma}{\transp{C}{X}{H}[\Id_{A}\,M\,N]}[\subs{C}{N}]
  } \and
  \inferdef[Trβ]{
    \typing{\Gamma}{A}[\univ] \\
    \typing{\Gamma}{M}[A] \\
    \typing{\Gamma, x : A}{C}[\univ] \\
    \typing{\Gamma}{X}[\subs{C}{M}]
  }{
    \conv{\Gamma}{\transp{C}{X}{\mathsf{refl}\,M}[\Id_{A}\,M\,M]}{X}[\subs{C}{M}]
  }
\end{mathpar}
Rule \textsc{Trβ} fires only on the reflexive diagonal --~both
endpoints are \(M\), and the proof is literally \(\mathsf{refl}\,M\)~-- so the
redex \(\transp{C}{X}{\mathsf{refl}\,M}[\Id_{A}\,M\,M]\) and the contractum \(X\)
already share the type \(\subs{C}{M}\), and the rule needs no endpoint- or
motive-equality side
conditions. We leave the carrier and the two endpoints
implicit in the notation \(\transp{C}{X}{H}\); the mechanisation handles them
explicitly.

\paragraph{Head reduction} The identity type adds the transport step on a
reflexive proof, with a congruence in the eliminated proof:
\begin{mathpar}
  \transp{C}{X}{\mathsf{refl}\,P}[\Id_{A}\,M\,N] \sred X \and
  \inferdef{H \sred H'}{\transp{C}{X}{H}[\Id_{A}\,M\,N] \sred
  \transp{C}{X}{H'}[\Id_{A}\,M\,N]}
\end{mathpar}
Unlike \textsc{Trβ}, this untyped step fires on \emph{any} reflexivity proof
\(\mathsf{refl}\,P\), whatever its witness: nothing forces \(M,N,P\) to be equal.
Its typed restriction is generated by \textsc{Trβ}, as before.

\paragraph{Domain model} We add a code \(\mathsf{Id}(a,u,v)\) for the type and
\(\crefl{w}\) for reflexivity, with \(\mathsf{Id}(a,u,v) \ty_{n+1} \cuniv\)
when \(u,v \ty_n a\) and
\(\crefl{w} \ty_{n+1} \Id(a,u,v)\) when additionally \(w \ty_n a\) and
\(w \le u\), \(w \le v\). Thus, a valid witness of reflexivity is a common lower bound
of both sides.
To interpret the eliminator, we define
$u\le\sem{\transp{C}{X}{H}[\Id_{A}\,M\,N]}[\rho]$ whenever either $u=\bot$ or $u\le
u'\le\sem{X}[\rho]$, $\crefl{v}\le\sem{H}[\rho]$,
$v\le\sem{M}[\rho]\cap\sem{N}[\rho]$, $v\ty\sem{A}[\rho]$, and
$u'\ty\sem{C}[\rho, x \mapsto v]$.

\paragraph{Logical relation} At \(\Id(a,u,v)\) the type clause relates the
carrier and both endpoints, and two reflexive proofs at witness \(\crefl{w}\)
relate \(P\) and \(P'\) when both reduce to reflexivity proofs whose witnesses are
related at \(w\) and, by the reflexive equation, equal to both endpoints:
\begin{align*}
  &\TValEq[n+1]{\Gamma}{A}{A'}{\Id(a,u,v)} \quad\defeq\\
  &\hspace{1em}
  \begin{cases}
    \convred{\Gamma}{A}{\Id_{B}\,M\,N}[\univ] \qquad
    \convred{\Gamma}{A'}{\Id_{B'}\,M'\,N'}[\univ] \\
    \presup{\conv{\Gamma}{B}{B'}[\univ]} \qquad
    \presup{\conv{\Gamma}{M}{M'}[B]} \qquad
    \presup{\conv{\Gamma}{N}{N'}[B]} \\
    \TValEq{\Gamma}{B}{B'}{a} \qquad
    \ValEq{\Gamma}{M}{M'}{B}{u}{a} \qquad
    \ValEq{\Gamma}{N}{N'}{B}{v}{a}
  \end{cases}
\end{align*}
\begin{align*}
  &\ValEq[n+1]{\Gamma}{H}{H'}{A}{\crefl{w}}{\Id(a,u,v)} \quad\defeq\\
  &\hspace{1em}
  \begin{cases}
    \convred{\Gamma}{A}{\Id_{B}\,M\,N}[\univ] \qquad
    \convred{\Gamma}{H}{\mathsf{refl}\,P}[\Id_{B}\,M\,N] \qquad
    \convred{\Gamma}{H'}{\mathsf{refl}\,P'}[\Id_{B}\,M\,N] \\
    \presup{w \ty_n a} \qquad w \le u \qquad w \le v \\
    \presup{\conv{\Gamma}{P}{M}[B]} \qquad
    \presup{\conv{\Gamma}{P}{N}[B]} \qquad
    \presup{\conv{\Gamma}{P}{P'}[B]} \\
    \ValEq{\Gamma}{P}{M}{B}{w}{a} \qquad
    \ValEq{\Gamma}{P}{N}{B}{w}{a} \qquad
    \ValEq{\Gamma}{P}{P'}{B}{w}{a}
  \end{cases}
\end{align*}
The logical relation forces the endpoints to be convertible whenever a proof
of \(\Id_{B}\,M\,N\) reduces to a reflexivity proof, which is what is needed to show the \textsc{Tr} case of \cref{thm:adequacy}.

Far from identifying proofs, the relation \emph{records} each proof's witness and
demands that the witnesses agree.  The model accordingly keeps enough elements
apart to see that the identity type is genuinely proof-relevant. We can use this to
prove UIP is not derivable:

\begin{theorem}[Failure of definitional UIP \lean{Consequences}{948}{not_UIP}]
  \label{thm:not-uip}
  The following rule of definitional uniqueness of identity proofs (equivalently, the
  \(\mathsf{K}\) rule) is not derivable in \MLTTo{}:
  \[
    \inferrule{\typing{\Gamma}{H}[\Id_{A}\,M\,M]}
    {\conv{\Gamma}{H}{\mathsf{refl}\,M}[\Id_{A}\,M\,M]}
  \]
\end{theorem}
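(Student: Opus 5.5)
The plan is to refute the rule with one concrete instance, using only the soundness of the semantics (\cref{thm:soundness}) and the \(\bot\) environment, as in the proof of disjointness of type constructors. The point is that the domain model keeps a neutral proof of \(\Id_{A}\,M\,M\) apart from \(\mathsf{refl}\,M\). A variable is interpreted by \(\bot\), whereas \(\mathsf{refl}\,M\) is interpreted by an ideal containing a proper code \(\crefl{w}\) as soon as \(M\) has a non-\(\bot\) well-typed approximation. So I take \(\Gamma \coloneq (h : \Id_{\univ}\,\univ\,\univ)\), \(A \coloneq \univ\), \(M \coloneq \univ\) and \(H \coloneq h\).

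First, the premise holds: \(\typing{\Gamma}{h}[\Id_{\univ}\,\univ\,\univ]\) by the variable rule, since \(\Gamma\) is well formed by \textsc{Univ} and \textsc{Id}. Second, I check that the environment \(\bot\) fits \(\Gamma\). This needs \(\sem{\Id_{\univ}\,\univ\,\univ}[\bot] \ty \cuniv\), which follows from soundness applied to the closed typing of \(\Id_{\univ}\,\univ\,\univ\) (the empty environment fits the empty context). It also needs \(\bot \ty \sem{\Id_{\univ}\,\univ\,\univ}[\bot]\), which is rule \textsc{Bot} of \cref{fig:sem-typing}, lifted to \(\dom\).

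Now suppose the conclusion \(\conv{\Gamma}{h}{\mathsf{refl}\,\univ}[\Id_{\univ}\,\univ\,\univ]\) were derivable. By the second item of \cref{thm:soundness}, \(\sem{h}[\bot] = \sem{\mathsf{refl}\,\univ}[\bot]\). The left side is \(\bot\), the ideal \(\{\bot\}\). For the right side, I unfold the semantics of \(\mathsf{refl}\) given by the domain model for the identity type: it contains every \(\crefl{w}\) with \(w \le \sem{\univ}[\bot]\) and \(w\) typed at the carrier. Taking \(w = \cuniv\), we have \(\cuniv \le \duniv = \sem{\univ}[\bot]\) and \(\cuniv \ty \cuniv\) by \textsc{Univ}. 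Hence \(\crefl{\cuniv} \in \sem{\mathsf{refl}\,\univ}[\bot]\), while \(\crefl{\cuniv} \neq \bot\) because \(\crefl{-}\) is a lifted constructor not identified with \(\bot\) (unlike \(\cabs\) and pairs). This is a contradiction, so the rule is not derivable. The same instance also shows that the equivalent \(\mathsf{K}\) formulation fails, since \(\mathsf{K}\) would derive this conversion.

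The only step needing care is pinning down the semantics of \(\mathsf{refl}\). The paper gives the code \(\crefl{w}\) and its typing, but states the interpretation of \(\mathsf{refl}\,M\) only implicitly. I will take it to be the ideal generated by \(\crefl{w}\) for \(w \le \sem{M}[\rho]\) with \(w \ty \sem{A}[\rho]\) (together with \(\bot\)), which is the choice that makes \textsc{Ref} sound. I will then check that \(\crefl{\cuniv}\) genuinely lies in it and differs from \(\bot\) in \(\comp\). Everything else is a direct appeal to \cref{thm:soundness}.
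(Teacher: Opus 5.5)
Your proof is correct and follows the paper's argument. It uses the same context \(h : \Id_{\univ}\,\univ\,\univ\) and refutes the rule contrapositively through soundness, by showing that the denotations of \(h\) and \(\mathsf{refl}\,\univ\) differ. The only divergence is the environment: the paper uses \(\rho = (h \mapsto \crefl{\bot})\) together with \(\crefl{\cuniv} \not\le \crefl{\bot}\), which separates two reflexivity-shaped inhabitants and so also illustrates proof relevance. Your \(\bot\) environment fits trivially, as in the disjointness proof, and needs only the weaker fact \(\crefl{\cuniv} \neq \bot\), which is equally sufficient to refute the rule.
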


\begin{proof}
  Take \(\Gamma \coloneq (h \ty \Id_{\univ}\,\univ\,\univ)\).  Were the rule admissible,
  we would have
  \(\conv{\Gamma}{h}{\mathsf{refl}\,\univ}[\Id_{\univ}\,\univ\,\univ]\).
  Consider the environment \(\rho \coloneq (h \mapsto \crefl{\bot})\).  It fits
  \(\Gamma\), since \(\crefl{\bot} \ty \Id(\cuniv,\cuniv,\cuniv)\): indeed
  \(\bot \ty \cuniv\), and \(\bot \le \cuniv\).  By \cref{thm:soundness} we would
  then have \(\sem{h}[\rho] = \sem{\mathsf{refl}\,\univ}[\rho]\).  But \(\sem{h}[\rho] =
  \{u\mid u\le \crefl{\bot}\}\), whereas
  \(\sem{\mathsf{refl}\,\univ}[\rho]=\{u\mid u\le \crefl{\cuniv}\}\), and
  \(\crefl{\cuniv} \not\le \crefl{\bot}\), a contradiction.
\end{proof}

The argument has the same shape as the one for disjointness of type constructors:
exhibit a fitting environment that separates two terms, then read soundness
contrapositively.  It is also the most that this model can deliver.  It can never
refute the \emph{inhabitation} of a type, because \(\bot\) inhabits every semantic
type by rule \textsc{Bot} of \cref{fig:sem-typing}; and indeed \emph{propositional}
UIP is not refutable at all here, since under \(\univ \ty \univ\) every type is
inhabited.

There is nothing special about natural numbers or Martin-L\"of identity type,
and our technique should extend to deal with general indexed inductive types
\cite{CoquandPaulin1990}.

\subsection{A universe of strict propositions}

Our last extension adds a universe \(\prop\) of
\emph{strict} (definitionally proof-irrelevant) propositions, lying below
\(\univ\) and closed under dependent products.  A proposition is a type,
and any two inhabitants of a proposition are definitionally equal:
\begin{mathpar}
  \inferdef[Prop]{\ctxty{\Gamma}}{\typing{\Gamma}{\prop}[\univ]} \and
  \inferdef[Prop-U]{\typing{\Gamma}{A}[\prop]}{\typing{\Gamma}{A}[\univ]} \and
  \inferdef[Fun-Prop]{\typing{\Gamma}{A}[\univ] \\
  \typing{\Gamma, x : A}{B}[\prop]}
  {\typing{\Gamma}{\P_{x : A}B}[\prop]} \\
  \inferdef[Irr]{
    \typing{\Gamma}{A}[\prop] \\
    \typing{\Gamma}{M}[A] \\
    \typing{\Gamma}{N}[A]
  }{
    \conv{\Gamma}{M}{N}[A]
  } \and
  \inferdef[Prop-Uβ]{\conv{\Gamma}{M}{N}[\prop]}{\conv{\Gamma}{M}{N}[\univ]}
\end{mathpar}
Rule \textsc{Prop-U} makes \(\prop\) a sub-universe of \(\univ\), and
\textsc{Fun-Prop} closes \(\prop\) under \(\P\) with codomain in
\(\prop\).  Rule \textsc{Irr} is definitional proof irrelevance: it collapses
every strict proposition to at most one element up to conversion, and
\textsc{Prop-Uβ} propagates the resulting equalities from \(\prop\) to
\(\univ\). There is no need to amend
reduction: proof irrelevance is only an extensionality rule.

\paragraph{Domain model} The only new compact element is a code
\(\mathsf{Prop} \ty_{n+1} \cuniv\), lying below \(\cuniv\). Proof irrelevance
adds no elements: instead, it imposes a constraint on the model, wherein
we must be able to prove that if \(u \ty_n a \ty_n \mathsf{Prop}\) then \(u=\bot\):
every proposition, when interpreted as a type, is the singleton \(\{\bot\}\).
For Π-types, this holds because any inhabitant of the Π-type must
be of the form $\cabs(f)$, but for each $x$, $f(x)\ty_{n-1}b(x)\ty_{n-1}\prop$
implies $f(x)=\bot$ recursively, so $\cabs(f)=\cabs(\bot)=\bot$. We can
introduce other type constructors, such as $\top\ty\prop$ with rules like
\cref{sec:unit}, provided we maintain this property that types in $\prop$ have no
non-trivial elements.

\section{Comments About the Mechanised Proofs}
\label{sec:formalisation}

We have three mechanised proofs of the results of the previous section, completed in
\Agda, \Lean, and \Rocq.%
\footnote{Note to reviewers: the formalisations are available as anonymous supplementary
  material for this submission.
  It includes a comparison document that lists the specific definitions in each
system including pointers to where they may be found.}
The former provided the basis for initial investigations, while the
latter two are candidates for future inclusion in the \leanf~\cite{Carneiro2024} and
\MetaRocq~\cite{MetaCoq2025} projects. Together, the \Agda and \Lean versions
cover all the features described in \cref{sec:extensions}.
For convenience, the \Rocq mechanisation uses
the axioms of dependent functional extensionality, propositional extensionality,
and proof irrelevance, which are also assumed by \Lean's standard library.

\paragraph{Use of AI}
All proofs were constructed with the assistance of \textsf{Claude AI}
(\textsf{Opus 4.7} for the \Lean mechanisation, and \textsf{Opus 4.8} for the other
two), which assisted with the translation of proof structure between systems.
Specifically, the \Agda formalisation was entirely written by \textsf{Claude},
with a lot of human guidance. The other two formalisations were human-written,
with \textsf{Claude} assistance, particularly to translate existing proofs
from other systems. Nonetheless, as soon as the proofs became more subtle
and less routine, the assistance offered by \textsf{Claude} became more erratic.

\paragraph{Syntactic metatheory}
Each formalisation declares the typing and conversion rules of \MLTTo{}
(\cref{fig:syntax}) together with a deterministic head-reduction relation.  The
conversion rules are not stated in their most economical form: they carry as
extra premises the presuppositions of their conclusion.  The congruence rules for
\(\P\) and \(\l\), for instance, require the codomain and the body equations to
hold under \emph{both} domains \(A\) and \(A'\), and the β- and
η-rules require both sides of the equation to be well-typed at the common
type.  These premises are redundant: the resulting judgement is proved to derive
exactly the same equations as the economical one.%
\footnote{\texttt{IsDefEq.iff} in the \Lean{} development.}
They are there for proof ergonomics: every argument by induction on a
conversion derivation (and in particular \cref{thm:adequacy}) then finds, in each
case, the induction hypotheses it needs
already at hand, instead of having to appeal to a separately established
presupposition lemma.

The \Rocq and \Agda versions use an intrinsically scoped representation of
terms, while \Lean enforces scoping only with the typing judgement. The \Lean version
furthermore defines the typing relation as the diagonal of the conversion
relation, while the \Agda and \Rocq ones maintain separate, mutually defined
judgements.

\textsf{AutoSubst}~\cite{Stark2019} provided support for substitution lemmas in the
\Rocq version.

\paragraph{Compact elements}
In \Agda and \Rocq, we define \(\comp\) directly as a nested inductive datatype,
using a list of pairs to represent step functions, and carve out
\(\comp[n]\). In \Lean, the definition of \(\comp[n]\)
instead goes by induction on \(n\), and \(\comp\) is built from \(\Sigma_{n :
\nat} \comp[n]\).
Each version also defines
a predicate to identify well-formed terms: in functions, compatible inputs
must produce compatible outputs and $\lambda$ terms must not be $\bot$.  The
formalisations differ in the treatment of these predicates: in \Agda and \Rocq,
they are passed explicitly, whereas in \Lean they are packaged
as a subset type. All formalisations work with an unquotiented representation,
instead partially enforcing a canonical representation (\ie the body of an
abstraction cannot represent \(\bot\)), and implicitly handling \(\comp\) as a setoid.%
\footnote{Whose equivalence relation is the one associated with the pre-order
on compact elements, making it easier to handle.}

The proofs of properties about the ordering and typing relations proceed by
strong induction on a measure of the compact elements.
The three formalisations stratify differently: \Rocq keeps a
single type of compact elements, with the rank as an external measure; \Agda
indexes membership by a stage, recovered at the canonical stage determined by the
ranks of the subject and its type; \Lean makes the type of compact elements itself
level-indexed. Semantic typing (\cref{fig:sem-typing}) is correspondingly an
inductive relation in \Rocq, with the ranks implicit, but a recursive definition
in \Agda (on the stage) and in \Lean, structural on the shape and Boolean-valued.
All three state the \textsc{Lam} and \textsc{Pi} rules in the intensional form of
\cref{thm:fun-typing}, quantifying over the entries of the function's
representation (it is frequently useful to know that the condition is
a finite conjunction) and recover the extensional form of \cref{fig:sem-typing}
as a derived equivalence.%
\footnote{For instance \texttt{WShape.HasDom.iff} and
\texttt{WShape.HasTypeLam.iff} in the \Lean{} development.}
% All three versions satisfy the same closure properties.

\paragraph{Model and logical relation}
\Agda and \Rocq define the domain model as a predicate of compact elements
(\cref{fig:sem-compacts}) by recursion on syntactic terms, while in \Lean, this
definition is an (indexed) inductive relation. All three formalisations prove
the soundness theorem by induction on the typing and conversion relations
(\cref{thm:soundness}).

The three formalisations define the
logical relations by recursion on the rank/stage of the compact elements and
their semantic types. As the definitions presuppose the typing relation for
these elements, the \Rocq version takes this derivation as an additional
argument.  As described in \cref{sec:lr-properties}, the formalisations must
first prove properties about the logical relation by recursion on rank/stage,
in several mutual groups. The fundamental lemma itself is proven by induction
on the syntactic typing/conversion judgement.

\section{Conclusion}
\label{sec:conclusion}

We have proposed a new proof technique for the meta-theory of dependent type
systems. We use ideas from domain theory, in order to establish definitional
inversion principles, important meta-theoretic properties with far-reaching
consequences. Moreover, we demonstrate that this is a robust approach, in
that it adapts with only minor variations to many extensions, including
some, such as η for unit or a fixed point combinator, which are difficult to deal
with using traditional methods. We believe this new approach opens up many avenues.

The most evident application is the specification of real-life proof assistants in
themselves,
\eg the \leanf and \MetaRocq projects. While \MetaRocq already comes with a
substantial body of meta-theory, its reliance on confluence and untyped conversion
has until now been a big blocker to reach feature parity with \Rocq, or at least
the ability to check large libraries with its verified kernel. \leanf, on the
other hand, has focused on feature parity from the start, meaning that until
now we lacked good meta-theoretic backing to verify the algorithm. Hopefully
our proof technique will help unblock both projects and open up a new era of
fully verified kernels for realistic proof assistants. However, although our
example extensions cover many of the theoretical difficulties one could expect to
face when studying the meta-theory of \Lean or \Rocq, we still are far from
having covered all the nitty-gritty details of these systems, around \eg
(co-)inductive types or proof irrelevance with axiom $\mathsf{K}$, and scaling our
approach remains
a significant challenge.

On the dependently typed programming front, the possibility of showing progress
and preservation of non-terminating type theories seems enticing. On the one hand,
having an alternative to rewriting theory might make it less frightening to extend
the definitional equalities of these languages with more interesting equations,
starting with η. Having a way to argue about the well-behaviour of non-normalising
dependent type systems might also bring more attention to features which have
been regarded with suspicion because they break termination, such as
gradual dependent types \cite{Eremondi2019,LennonBertrand2022,Eremondi2022}, where
there is a sharp trade-off between normalisation and the gradual guarantees.
It will be interesting to see what other fresh takes on dependently typed
programming will be enabled by these new meta-theoretic possibilities.

On the more theoretical front, much still remains to explore to better understand
our model and adequacy proof. One obvious topic of enquiry is the relationship
to step-indexed logical relations. With domain theory, this is the other natural
candidate to handle non-termination and solve fixed point equations. Can we
use it to obtain similar results as ours? Are there different trade-offs, and
what are they? Step-indexing comes with a very well-developed body of
specialised logics, in particular those developed around the Iris \cite{Jung2018}
project. Would this be of any help in proofs like the ones we do? Could we
use synthetic domain theory \cite{Hyland1991} to similarly reason at a
higher abstraction level than with our elementary proofs? Or could we
re-cast our technique as an instance of a gluing model \cite{Sterling2021a,Bocquet2023},
which would shed more light on its relation with normalisation proofs?

In \cref{sec:logical-relation} we briefly touched on the question of
definitional inversion for neutrals, which we currently do not cover.
\textcite{Coquand2018a} show such properties by using a more involved
kind of model, where neutrals are not degenerate. Another avenue we find
promising is to revisit confluence proofs, but with added type information.
This is usually not possible, since one needs definitional inversion for types
for typed confluence to work well, but this is precisely what confluence is
supposed to establish. However, with these established by other means, it
seems possible to rely on confluence to show other properties, such
as the missing inversion for neutrals. See \textcite{Carneiro2019} for
work going in this direction.

Last but not least, the model comes equipped with intersection and singleton types:
every domain element \(u\) defines a type \(\{u\} \coloneq \{v \mid v \le u\}\), and
we can form the intersection of two types \(p\) and \(q\) (qua finitary projectors)
by considering the finitary projector obtained from those compact elements
that are fixed points of both \(p\) and \(q\). These types have a strong connection
with linear logic, and it would be interesting to see whether this sheds any
light on the problem of designing good linear dependent type systems.

\printbibliography

\end{document}